\title{The Book Thickness of 1-Planar Graphs is Constant}
\author{Michael A. Bekos%
\thanks{Electronic address: \texttt{bekos@informatik.uni-tuebingen.de}}}
\author{Till Bruckdorfer%
\thanks{Electronic address: \texttt{bruckdor@informatik.uni-tuebingen.de}}}
\author{Michael~Kaufmann%
\thanks{Electronic address: \texttt{mk@informatik.uni-tuebingen.de}}}
\affil{Wilhelm-Schickhard-Institut f\"ur Informatik, Universit\"at T\"ubingen, Germany}
\author{Chrysanthi~N.~Raftopoulou%
\thanks{Electronic address: \texttt{crisraft@mail.ntua.gr}}}
\affil{School of Applied Mathematical \& Physical Sciences, NTUA, Greece.}
\date{}
\def\@maketitle{%
  \newpage
  \null
  \vskip 2em%
  \begin{center}%
  \let \footnote \thanks
    {\Large\bfseries \@title \par}%
    \vskip 1.5em%
    {\normalsize
      \lineskip .5em%
      \begin{tabular}[t]{c}%
        \@author
      \end{tabular}\par}%
    \vskip 1em%
    {\normalsize \@date}%
  \end{center}%
  \par
  \vskip 1.5em}
\newtheorem{lemma}{Lemma}
\newtheorem{theorem}{Theorem}
\begin{document}

\maketitle

%=================================================================
\begin{abstract}
In a book embedding, the vertices of a graph are placed on the
``spine'' of a book and the edges are assigned to ``pages'', so that
edges on the same page do not cross. In this paper, we prove that
every $1$-planar graph (that is, a graph that can be drawn on the
plane such that no edge is crossed more than once) admits an
embedding in a book with constant number of pages. To the best of
our knowledge, the best non-trivial previous upper-bound is
$O(\sqrt{n})$, where $n$ is the number of vertices of the graph.
\end{abstract}
%=================================================================

%=================================================================
\section{Introduction}
\label{sec:intro}
%=================================================================

A \emph{book embedding} is a special type of a graph embedding, in
which
\begin{inparaenum}[(i)] \item the vertices of the graph are
restricted to a line along the \emph{spine} of a book, and, \item
the edges on the \emph{pages} of the book in such a way that edges
residing on the same page do not cross.
\end{inparaenum}
The minimum number of pages required to construct such an embedding
is known as \emph{book thickness} or \emph{page number} of a graph.
An obvious upper bound on the page number of an $n$-vertex graph is
$\lceil n/2 \rceil$, which is tight for complete graphs~\cite{BK79}.
Book embeddings have a long history of research dating back to early
seventies~\cite{Tay73}. Therefore, there is a rich body
of literature (see, e.g.,~\cite{Bil92} and \cite{Ove07}).

For the class of planar graphs, a central result is due to
Yannakakis~\cite{Yan89}, who in the late eighties proved that planar
graphs have book thickness at most four. It remains, however,
unanswered whether the known bound of four is tight.
Heath~\cite{Hea84}, for example, proves that all planar $3$-trees
are $3$-page book embeddable. For more restricted subclasses of planar
graphs, Bernhart and Kainen~\cite{BK79} show that the graphs with
book thickness one are the outerplanar graphs, while the class of
two-page embeddable graphs coincides with the class of
subhamiltonian graphs (recall that \emph{subhamiltonian} is a graph
that is a subgraph of a planar Hamiltonian graph). Testing whether a
graph is subhamiltonian is NP-complete~\cite{Whi31}. However,
several graph classes are known to be subhamiltonian (and therefore
two-page book embeddable), e.g., $4$-connected planar
graphs~\cite{nc-c10hc-PG08}, planar graphs without separating
triangles~\cite{ko-etw-AML07}, Halin graphs~\cite{cnp-hgtsp-MP83},
planar graphs with maximum degree $3$ or
$4$~\cite{h-afegib-UNC85,bgr-tpbe4-STACS14}.

In this paper, we go a step beyond planar graphs. In particular, we
consider $1$-planar graphs and prove that their book thickness is
constant. Recall that a graph is \emph{$1$-planar}, if it admits a
drawing in which each edge is crossed at most once. To the best of
our knowledge, the only (non-trivial) upper bound on the book
thickness of $1$-planar graphs on $n$ vertices is $O(\sqrt{n})$.
This is due to two known results: First, graphs with $m$ edges have
book thickness $O(\sqrt{m})$~\cite{Malitz1994}. Second, $1$-planar
graphs with $n$ vertices has at most $4n-8$ edges, which is a tight
bound~\cite{BSW84,FM07,PT97}. Minor-closed graphs (e.g., graphs of
constant treewidth~\cite{DW07} or genus~\cite{Mal94}) have constant
book thickness~\cite{NM12}. Unfortunately, however, $1$-planar
graphs are not closed under minors~\cite{NM12}. 

In the remainder of this paper, we will assume that a simple
$1$-planar drawing $\Gamma(G)$ of the input $1$-planar graph $G$ is
also specified as part of the input of the problem. This is due
to a result of Grigoriev and Bodlaender~\cite{GB07}, and,
independently of Kohrzik and Mohar~\cite{KM13}, who proved that the
problem of determining whether a graph is $1$-planar is NP-hard
(note that the problem remains NP-hard, even if the deletion of a
single edge makes the graph planar~\cite{CM12}). In addition, we
assume biconectivity, as it is known that the page number of a graph
equals to the page number of its biconnected components~\cite{BK79}.

%=================================================================
\section{Definitions and Yannakakis Algorithm}
\label{sec:preliminaries}
%=================================================================

Let $G$ be a simple topological graph, that is, undirected and drawn
in the plane. We denote by $\Gamma(G)$ the drawing of $G$.
Unless otherwise specified, we consider \emph{simple} drawings, that
is, no edge crosses itself, no two edges meet tangentially and no
two edges cross more than once. A drawing uniquely defines the
cyclic order of the edges incident to each vertex and, therefore,
specifies a \emph{combinatorial embedding}. A $1$-planar topological
graph is called \emph{planar-maximal} or simply \emph{maximal}, if
the addition of a non-crossed edge is not possible. The following
lemma, proven in many earlier papers, shows that two crossing edges
induce a $K_4$, as the missing edges can be added without
introducing new crossings; see, e.g.,~\cite{ABK13}.

\begin{lemma}
In a maximal $1$-planar topological graph, the endpoints of two
crossing edges are pairwise adjacent. % In particular, they induce a $K_4$ whose interior is free of vertices or edges.
\label{lem:lem_crossing_quadrangle}
\end{lemma}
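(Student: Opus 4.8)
The plan is to prove this by a direct geometric/combinatorial argument on the drawing. Let me think about what the lemma claims and how to establish it.

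The lemma says: In a maximal 1-planar topological graph, if edges $e = (a,c)$ and $f = (b,d)$ cross, then all four endpoints $a, b, c, d$ are pairwise adjacent. So we need to show the edges $(a,b)$, $(b,c)$, $(c,d)$, $(d,a)$ all exist (the "sides" of the quadrilateral), since $(a,c)$ and $(b,d)$ are the diagonals which already exist as the crossing edges.

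The key idea: Consider two crossing edges $e = (a,c)$ and $f = (b,d)$ crossing at point $X$. In a 1-planar drawing, each edge is crossed at most once. The crossing point $X$ together with the four edge-segments divides a neighborhood into four "quadrants." The four endpoints appear in some cyclic order around the crossing, say $a, b, c, d$. Consider one of the "sides," say the pair $(a,b)$ that are consecutive in this cyclic order. We want to show the edge $(a,b)$ exists in the graph.

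The argument is by contradiction / by maximality. Suppose one of the four side-edges, say $(a,b)$, is missing. The claim is that we can draw $(a,b)$ as a curve that stays close to the two segments $aX$ and $Xb$ (going "around" the crossing inside the quadrant between them) without crossing any edge — contradicting planar-maximality, since we'd be adding a non-crossed edge.

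So the structure of the proof:

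First, I would set up notation: let $e=(a,c)$ and $f=(b,d)$ be the two crossing edges, crossing at point $X$, and let the cyclic (rotational) order of the half-edges around $X$ be $Xa, Xb, Xc, Xd$. The segments $Xa, Xb, Xc, Xd$ partition a small disk $D$ around $X$ into four quadrants; label the quadrant between $Xa$ and $Xb$ as $Q_{ab}$, and similarly for the others.

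Second, I would argue that for each consecutive pair, say $a$ and $b$, we can route a candidate edge close to the union of the two segments $\overline{aX} \cup \overline{Xb}$, hugging them from inside the quadrant $Q_{ab}$. The point is that because $e$ and $f$ are each crossed \emph{only} at $X$ (1-planarity), no other edge of the graph crosses the \emph{open} segments $\overline{aX}\setminus\{X\}$ or $\overline{Xb}\setminus\{X\}$. Hence a curve running parallel to and arbitrarily close to these two segments, staying inside the quadrant $Q_{ab}$ (and thus not touching the crossing $X$ itself), can be drawn without crossing $e$, $f$, or any other edge — provided the quadrant is "empty" enough near the segments.

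Third — and this is where the main obstacle lies — I need to handle the possibility that some vertex or edge lies inside the quadrant $Q_{ab}$ and blocks this routing, or that the two endpoints $a,b$ coincide with endpoints on the wrong side. The cleanest way is to observe that the candidate curve can be drawn along the boundary: one follows $\overline{aX}$ just inside $Q_{ab}$, and near $X$ cuts across the corner of the quadrant to $\overline{Xb}$, then follows $\overline{Xb}$ to $b$. The only edges that could obstruct are those incident to $a$ or to $b$, or edges passing through $Q_{ab}$; but since we hug the segments at distance $\varepsilon \to 0$, any obstruction would force an edge to cross $e$ or $f$ a second time (violating 1-planarity) or would itself be crossable-free, allowing the curve to nudge around it. I would make this precise by arguing that the region swept out can be taken thin enough that the new edge $(a,b)$ is crossing-free, and if $(a,b)$ already exists we are done, otherwise its addition contradicts maximality.

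Finally, applying this to all four consecutive pairs $(a,b), (b,c), (c,d), (d,a)$ yields that all four side-edges are present, and together with the diagonals $(a,c)$ and $(b,d)$ this shows the four endpoints are pairwise adjacent, completing the $K_4$. I expect the delicate step to be the rigorous verification that the hugging curve avoids all crossings; the 1-planarity condition (each of $e,f$ crossed only once, at $X$) is exactly the leverage that makes this work, and I would lean on it to rule out any interfering edge near the two segments.
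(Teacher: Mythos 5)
Your overall strategy is the right one, and it is exactly the one the paper gestures at (the paper itself gives no proof, only the remark that ``the missing edges can be added without introducing new crossings,'' citing earlier work). However, there is a genuine gap in your key step. You claim the routing region can be ``taken thin enough'' that the curve from $a$ to $b$ is crossing-free, dismissing obstructions by a dichotomy: either the obstruction would cross $e$ or $f$ a second time, or it can be nudged around. The first branch only disposes of vertices and of edges \emph{not} incident to $a$ or $b$: those are compact sets disjoint from the closed segments $\overline{aX}\cup\overline{Xb}$, so some fixed $\varepsilon$ avoids them. The second branch is the actual difficulty, and thinness does not help there: an edge $g=(a,w)$ may leave $a$ into the quadrant $Q_{ab}$ and run parallel to $\overline{aX}$ at distance far smaller than any prescribed $\varepsilon$, almost to $X$, and then turn back to its endpoint $w$ --- a ``tongue'' that intersects \emph{every} corridor, however thin, yet crosses neither $e$ nor $f$, so $1$-planarity yields no contradiction. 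Routing around the tongue means routing around $w$, where you may meet further edges incident to $w$, and so on; proving that this nudging process terminates in a crossing-free arc is precisely what is missing from your argument.

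The clean repair is combinatorial rather than metric. Planarize the drawing by replacing the crossing point $X$ with a dummy vertex. Because $e$ and $f$ are crossed only at $X$, the segments $Xa$ and $Xb$ are edges of a plane graph with no vertex or crossing in their interiors. Let $F$ be the face lying in the angular sector at $X$ between these two segments. Since a plane edge has a single face on each of its sides along its entire length, $F$ is incident to the whole of $Xa$ and the whole of $Xb$; hence both $a$ and $b$ lie on the boundary of the one face $F$. An arc from $a$ to $b$ drawn through the open region $F$ therefore crosses no edge of the drawing --- tongues such as $g$ above simply form part of $\partial F$, and the arc routes around them automatically. If $(a,b)$ is absent, adding this arc contradicts planar-maximality; applying the same argument to all four consecutive pairs around $X$ yields the $K_4$ and the lemma. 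With this face-based formulation substituted for the $\varepsilon$-corridor claim, your proof is complete and coincides with the standard argument in the cited literature.
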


The base of our approach is the simple version of Yannakakis
algorithm, which embeds any (internally-triangulated) plane graph in
a book of five pages~\cite{Yan89}; not four. In the following, we
outline this algorithm. However, we assume basic familiarity. The
algorithm is based on a ``peeling'' into \emph{levels} approach. In
particular,
\begin{inparaenum}[(i)]
\item vertices on the outerface are at level zero;
\item vertices that are on the outerface of the graph induced by
deleting all vertices of levels $\leq i-1$ are at level $i$;
\item edges between vertices of the same (different, resp.)
level are called \emph{level} (\emph{binding}, resp.) edges; see
Fig.~\ref{fig:yannakakis}.
\end{inparaenum}
In a high-level description, the algorithm first embeds level zero
followed by level one and the binding edges between levels zero and
one. The remaining graph, that is in the interior of all level-one
cycles, is embedded recursively.

% =================================================================
\subsection{The two-level case}
% =================================================================
To achieve a total of five pages, first it is proven that a graph
$G=(V,E)$, consisting only of two levels, say $L_0$ and $L_1$, is
three page embeddable (it is also assumed that $L_0$ has no chords).
The vertices, say $u_1,u_2,\ldots,u_k$, of $L_0$ are called
\emph{outer} and appear in this order along the clockwise traversal
of the outerface of $G$. The remaining vertices are called
\emph{inner} (and obviously belong to $L_1$).  The graph induced by
all outer vertices is biconnected. The biconnected components (or
\emph{blocks}), say $B_1,B_2,\ldots,B_m$, of the graph induced by
the inner vertices form a tree (in the absence of chords in $L_0$).
It is assumed that the block-tree is rooted at the block, w.l.o.g.
at block $B_1$, that contains the so-called \emph{first inner
vertex}, which is uniquely defined as the third vertex of the
bounded face containing the outer vertices $u_1$ and $u_k$. Given a
block $B_i$, an outer vertex is said to be \emph{adjacent} to $B_i$
if it is adjacent to a vertex of it; the set of outer vertices
adjacent to $B_i$ is denoted by $N(B_i)$ , $i=1,2,\ldots,m$.
Furthermore, a vertex $w$ is said to \emph{see} an edge $(x,y)$, if
$w$ is adjacent to $x$ and $y$ and the triangle $x,y,w$ is a face.
An outer vertex sees a block if it sees an edge of it.

\begin{figure}[t!]
    \centering
    \begin{minipage}[b]{.40\textwidth}
        \centering
        \subfloat[\label{fig:yannakakis-input}{An internally-triangulated graph.}]
        {\includegraphics[width=\textwidth,page=1]{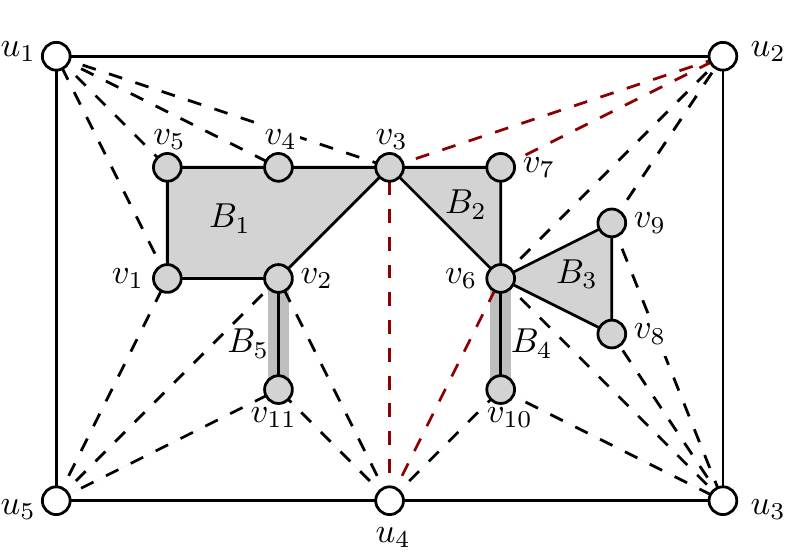}}
    \end{minipage}
    \hfill
    \begin{minipage}[b]{.58\textwidth}
        \centering
        \subfloat[\label{fig:yannakakis-output}{A book embedding in three pages; taken from~\cite{Yan89}.}] 
        {\includegraphics[width=\textwidth,page=2]{yannakakis}}
    \end{minipage}
    \caption{%
    (a)~Outer (inner) vertices are colored white (gray).
    Level (binding) edges are solid (dashed).
    Blocks are highlighted in gray.
    The first inner vertex is $v_1$.
    So, the root of the block tree is $B_1$.
    $N(B_3)=\{u_2,u_3\}$.
    Vertex $u_2$ sees $(v_3,v_7)$ and so sees $B_2$.
    The leaders of $B_1$, $B_2$, $B_3$, $B_4$ and $B_5$ are
    $v_1$, $v_3$, $v_6$, $v_6$ and $v_2$, resp.
    The dominators of $B_1$, $B_2$, $B_3$, $B_4$ and $B_5$ are 
    $u_1$, $u_2$, $u_2$, $u_3$ and $u_4$, resp.; 
    The red edges indicate that $u_f(B_2)=u_2$ and $u_l(B_2)=u_4$.
    Hence, $P[u_f(B_2) \rightarrow u_l(B_2)] = u_2 \rightarrow u_3 
    \rightarrow u_4$.
    (b)~Linear order and assignment of edges to pages.}
    \label{fig:yannakakis}
\end{figure}

The \emph{leader} of a block $B_i$ is the first vertex of block
$B_i$ that is encountered in any path from the first inner vertex to
block $B_i$ and is denoted by $\ell(B_i)$. An inner vertex that
belongs to only one block is \emph{assigned} to that block. One that
belongs to more than one blocks is assigned to the ``highest block''
in the block-tree that contains it. Given an inner vertex $v \in
L_1$, we denote by $B(v)$ the block that $v$ is assigned to. The
\emph{dominator} of a block $B_i$ is the first outer vertex that is
adjacent to a vertex of block $B_i$ and is denoted by $dom(B_i)$,
$i=1,2,\ldots,m$.

Let $B$ be a block of level $L_1$ and assume that
$v_0,v_1,\dots,v_t$ are the vertices of $B$ as they appear in a
counterclockwise traversal of the boundary of $B$ starting
$v_0=\ell(B)$. Denote by $u_f(B)$ and $u_l(B)$ the smallest- and
largest-indexed vertices of level $L_0$ that see edges $(v_0,v_k)$
and $(v_0,v_1)$, respectively.
Equivalently, $u_f(B)$ and $u_l(B)$ are defined as the smallest- and
largest-indexed vertices of $N(B)$. Note that $u_f(B)=dom(B)$.
The path on level $L_0$ from $u_f(B)$ to $u_l(B)$ in clockwise
direction along $L_0$ is denoted by $P[u_f(B) \rightarrow u_l(B)]$.

% =================================================================
\subsection{Linear order}
% =================================================================
The \emph{linear order} of the vertices along the spine is computed
as follows. First, the outer vertices are embedded in the order
$u_1,u_2,\ldots,u_k$. For $j=1,2,\ldots,k$, the blocks dominated by
the outer vertex $u_j$ are embedded right next to $u_j$ one after
the other in the top to down order of the block-tree. The vertices
that belong to block $B_i$ are ordered along the spine in the order
that appear in the counterclockwise traversal of the boundary of
$B_i$ starting from $\ell(B_i)$, $i=1,2,\ldots,m$ (which is already
placed).

% =================================================================
\subsection{Edge-to-page assignment}
% =================================================================
The edges are assigned to pages as follows. All level edges of $L_0$
are assigned to the first page. Level-one edges are assigned either to
the second or to the third page, based on whether they belong to a
block that is in an odd or even distance from the root of the block
tree, respectively. Binding edges are further classified as
\emph{forward} or \emph{back}. A binding edge is forward if the
inner vertex precedes the outer vertex; otherwise it is back (recall
that a binding edge connects an outer and an inner vertex). All back
edges are assigned to the first page. A forward edge incident to a
block $B_i$ is assigned to the second page, if $B_i$ is on the third
page; otherwise to the third page, $i=1,2,\ldots,m$.

% =================================================================
\subsection{The multi-level case}
% =================================================================
Possible chords in level $L_0$ are assigned to the first page. Note,
however, that in the presence of such chords, the blocks of level
$L_1$ form a forest in general (i.e., not a single tree). Therefore,
each block-tree of the underlying forest must be embedded according
to the rules described above. Graphs with more than two layers are
embedded by ``recycling'' the remaining available pages. More
precisely, consider a block $B$ of level $i-1$ and let $B'$ be a
block of level $i$ that is in the interior of $B$ in the peeling
order. Let $\{p_1,\ldots,p_5\}$ be a permutation of $\{1,\ldots,5\}$
and assume w.l.o.g. that the boundary of block $B$ is assigned to
page $p_1$, while the boundary of all blocks in its interior
(including $B'$) are assigned to pages $p_2$ and $p_3$. Then, the
boundary of all blocks of level $i+1$ that are in the interior of
$B'$ in the peeling order will be assigned to pages $p_4$ and $p_5$.
The correctness of this strategy follows from the fact that all
blocks of level $i+1$ that are in the interior of a certain block of
level $i$ are always between two consecutive vertices of level
$i-1$. This directly implies that blocks that are by at least two
levels apart in the peeling order are in a sense independent, which
allows pages that have already been used by some previous levels to
be reused by blocks of next levels provided that they are not
consecutive. 
% =================================================================
%\subsubsection{Properties:}
% =================================================================
In the following we present properties that we use in the remainder
of the paper.

\begin{lemma}[Yannakakis~\cite{Yan89}]
Let $G$ be a graph consisting of two levels $L_0$ and $L_1$. Let $B$
be a block of level $L_1$ and let $v_0,\dots,v_t$ be the vertices of
$B$ in a counterclockwise order along the boundary of $B$ starting
from $v_0=\ell(B)$. Then:
\begin{enumerate}[\emph{(}i\emph{)}]
\item \label{lem:cons_on_spine} Vertices $v_1,\dots,v_t$ are
consecutive along the spine.
\item \label{lem:diff_first_last} $u_f(B)\neq u_l(B)$.
\item \label{lem:lem_crossing_basics} If $u_i=u_f(B)$ and
$u_j=u_l(B)$ for some $i<j$, then vertices $v_1, \dots ,v_t,
u_{i+1}, \dots, u_j$ appear in this order from left to right along
the spine.
\item \label{lem:block_order} Let $G[B]$ be the subgraph of $G$ in
the interior of cycle $P[u_f(B) \rightarrow u_l(B)] \rightarrow
\ell(B) \rightarrow  u_f(B)$. Then, a block $B'\in G[B]$ if and only
if $B$ is an ancestor of $B'$, that is, $B'$ belongs to the
block-subtree rooted at $B$.
\end{enumerate}
\label{lem:lem_crossing_ordering_1}
\end{lemma}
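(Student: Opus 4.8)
The plan is to prove all four statements by unpacking the definitions of the linear order and the combinatorial structure of a single block, and by using the fact that the block-tree is embedded in top-to-down order with blocks placed immediately to the right of their dominating outer vertex. Since these are structural properties of Yannakakis' construction rather than deep theorems, the main work is careful bookkeeping; I expect the last item to be the genuine obstacle.

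For (\ref{lem:cons_on_spine}), I would argue directly from the linear order rule: the vertices of $B$ are placed along the spine exactly in the counterclockwise boundary order starting from $\ell(B)=v_0$, and they are inserted as one contiguous group right after the dominator $u_f(B)=dom(B)$, before any other block or outer vertex. Since $v_0$ is placed first (as the leader, already positioned) and $v_1,\dots,v_t$ follow immediately in order, no vertex foreign to $B$ can be interleaved among $v_1,\dots,v_t$; hence they are consecutive. For (\ref{lem:diff_first_last}), I would use Lemma~\ref{lem:lem_crossing_quadrangle} together with planar-maximality: the edges $(v_0,v_1)$ and $(v_0,v_k)$ bound faces seen by $u_l(B)$ and $u_f(B)$ respectively, and if these two outer vertices coincided, the whole block $B$ would be enclosed in a single face incident to just one outer vertex, contradicting that the boundary between $L_0$ and $L_1$ forms a cycle with at least two distinct outer attachment points (this is where I would invoke biconnectivity and the internal triangulation).

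For (\ref{lem:lem_crossing_basics}), I would combine (\ref{lem:cons_on_spine}) with the placement rule for outer vertices. The outer vertices $u_1,\dots,u_k$ appear left to right in index order, and each block is inserted immediately after its dominator $u_f(B)=u_i$. Thus the group $v_1,\dots,v_t$ sits just to the right of $u_i$ and to the left of $u_{i+1}$; since $u_{i+1},\dots,u_j$ follow in increasing index order with no block of $B$ among them (those belong to $B$ and are already placed), the claimed left-to-right order $v_1,\dots,v_t,u_{i+1},\dots,u_j$ is immediate. The subtle point here is verifying that no \emph{other} block dominated by $u_{i+1},\dots,u_{j-1}$ is interleaved in a way that breaks the order; I would handle this by noting we are asserting only the relative order of these specific vertices, which is preserved regardless of intervening blocks.

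The hard part will be (\ref{lem:block_order}), the topological characterization of which blocks lie inside the region $G[B]$. The plan is to prove both directions via the peeling/recursion structure. For the forward direction, if $B'$ is a descendant of $B$ in the block-tree, then every path from the first inner vertex to $B'$ passes through $B$, so $\ell(B')$ is reached only after entering the region bounded by the cycle $P[u_f(B)\to u_l(B)]\to\ell(B)\to u_f(B)$; I would argue that this cycle separates the plane and that the block-tree edges force $B'$ to the interior side. For the converse, I would show that a block $B'$ strictly inside $G[B]$ cannot be connected to the first inner vertex without crossing the bounding cycle, which by planarity of the level structure means its tree-path must go through $B$, making $B$ an ancestor. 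I expect to lean on the recursive multi-level embedding description, where blocks in the interior of $B$ are exactly those of strictly greater level confined between two consecutive vertices of the enclosing level, to make the separation argument rigorous. Throughout, the main risk is conflating the block-tree (combinatorial ancestry) with the geometric nesting of boundary cycles, so I would be careful to establish that these two notions coincide precisely because $L_0$ is chordless and the graph is internally triangulated.
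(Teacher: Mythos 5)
First, a point of reference: the paper does not prove this lemma at all --- it is stated with the attribution to Yannakakis~\cite{Yan89} and used as a black box throughout Section~\ref{sec:1planar}. So your proposal can only be judged on its own merits. Items~(i) and~(iii) are fine: they follow, as you say, from the linear-order rules (blocks placed as contiguous groups next to their dominators, outer vertices in index order), and your remark that only the \emph{relative} order of the listed vertices matters correctly disposes of the interleaving worry. Two inaccuracies there are harmless but worth flagging: a block is not necessarily placed \emph{immediately} after its dominator (several blocks may share a dominator and are laid down one after the other), and $v_0=\ell(B)$ is in general placed far away from the group $v_1,\dots,v_t$, not just before it. Also, Lemma~\ref{lem:lem_crossing_quadrangle} (the $K_4$ lemma for maximal $1$-planar graphs) has no role here: the present lemma is about Yannakakis' plane, internally triangulated graphs, so invoking it signals a confusion of settings.

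The genuine gaps are in (ii) and, more seriously, in (iv). Your argument for (ii) is circular: ``at least two distinct outer attachment points'' \emph{is} the statement $u_f(B)\neq u_l(B)$, i.e.\ $|N(B)|\geq 2$, and biconnectivity alone does not deliver it --- a block all of whose outer edges go to a single vertex $u$, while hanging from its parent through $\ell(B)$, violates no connectivity condition. What is actually needed is: by internal triangulation and maximality of blocks, every boundary edge of $B$ is seen from outside by an \emph{outer} vertex; if all of these coincided with one vertex $u$, the rotation at $\ell(B)$ would consist only of edges to $u$ and to $B$, so $\ell(B)$ would have no neighbor in a parent block, forcing $B$ to be the root block --- whose leader, the first inner vertex, is by definition adjacent to both $u_1$ and $u_k$, a contradiction.

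For (iv), the converse direction as you sketch it contains a false step. You claim a block $B'$ strictly inside the cycle $C = P[u_f(B)\rightarrow u_l(B)]\rightarrow \ell(B)\rightarrow u_f(B)$ ``cannot be connected to the first inner vertex without crossing the bounding cycle.'' But $C$ passes through the \emph{inner} vertex $\ell(B)$, and an $L_1$-path may legitimately pass through $\ell(B)$ without crossing $C$. This is precisely what happens for sibling blocks of $B$ that share the leader $\ell(B)$: they are attached at a vertex of $C$, they are \emph{not} descendants of $B$, and your argument cannot exclude them from the interior, so the ``only if'' direction is not established. To close the gap you must use the defining faces of $u_f(B)$ and $u_l(B)$ --- that they \emph{see} $(v_0,v_t)$ and $(v_0,v_1)$ --- to pin down the rotation at $v_0$: inside $C$, the edges at $v_0$ are exactly $(v_0,u_f(B)),(v_0,v_t),\dots,(v_0,v_1),(v_0,u_l(B))$ with only chords of $B$ in between (block interiors are empty of vertices in the two-level case), so every other block attached at $v_0$ (the parent and all siblings) lies outside $C$, while children of $B$ attach at cut vertices $v_i$, $i\geq 1$, which lie inside. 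This rotation argument, which you defer, is the actual content of (iv); ``$L_0$ chordless plus internal triangulation,'' which you name as the cure, is not by itself the reason the combinatorial and geometric notions coincide.
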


\begin{lemma}[Yannakakis~\cite{Yan89}]
Let $G$ be a graph consisting of two levels $L_0$ and $L_1$ and
assume that $(u_i,u_j)$, $i<j$, is a chord of $L_0$. Denote by $H$
the subgraph of $G$ in the interior of the cycle $P[u_i \rightarrow
u_j] \rightarrow u_i$. Then:
\begin{enumerate}[\emph{(}i\emph{)}]
\item Vertices $u_i$ and $u_j$ form a separation pair in $G$.
\item All vertices of $H$ lie between $u_i$ and $u_j$ along the spine.
\item If there is a vertex between $u_i$ and $u_j$ that does not
belong to $H$, then this vertex belongs to a block $B$ dominated by
$u_i$. In addition, all vertices of $H$, except for $u_i$ are to the
right of $B$ along the spine.
\end{enumerate}
\label{lem:lem_crossing_chord_separating}
\end{lemma}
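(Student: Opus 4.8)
The plan is to combine two ingredients throughout: the planarity of the drawing, which lets us treat the cycle $C := P[u_i \rightarrow u_j] \rightarrow u_i$ as a Jordan curve separating its interior from the rest of the plane, and the explicit rule of the linear order, namely that every block is placed immediately to the right of its dominator and that blocks sharing a dominator are listed in top-to-down block-tree order. Write $D$ for the closed disk bounded by $C$, so that $H$ consists of the outer vertices $u_i,u_{i+1},\dots,u_j$ together with all inner vertices and edges drawn inside $D$. For part~(i) I would first show that every vertex of $H$ other than $u_i$ and $u_j$ has all of its neighbours in $H$. For an inner vertex this is immediate, since any incident edge leaving $D$ would have to cross $C$. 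For an outer vertex $u_m$ with $i<m<j$ the same holds: $u_m$ lies on $C$ but on the outer face $L_0$, so each of its edges is either a level edge of the arc $P[u_i\rightarrow u_j]$ or runs into the interior of $D$, and in both cases the other endpoint lies in $H$. Since $H\setminus\{u_i,u_j\}$ and $G\setminus H$ are both nonempty (a proper chord forces $j\ge i+2$), deleting $u_i$ and $u_j$ disconnects $G$, which is exactly the assertion that $\{u_i,u_j\}$ is a separation pair.

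For part~(ii) the outer vertices of $H$ are $u_i,\dots,u_j$, which trivially lie in the spine interval $[u_i,u_j]$, so it suffices to locate the inner vertices. Each inner vertex of $H$ belongs to a block $B$ all of whose outer neighbours lie on $P[u_i\rightarrow u_j]$, i.e.\ in $\{u_i,\dots,u_j\}$; hence $dom(B)=u_f(B)\ge u_i$ and $u_l(B)\le u_j$. Invoking the property $u_f(B)\neq u_l(B)$ from Lemma~\ref{lem:lem_crossing_ordering_1}, the smallest- and largest-indexed neighbours are distinct, so in fact $dom(B)=u_f(B)\le u_{j-1}$. By the linear-order rule $B$ is then placed between its dominator $u_m$ (with $i\le m\le j-1$) and the next outer vertex $u_{m+1}$, hence strictly inside $[u_i,u_j]$. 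This puts every vertex of $H$ in $[u_i,u_j]$.

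For part~(iii), let $w$ lie between $u_i$ and $u_j$ on the spine with $w\notin H$. Since all outer vertices of that interval are in $H$, the vertex $w$ is inner and its block $B'$ is placed inside $[u_i,u_j]$, so $dom(B')\in\{u_i,\dots,u_{j-1}\}$. If $dom(B')=u_m$ with $i<m<j$, then by part~(i) every neighbour of $u_m$ — in particular $B'$ — lies in $H$, contradicting $w\notin H$; hence $dom(B')=u_i$, which gives the first claim with $B:=B'$. The remaining ordering claim is, in my view, the main obstacle. The outer vertices $u_{i+1},\dots,u_j$ and every inner $H$-vertex dominated by some $u_m$ with $m>i$ already lie to the right of all blocks dominated by $u_i$, and in particular to the right of $B$; what must still be argued is that the $H$-blocks that are themselves dominated by $u_i$ sit to the right of $B$.

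To settle this I would use planarity together with the block-tree characterization of Lemma~\ref{lem:lem_crossing_ordering_1}. Since $B\subseteq G\setminus H$ is dominated by $u_i$ and, by $u_f(B)\neq u_l(B)$ together with part~(i) (which forbids $B$ from touching $u_{i+1},\dots,u_{j-1}$), its largest-indexed neighbour satisfies $u_l(B)\ge u_j$, the cycle $P[u_i\rightarrow u_l(B)]\rightarrow \ell(B)\rightarrow u_i$ runs clockwise through the entire arc $u_i,\dots,u_j$ and therefore encloses the disk $D$. Consequently $H$ lies inside the interior region $G[B]$, so by the block-tree characterization every $H$-block is a descendant of $B$ and is listed after $B$ in the top-to-down order, hence to its right along the spine. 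Together with the previous observations this places all of $H$ except $u_i$ to the right of $B$, completing the argument.
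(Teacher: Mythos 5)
The paper itself gives no proof of this lemma (it is imported verbatim from Yannakakis~\cite{Yan89}), so your proposal can only be judged on its own merits. Parts~(i) and (ii) and the first assertion of part~(iii) are essentially sound: the Jordan-curve argument shows that every vertex of $H$ other than $u_i,u_j$ has all its neighbours in $H$, and the placement rule (each block goes immediately after its dominator) then localizes the inner vertices of $H$ correctly and forces $dom(B(w))=u_i$ for a stray vertex $w$.

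The final step of part~(iii), however, contains a genuine gap, and it is exactly at the point you yourself flagged as the main obstacle. You invoke Lemma~\ref{lem:lem_crossing_ordering_1}.\ref{lem:block_order} to conclude that every $H$-block is a \emph{descendant} of $B$. That conclusion is not just unjustified, it is impossible. By your own part~(i), no edge of $G$ joins an inner vertex inside the chord region to an inner vertex outside it (such an edge would cross the chord $(u_i,u_j)$ or the outer cycle). Hence $B$, which is disjoint from $H$, and the blocks of $H$ lie in \emph{different connected components} of the graph induced by $L_1$, i.e., in different trees of the block forest, so no ancestor/descendant relation between them can hold. The source of the error is that Lemma~\ref{lem:lem_crossing_ordering_1} is established under the chord-free assumption (the paper explicitly assumes $L_0$ has no chords in the two-level case, which is what makes the blocks form a single tree), whereas the present lemma lives precisely in the chord setting where the blocks form a forest. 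Your geometric observation that the disk $D$ lies inside the cycle $P[u_i \rightarrow u_l(B)] \rightarrow \ell(B) \rightarrow u_i$ is correct, but combined with the component argument above it only demonstrates that Lemma~\ref{lem:lem_crossing_ordering_1}.\ref{lem:block_order} fails verbatim once chords are present -- it does not deliver the ordering. What part~(iii) actually encodes is a property of how Yannakakis' algorithm sequences blocks from \emph{different trees} that share the dominator $u_i$ (the blocks outside the chord region are placed before the root blocks of the trees inside it); the paper's summary of the linear order (``blocks dominated by $u_j$ are embedded right next to $u_j$ \ldots in the top to down order of the block-tree'') does not even specify this relative order for a forest, so the claim has to be extracted from the algorithm in~\cite{Yan89} itself and cannot be derived from Lemma~\ref{lem:lem_crossing_ordering_1}.
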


% ============================================================================
\section{An upper bound on the book thickness of 1-Planar Graphs}
\label{sec:1planar}
% ============================================================================

In this section, we extend the algorithm of Yannakakis~\cite{Yan89}
to $1$-planar graphs based on the ``peeling'' approach described in
Section~\ref{sec:preliminaries}. Let $G=(V,E)$ be a $1$-planar graph
and $\Gamma(G)$ be a $1$-planar drawing of $G$. Our approach, in
high level description, is as follows. Initially, we consider the
case where $\Gamma(G)$ contains no crossings incident to its
unbounded face. We augment $G$ to internally maximal $1$-planar. To
do so, we momentarily replace each crossing in $\Gamma(G)$ with a
so-called \emph{crossing vertex}. The implied planarized graph is
then triangulated (only in its interior), so that no new edge is
incident to a crossing vertex. The latter restriction, however, may
lead to a non-simple graph (containing multiedges), as we will see
in Section~\ref{subsec:nonmaximal}. On the other hand, the interior
of all $K_4$s implied by Lemma~\ref{lem:lem_crossing_quadrangle} are
free of vertices and edges. To simplify the presentation, we
initially assume that the planarized graph is simple. So, $G$ can be
augmented to a simple internally maximal $1$-planar graph with no
crossings incident to its unbounded face. Following an approach
slightly different from the one of Yannakakis~\cite{Yan89}, we show
how one can define the levels for such a graph. We prove that if
there are only two levels, then such a graph fits in $16$ pages.
When the number of levels is greater than two, we prove that $39$
pages suffice. Finally, we show how one can cope with the cases of
multiedges and crossings on the unbounded face of $\Gamma(G)$.

Assume now that $G$ is simple and internally maximal $1$-planar with
no crossings incident to the unbounded face of $\Gamma(G)$. Its
vertices are assigned to levels as follows:
\begin{inparaenum}[(i)]
\item vertices on the outerface of $G$ are at level zero;
\item vertices that are at distance $i$ from the level zero
vertices are at level $i$.
\end{inparaenum}
Similarly to Yannakakis naming scheme, edges that connect vertices
of the same (different, resp.) level are called \emph{level}
(\emph{binding}, resp.) edges.

Since we have assumed that $G$ is internally maximal $1$-planar and
that there are no crossings incident to its unbounded face, by
Lemma~\ref{lem:lem_crossing_quadrangle} it follows that the
endpoints of every crossing pair of edges are pairwise adjacent. So,
if we remove one edge from each pair of crossing edges, then the
result is an internally-triangulated plane graph (which we call
\emph{underlying planar structure}). We apply the following simple
rule. For a pair of binding crossing edges or for a pair of level
crossing edges, we choose arbitrarily one to remove\footnote{We will
shortly adjust this choice for two special cases. However, since we
do not seek to enter into further details at this point, we assume
for now that the choice is, in general, arbitrary.}. However, for a
pair of crossing edges consisting of a binding edge and a level
edge, we always choose to remove the level edge. The main benefit of
the aforementioned approach is that, the underlying planar structure
allow us to define the blocks as well as the leaders and the
dominators of the blocks in the exact same way as Yannakakis does.
In addition, it is not difficult to observe that if all removed
edges are plugged back to the graph, then a binding edge cannot
cross a block, since such a crossing would involve an edge on the
boundary of the block, which by definition is a level edge (and
therefore not present when the blocks are computed).

% ============================================================================
\subsection{The Two-Level Case}
\label{sec:twolevels}
% ============================================================================

In this section, we consider the (intuitively easier) case, where
the given $1$-planar graph $G$ consists of two levels $L_0$ and
$L_1$. We also assume that there is no level edge of $L_1$ which by
the combinatorial embedding is strictly in the interior of a block
of $L_1$. In addition, $G$ is simple internally maximal $1$-planar
and has no crossings on its unbounded face. We denote by $G_P$ the
underlying planar structure and proceed to obtain a $3$-page book
embedding of $G_P$ using Yannakakis algorithm~\cite{Yan89}
(see Section~\ref{sec:preliminaries}). We argue that we can
embed the removed edges in the linear order implied by the book
embedding of $G_P$ using $11$ more pages.

Before we proceed with the description of our approach, we introduce
an important notion useful in ``eliminating'' possible crossing
situations. We say that two edges $e_1$ and $e_2$ of $G$
\emph{form a strong pair} if 
\begin{inparaenum}[(i)] 
\item they are both assigned to the same page, say $p$, and,
\item if an edge $e$, that is assigned also to page $p$, crosses
$e_i$, then it also crosses $e_j$, where $i\neq j\in
\left\{1,2\right\}$.
\end{inparaenum}
Suppose that $e_1\notin E[G_P]$ and $e_2\in E[G_P]$ form a strong
pair of edges. If $e_3\in E[G_P]$, then $e_3$ can cross neither
$e_1$ nor $e_2$ (due to Yannakakis' algorithm). On the other hand,
if $e_3\notin E[G_P]$ and forms a strong pair  with another edge
$e_4\in E[G_P]$, then again $e_3$ can cross neither $e_1$ nor $e_2$,
as otherwise $e_4$ would also be involved in a crossing with $e_1$
or $e_2$; contradicting the correctness of Yannakakis' algorithm as
$e_4\in E[G_P]$.

In the following, we describe six types of crossings that may occur
when the removed edges are plugged back to $G$; see
Fig.~\ref{fig:1-planar}. Level edges of $L_0$ that do not belong to
the planar structure $G_P$ are called \emph{outer crossing chords}.
Such chords may be involved in crossings with
\begin{inparaenum}[(i)] 
\item other chords of $L_0$ that belong to $G_P$, or,
\item binding edges (between levels $L_0$ and $L_1$), or, 
\item degenerated blocks (so-called \emph{block-bridges}) of level $L_1$ that are simple edges.
\end{inparaenum}

\begin{figure}[t!]
    \centering
    \includegraphics[width=.4\textwidth,page=1]{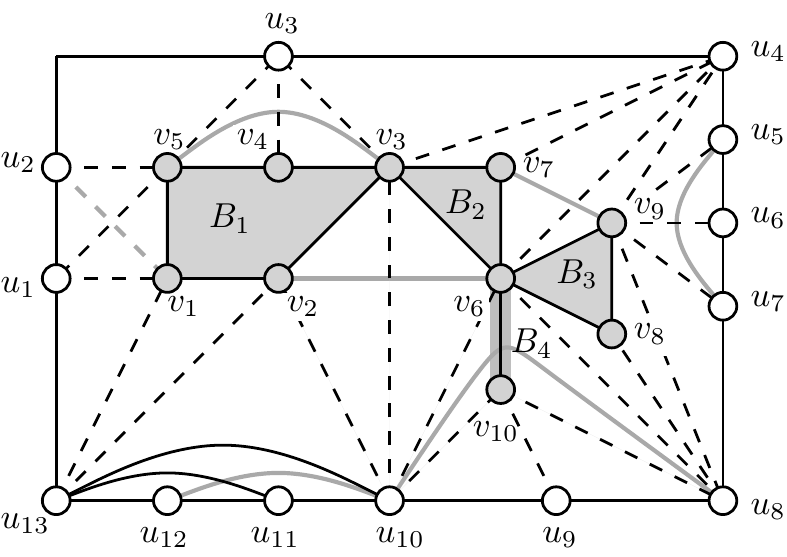}
    \caption{%
    Level (binding) edges are solid (dashed).
    The planar structure is colored black.
    Gray colored edges do not belong to the planar structure.
    Edges $(u_5,u_7)$, $(u_8,u_9)$ and $(u_{10},u_{12})$ are outer
    crossing chords that cross a binding edge, a bridge-block and a
    chord of the planar structure, resp.
    Edges $(v_3,v_5)$, $(v_7,v_9)$ and $(v_2,v_6)$ are $2$-hops crossing binding edges.
    Edge $(u_1,v_5)$ is forward binding edge.}
    \label{fig:1-planar}
\end{figure}

Level edges of $L_1$ that do not belong to the planar structure are
called \emph{inner crossing chords} or simply
\emph{$2$-hops}\footnote{The term yields from the observation that
along the boundary of the block-tree a $2$-hop can ``bypass'' only
one vertex because of maximal 1-planarity.}. We claim that $2$-hops
do not cross with each other. For a proof by contradiction, assume
that $e=(u,v)$ and $e'=(u',v')$ are two $2$-hops that cross. Assume
w.l.o.g. that $u$, $u'$, $v$ and $v'$ appear in this order along the
boundary of the block tree of the underlying planar structure $G_P$.
Since $G$ is maximal $1$-planar, by
Lemma~\ref{lem:lem_crossing_quadrangle} it follows that $(u,v')$
belongs to the planar structure $G_P$. On the other hand, in the
presence of this edge both vertices $u'$ and $v$ are not anymore at
the boundary of the block tree of level $L_1$ of $G_P$, which is a
contradiction as $e$ and $e'$ are both level edges of $L_1$. Hence,
$2$-hops are involved in crossings only with binding edges. Since
level edges of different levels cannot cross, the only type of
crossings that we have not reported so far are those between binding
edges.

Recall that binding edges are of two types; forward and back.
For a pair of crossing binding edges, say $e=(u_i,v_j)$ and
$e'=(u_{i'},v_{j'})$, where $u_i,u_{i'}\in L_0$ and $v_i,v_{i'}\in
L_1$, we mentioned that we can arbitrarily choose which one is
assigned to the underlying planar structure $G_P$. Here, we adjust
this choice: Edge $e$ is assigned to $G_P$ if and only if vertex
$u_i$ is lower-indexed than $u_{i'}$ in $L_0$, that is, $i < i'$.
As a consequence, edge $e'$ is always forward binding. Therefore,
two back binding edges cannot cross.

Similarly, for a pair of crossing level edges, we mentioned that we
arbitrarily choose, which one to assign to the underlying planar
structure $G_P$. We adjust this choice in the case of two crossing
level edges of level $L_0$ as follows: If a level edge is incident
to the first outer vertex $u_1$ of level $L_0$, then it is
necessarily assigned to $G_P$.

From the above, it follows that for a pair of crossing edges, say $e
\in E[G_P]$ and $e' \notin E[G_P]$, we have the following crossing
situations, each of which is separately treated in the following
lemmas (except for the last one which is more demanding):

\begin{enumerate}[C.1:]
\item $e'$ is an outer crossing chord and $e$ is a chord of $L_0$ that belongs to $G_P$.
\label{case:chord_chord}
\item $e'$ is an outer crossing chord and $e$ is a binding edge.
\label{case:chord_binding}
\item $e'$ is an outer crossing chord and $e$ is a block-bridge of $L_1$.
\label{case:chord_bridge}
\item $e'$ is a forward binding edge and $e$ is a forward binding edge.
\label{case:forward_forward}
\item $e'$ is a forward binding edge and $e$ is a back binding edge.
\label{case:forward_back}
\item $e'$ is a $2$-hop and $e$ is a binding edge.
\label{case:hop_binding}
\end{enumerate}

\noindent Our approach is outlined in the proof of the following
theorem, which summarizes the main result of this section.

\begin{theorem}
Any simple internally maximal $1$-planar graph $G$ with $2$ levels
and no crossings incident to its unbounded face admits a book
embedding on $16$ pages.
\end{theorem}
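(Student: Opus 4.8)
The plan is to fix once and for all the spine order coming from Yannakakis' three-page embedding of the underlying planar structure $G_P$, and then to account for every edge of $E[G]\setminus E[G_P]$ by distributing it over a bounded number of additional pages, arguing crossing-freeness page by page. First I would construct $G_P$ by deleting one edge from every crossing pair according to the three rules already fixed above: delete the level edge from a mixed level/binding crossing; for two crossing binding edges keep the one whose outer endpoint has the smaller index (so that the other is forward and no two back binding edges cross); and for two crossing level edges of $L_0$ always keep the one incident to $u_1$. Since $G$ is internally maximal $1$-planar with no crossing on its outer face, Lemma~\ref{lem:lem_crossing_quadrangle} guarantees that $G_P$ is an internally triangulated plane graph, so Yannakakis' algorithm applies and occupies pages $1$--$3$; this fixes the linear order along the spine and makes all of Lemma~\ref{lem:lem_crossing_ordering_1} and Lemma~\ref{lem:lem_crossing_chord_separating} available for the deleted edges.

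Next I would argue that, once the deleted edges are plugged back, the only crossings that can reappear are exactly the six situations C.1--C.6. This uses the structural facts already established: $2$-hops never cross one another (shown above via Lemma~\ref{lem:lem_crossing_quadrangle}), two back binding edges never cross (by the index rule), level edges of different levels cannot cross, and a binding edge cannot cross a block boundary. Consequently every reappearing crossing pairs an edge $e\in E[G_P]$ with an edge $e'\notin E[G_P]$, and a short check on the levels of the endpoints and on the forward/back type of the binding edges forces the pair into one of C.1--C.6.

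The core of the proof is then a sequence of lemmas, one per crossing type, each reserving its own set of fresh pages and showing that the deleted edges of that type fit on those pages without crossings. The main tool is the strong-pair notion introduced above: whenever a deleted edge $e'\notin E[G_P]$ can be matched with a structure edge $e\in E[G_P]$ on a common new page so that $\{e,e'\}$ is a strong pair, the correctness of Yannakakis' embedding of $G_P$ immediately forbids any further edge on that page from crossing $e'$. For the easier types C.1--C.5 I would combine this with the ordering guarantees of Lemma~\ref{lem:lem_crossing_ordering_1} and the separation-pair structure of Lemma~\ref{lem:lem_crossing_chord_separating} to show that the deleted edges nest or are otherwise separated, so that each type costs only a small constant number of pages; adding these costs to the three pages used by $G_P$ yields the claimed bound of $16$ pages.

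The hard part will be Case C.6, the interaction between $2$-hops and binding edges, flagged above as the most demanding. Here a single $2$-hop may be crossed by several binding edges, and those binding edges may in turn be crossed by outer chords, so a naive one-page-per-type accounting breaks down and the strong-pair argument has to be nested so that no two of these routed edges collide on a common reserved page. A secondary obstacle is the global bookkeeping: I must verify that an edge placed to resolve one crossing type is not re-crossed by an edge introduced for a different type, which is precisely what the page-by-page strong-pair argument is designed to rule out. Once each of the six lemmas is established, the theorem follows by collecting the page budgets.
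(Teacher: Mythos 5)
Your skeleton coincides with the paper's: build $G_P$ with exactly those deletion rules, spend three pages on Yannakakis' embedding of $G_P$, classify the reinserted edges into the six types C.1--C.6, and pay for each type with fresh pages. For C.1--C.5 this plan can be completed essentially as you describe (the paper's budgets are one page for C.1, none for C.2--C.3, which reuse page $p_1$, two for C.4, and three for C.5, the last two obtained via the strong-pair arguments you invoke).

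There is, however, a genuine gap at Case C.6, which is the heart of the theorem (seven of the sixteen pages). First, your diagnosis of its difficulty is incorrect: in a $1$-planar drawing every edge is crossed at most once, so a single $2$-hop cannot ``be crossed by several binding edges,'' and the binding edge it crosses cannot additionally be crossed by an outer chord. The actual difficulty is one your toolkit does not touch: two $2$-hops that are crossing-free in $\Gamma(G)$ may still interleave along the spine, and since both lie outside $E[G_P]$, the strong-pair mechanism gives no control over such a conflict --- for a $2$-hop one generally cannot exhibit a partner edge of $G_P$ on the same page forming a strong pair, and the paper never attempts this. What is needed instead is a classification of $2$-hops by the block-tree relation between the blocks containing their endpoints into five subtypes (simple, bridging, forward, backward, long), together with per-type combinatorial arguments of a different flavor: simple $2$-hops can share any $2$-hop page (Lemma~\ref{lem:lem_crossing_2hops_simpe}); bridging $2$-hops can be $2$-colored because their conflict graph is a disjoint union of paths (Lemmas~\ref{lem:lem_crossing_2hops_bridging_auxiliary} and~\ref{lem:lem_crossing_2hops_bridging}); forward $2$-hops fit on one page via the trail machinery (Lemma~\ref{lem:lem_crossing_2hops_forward}); backward and long $2$-hops take two pages each via parity and induction arguments on the block tree (Lemmas~\ref{lem:lem_crossing_2hops_backward} and~\ref{lem:lem_crossing_2hops_long}). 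This yields the budget $0+2+1+2+2=7$ of Lemma~\ref{lem:two_hops}; without it your accounting cannot reach $16$, and nesting strong pairs, as you propose, would not substitute for it.
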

\begin{proof}
The underlying planar structure can be embedded in three pages.
Case~C.\ref{case:chord_chord} requires one extra page; due to
Lemma~\ref{lem:lem_crossing_inner_chord}.\ref{lem:chord_chord}.
The crossing edges that fall into Cases~C.\ref{case:chord_binding}
and C.\ref{case:chord_bridge} can be accommodated on the same pages
used for the underlying planar structure; see
Lemma~\ref{lem:lem_crossing_inner_chord}.\ref{lem:chord_binding_bridge}.
Case~C.\ref{case:forward_forward} requires two extra pages due to
Lemma~\ref{lem:lem_crossing_binding_back}. 
Case~C.\ref{case:forward_back} requires three extra pages due to
Lemma~\ref{lem:lem_crossing_binding_forward}. Finally, Case
C.\ref{case:hop_binding} requires seven more pages due to
Lemma~\ref{lem:two_hops}. Summing up the above yields $16$ pages in
total.
\end{proof}

We start by investigating the case where $e'$ is an outer crossing
chord of $G$. By Lemma~\ref{lem:lem_crossing_inner_chord_auxiliary}
any two outer crossing chords can be placed on the same page without
crossing each other, while Lemma~\ref{lem:lem_crossing_inner_chord}
describes the placement of outer crossing chords for Cases
C.\ref{case:chord_chord}, C.\ref{case:chord_binding} and
C.\ref{case:chord_bridge}.

\begin{lemma}
Let $u_1,\dots,u_k$ be the vertices of level $L_0$ in clockwise
order along its boundary. Let $c=(u_i,u_j)$ and $c'=(u_{i'},u_{j'})$ be
two chords of $L_0$, such that $i<i'<j<j'$. Then, exactly one of $c$
and $c'$ is an outer crossing chord.
\label{lem:lem_crossing_inner_chord_auxiliary}
\end{lemma}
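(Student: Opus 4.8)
The plan is to show that the interleaving condition $i<i'<j<j'$ forces $c$ and $c'$ to cross in $\Gamma(G)$, and then to read off from the construction of the underlying planar structure $G_P$ that precisely one of the two crossing edges is discarded. First I would observe that, since all of $u_1,\dots,u_k$ lie on the boundary of the unbounded face in this cyclic order and $\Gamma(G)$ has no crossings incident to its unbounded face, both chords $c$ and $c'$ are necessarily routed through the interior of the cycle $L_0$; otherwise a chord drawn in the exterior would, together with a boundary arc, enclose some outer vertices in a pocket and thereby separate them from the unbounded face, contradicting their membership in $L_0$.

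Next I would invoke a Jordan-curve argument. The chord $c=(u_i,u_j)$ splits the disk bounded by $L_0$ into two regions, one incident to $u_{i+1},\dots,u_{j-1}$ and the other incident to the remaining outer vertices. Since $i<i'<j$, the endpoint $u_{i'}$ of $c'$ lies in the first region, whereas $j<j'$ places $u_{j'}$ in the second region. As $c'$ joins these two endpoints inside the disk, it must cross $c$. Because $\Gamma(G)$ is a simple $1$-planar drawing, this crossing is the unique crossing on each of $c$ and $c'$; in particular neither chord participates in any other crossing.

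Finally I would appeal to the definition of $G_P$. The underlying planar structure is obtained by deleting exactly one edge from each crossing pair, and an \emph{outer crossing chord} is precisely a level edge of $L_0$ that is not retained in $G_P$. Since $c$ and $c'$ cross, they cannot both survive in the planar $G_P$, so \emph{at least one} of them is an outer crossing chord. Conversely, since the only crossing in which either chord is involved is the one with the other, the pair $\{c,c'\}$ is the single crossing pair to which each belongs, and the construction removes exactly one of its two edges; hence \emph{at most one} of $c,c'$ is discarded. Combining the two bounds, exactly one of $c$ and $c'$ is an outer crossing chord.

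The step I expect to require the most care is the first one, namely justifying that both chords are genuinely interior to $L_0$ so that the interleaving of their endpoints really does force a crossing; once this topological setup is fixed, the remainder follows directly from $1$-planarity and the edge-removal rule defining $G_P$.
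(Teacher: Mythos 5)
Your proof is correct and takes essentially the same route as the paper's: the interleaving $i<i'<j<j'$ forces $c$ and $c'$ to cross in $\Gamma(G)$, and since the construction of $G_P$ removes exactly one edge of each crossing pair (and $1$-planarity prevents either chord from participating in any other crossing), exactly one of the two is an outer crossing chord. The paper's proof is simply a terser version of this; your extra steps (chords lie inside the disk bounded by $L_0$, the Jordan-curve argument for the crossing) just make explicit what the paper leaves implicit.
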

\begin{proof}
Since $c=(u_i,u_j)$ and $c'=(u_{i'},u_{j'})$ are chords of $L_0$
with $i<i'<j<j'$, $c$ and $c'$ cross in the $1$-planar drawing
$\Gamma(G)$ of $G$. So, one of them would belong to the underlying
planar structure of $G$ and the other one would be an outer crossing
chord.
\end{proof}

Lemma~\ref{lem:lem_crossing_inner_chord_auxiliary} implies that
outer crossing chords can be placed on one page without crossing.
However, as stated in the following lemma, we choose not to do so.
Recall that we use three pages for $G_P$; $p_1$, $p_2$ and $p_3$.
Page $p_1$ is devoted to level edges of $L_0$ and back binding edges
of $G_P$. Pages $p_2$ and $p_3$ are used for level edges of $L_1$
and forward binding edges.

\begin{lemma}[Cases C.\ref{case:chord_chord} - C.\ref{case:chord_bridge}] 
Let $e=(u,v)\in E(G_P)$ and $e'=(u',v')\notin E(G_P)$ be two edges
of $G$ that are involved in a crossing. If $e'$ is outer crossing
chord of $L_0$, then:
\begin{enumerate}[\emph{(}i\emph{)}]
\item \label{lem:chord_chord} If $e$ is a chord of $L_0$, then $e'$
is placed on a universal page denoted by $up_c$
(Case~C.\ref{case:chord_chord}).
\item If $e$ is a binding or a block-bridge of $L_1$, then $e'$ is
assigned to page $p_1$, that is, the page used for level
edges of $L_0$ and back binding edges of $G_P$
(Cases~C.\ref{case:chord_binding} and C.\ref{case:chord_bridge}).
\label{lem:chord_binding_bridge}
\end{enumerate}
\label{lem:lem_crossing_inner_chord}
\end{lemma}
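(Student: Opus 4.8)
The plan is to handle the two items separately, reducing in each case the absence of a crossing on the assigned page to a statement about interleaving of endpoints along the spine. The tool I will use repeatedly is a direct consequence of Lemma~\ref{lem:lem_crossing_inner_chord_auxiliary}: \emph{no two outer crossing chords interleave along $L_0$}. Indeed, if two outer crossing chords $(u_i,u_j)$ and $(u_{i'},u_{j'})$ satisfied $i<i'<j<j'$, the lemma would force exactly one of them to lie in $G_P$, contradicting that both are outer crossing chords. Hence any two outer crossing chords are either nested or disjoint along $L_0$, so an arbitrary set of them can be routed on a single page without mutual crossings. (The geometric picture behind this is the induced $K_4$ of Lemma~\ref{lem:lem_crossing_quadrangle}, but the spine argument above is self-contained.)

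For item (i), the reason a fresh page $up_c$ is needed is that the chord $e\in E(G_P)$ crossed by $e'$ already lies on $p_1$, since all level edges of $L_0$ are placed there; thus $e'$ cannot be returned to $p_1$. I would collect on $up_c$ exactly the outer crossing chords of type~C.\ref{case:chord_chord}. As these are all outer crossing chords, the non-interleaving property shows they are pairwise non-crossing on $up_c$, which is all that is required, since in the two-level case $up_c$ carries no other edges.

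For item (ii), I would show that $e'=(u_a,u_b)$ with $a<b$ crosses none of the edges already residing on $p_1$, namely the level edges of $L_0$ (boundary edges and $G_P$-chords) and the back binding edges. Boundary edges join consecutive outer vertices and so cannot be properly crossed by a chord on the same page. If $e'$ interleaved with a $G_P$-chord $c$, then by Lemma~\ref{lem:lem_crossing_inner_chord_auxiliary} the chords $e'$ and $c$ would cross in $\Gamma(G)$; but $e'$ already crosses the binding edge or block-bridge $e$, and $c\neq e$, so this would be a second crossing on $e'$, contradicting $1$-planarity. The same non-interleaving property rules out crossings between $e'$ and any other type-C.\ref{case:chord_binding}/C.\ref{case:chord_bridge} chord sharing $p_1$. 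What remains, and is the step I expect to be the main obstacle, is to exclude crossings with back binding edges.

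For the back binding edges I would appeal to Yannakakis' linear order explicitly. A back binding edge has the form $(u_d,w)$ where $u_d=dom(B(w))$, and since the block $B(w)$ is placed immediately after its dominator, the inner endpoint $w$ lies in the open spine interval $(u_d,u_{d+1})$; in particular the outer endpoint of a back binding edge is the vertex immediately preceding the block. If $d\in\{a,b\}$ the two edges share an endpoint and do not cross. Otherwise I would check the position of $d$ relative to $a,b$: when $a<d<b$ both $u_d$ and $w$ fall strictly between $u_a$ and $u_b$, while when $d<a$ or $d>b$ neither does. In every case an even number of the endpoints of $(u_d,w)$ lies in the open interval $(u_a,u_b)$, so the two edges do not interleave and do not cross on $p_1$. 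This \emph{locality} of back binding edges, with outer endpoint exactly the dominator just before the block, is precisely what makes the argument go through, and is the part I would write out most carefully.
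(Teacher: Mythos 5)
Your proposal is correct and takes essentially the same approach as the paper: item (i) is exactly the paper's application of Lemma~\ref{lem:lem_crossing_inner_chord_auxiliary} to the chords collected on $up_c$, and item (ii) rests on the same two observations the paper uses, namely that an outer crossing chord on $p_1$ cannot interleave with a chord of $G_P$ (you argue via a second crossing on $e'$, the paper via the assignment rule sending $e'$ to $up_c$) and that a back binding edge of $G_P$ lies entirely in the spine interval between its outer endpoint (the dominator of the inner endpoint's block) and the next outer vertex. Your endpoint-parity count for the back binding edges is simply an explicit rendering of the paper's argument that such an edge is nested under a level edge of $L_0$, so that crossing it would force a crossing with that level edge, which is impossible.
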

\begin{proof}
\begin{inparaenum}[\emph{(}i\emph{)}]
\item Since $e$ is chord of level $L_0$, $e$ is placed on page $p_1$
(recall that $e\in E(G_P)$), and $e'$ is placed on the universal
page $up_c$. Since $up_c$ contains only outer crossing chords of
$G$, by Lemma~\ref{lem:lem_crossing_inner_chord_auxiliary} they do
not cross with each other.

\item If $e$ is a binding or a block-bridge of level $L_1$, then
$e'$ is assigned to page $p_1$. Suppose that $e'$ is in conflict with
another edge, say $e''$, of page $p_1$. By
Lemma~\ref{lem:lem_crossing_inner_chord_auxiliary}, edge $e''$ is
not an outer crossing chord, that is, $e''$ belongs to the
underlying planar structure $G_P$ of $G$. So, $e''\in E[G_P]$ and
it is either:
\begin{inparaenum}[(a)]
\item a level edge of level $L_0$, or
\item a back binding edge of $G_P$.
\end{inparaenum}
In the first case, the endpoints of $e''$ cannot be consecutive
vertices of level $L_0$, since that would not lead to a crossing
situation. Hence, $e''$ must be a chord of level $L_0$. However, if
$e'$ is involved in such a crossing, then $e'$ is assigned to page
$up_c$; a contradiction. In the second case, $e''$ is back binding
of $G_P$. So, edge $e''$ is nested by a level edge of level $L_0$
and, therefore, if $e'$ crosses $e''$, then $e'$ must also cross
this particular level edge of level $L_0$, which is not possible.
\end{inparaenum}
\end{proof}

% \noindent We next proceed with Case C.\ref{case:forward_forward} where both
% $e$ and $e'$ are forward binding edges of $G$.

\begin{lemma}[Case C.\ref{case:forward_forward}]
All forward binding edges that are involved in crossings with forward
binding edges of the underlying planar structure can be assigned to
$2$ new pages.
\label{lem:lem_crossing_binding_back}
\end{lemma}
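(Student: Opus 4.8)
The plan is to place each crossing forward binding edge $e'\notin E[G_P]$ on one of two new pages according to the page (among $p_2$ and $p_3$) on which its crossing partner $e\in E[G_P]$ already lies, and then to show that this two-colouring is crossing-free on each new page. First I would analyse a single crossing in isolation. Since $e=(u_i,v_j)$ and $e'=(u_{i'},v_{j'})$ are forward binding edges that cross in $\Gamma(G)$, Lemma~\ref{lem:lem_crossing_quadrangle} shows that their four endpoints induce a $K_4$ whose (empty) interior is bounded by the cyclic sequence $u_i,u_{i'},v_j,v_{j'}$; the convention that fixes $e'\notin E[G_P]$ forces $i<i'$, and the two diagonals of this $K_4$ are precisely $e$ and $e'$. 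The two sides $(u_i,v_{j'})$ and $(u_{i'},v_j)$ are then forward binding edges, while $(v_j,v_{j'})$ is a level edge of $L_1$; combined with the standing assumption that no level edge of $L_1$ lies strictly inside a block, this places $v_j$ and $v_{j'}$ in a common block $B$, consecutive along its boundary, with $u_i,u_{i'}\in N(B)$.

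Using Lemma~\ref{lem:lem_crossing_ordering_1}.\ref{lem:cons_on_spine} and Lemma~\ref{lem:lem_crossing_ordering_1}.\ref{lem:lem_crossing_basics}, I would then pin down the exact spine positions of $u_i,u_{i'},v_j,v_{j'}$ and conclude that, along the spine, the interval spanned by $e'$ differs from the one spanned by its partner $e$ only by the single step $v_j\mapsto v_{j'}$ on the left and $u_i\mapsto u_{i'}$ on the right; that is, $e'$ is a controlled ``shift'' of $e$. With this in hand I would carry out the colouring: assign $e'$ to new page $q_2$ (resp.\ $q_3$) exactly when its partner $e$ lies on page $p_2$ (resp.\ $p_3$). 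To verify correctness, suppose two crossing chords $e'_1,e'_2$ assigned to the same new page interleave along the spine. Because each inner endpoint $v_{j'}$ is merely the block-neighbour of the corresponding $v_j$ and each pair of outer endpoints satisfies $i<i'$, the shift relationship of the first step forces the original intervals of the partners $e_1,e_2$ to interleave as well. But $e_1$ and $e_2$ both lie on a single page of $G_P$, so such an interleaving contradicts the correctness of Yannakakis' algorithm on the planar structure. Hence no two like-coloured chords cross, and the crossing forward binding chords occupy only the two pages $q_2,q_3$.

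The main obstacle is the geometric bookkeeping of the first step, and specifically the relative spine order of $v_j$ and $v_{j'}$: the counterclockwise boundary order of $B$, which governs the spine order inside the block through the linear-order rule, must be reconciled with the left-to-right order $v_1,\dots,v_t,u_{i+1},\dots,u_j$ guaranteed by Lemma~\ref{lem:lem_crossing_ordering_1}.\ref{lem:lem_crossing_basics}, and one must rule out degenerate placements in which $u_i$ and $u_{i'}$ fail to lie strictly to the right of the block's vertices. Getting this exactly right is what makes the interleaving-transfer claim go through; once the ``shift'' is established, the reduction to the already-correct embedding of $G_P$ makes the two-page bound immediate.
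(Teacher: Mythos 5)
Your strategy here is genuinely different from the paper's. The paper disposes of Case~C.\ref{case:forward_forward} with a one-paragraph exchange argument: for every forward--forward crossing pair, swapping which of the two edges goes into $G_P$ changes neither the blocks nor the leaders and dominators, hence not the spine order, so Yannakakis' own page assignment applied to the ``swapped'' planar structure places all the removed edges on two new pages. You instead try to verify a two-colouring directly by reducing crossings among the removed edges to crossings among their partners in $G_P$ --- and that reduction is where your proof has a genuine gap.

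The problem is the interleaving-transfer step. Lemma~\ref{lem:lem_crossing_quadrangle} gives you that the inner endpoints $v_j,v_{j'}$ of a crossing pair are (generically) consecutive on the spine, but the outer endpoints $u_i,u_{i'}$ are only \emph{adjacent in $G$}; on the spine there may be many outer vertices and entire blocks between them. So from an interleaving $x_1<x_2<y_1<y_2$ of two removed edges $e_1'=(x_1,y_1)$ and $e_2'=(x_2,y_2)$ you cannot conclude that the partners $e_1=(x_1^-,y_1^-)$ and $e_2=(x_2^-,y_2^-)$ interleave: that would require $x_2^-<y_1^-$ and $y_1^-<y_2^-$, yet $y_1^-$ may lie arbitrarily far to the left of $y_1$ (possibly left of $x_2^-$), and nothing ties the relative order of $y_1^-,y_2^-$ to that of $y_1,y_2$. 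In fact, when both crossings involve the same block, the monotone structure of the triangulated annulus in $G_P$ forces the partners' outer endpoints to appear in \emph{reversed} order, so the partners are nested, never interleaved; your claimed implication cannot be a consequence of the shift relation. A repair in the spirit of your argument is to transfer the interleaving not to the diagonals $e_1,e_2$ but to the $K_4$ \emph{sides} $(u_{i'},v_j)$: such a side shares its outer endpoint with $e'$, its inner endpoint is one spine position from that of $e'$, and it is (generically) a forward binding edge of $G_P$ incident to the same block, hence on the partner's page --- so an interleaving of $e_1',e_2'$ would force two same-page edges of $G_P$ to interleave, the desired contradiction; this is essentially the ``strong pair'' technique the paper uses for Case~C.\ref{case:forward_back}. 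Even then you must still settle the cases your sketch defers: $(v_j,v_{j'})$ may itself be a removed $2$-hop (so ``consecutive on the block boundary'' needs proof), the partner's inner endpoint may be the block's leader, which is \emph{not} placed next to $v_1$ on the spine, and a side may itself be crossed and hence absent from $G_P$.
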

\begin{proof}
To prove the lemma, we employ a simple trick. We observe that, for a
pair of crossing forward binding edges, the choice of the edge that
will be assigned to the underlying planar structure affects neither
the decomposition into blocks nor the choice of dominators and
leaders of blocks. Therefore, it does not affect the linear order of
the vertices along the spine. This ensure that two new pages
suffice.
\end{proof}

We proceed with Case C.\ref{case:forward_back}, where the back
binding edge $e=(u,v) \in E[G_P]$ crosses the forward binding
$e'=(u',v') \notin E[G_P]$. Let $P$ be the block containing $(v,v')$
and let $v_0,v_1,\dots,v_t$ be the vertices of $P$ as they appear in
the counterclockwise order around $P$ starting from $v_0=\ell(P)$.
Since $e$ is back binding, it follows that $u=u_f(P)$. By definition
of $u_f(P)$, $u$ sees edges $(v_i,v_{i+1})$, \dots, $(v_{t-1},v_t)$,
$(v_t,v_0)$ of $P$, for some $1\leq i\leq t$. Hence, edges
$(u,v_0)$, $(u,v_t)$, \dots, $(u,v_i)$ exist and are back binding
edges. This implies that either $v=v_i$ and $v'=v_{(i+1)modt}$ or
$v=v_0$ and $v'=v_t$.  In the latter case and assuming that $u'$ is
to the right of $u$ on the spine, $P$ is a root-block. In both
cases, $(u',v)$ is forward.

\begin{lemma}[Case C.\ref{case:forward_back}]
Let $e=(u,v)$ be a back binding edge and $e'=(u',v')$ a forward
binding edge of $G$ that cross. Let also $v_0,v_1,\dots,v_t$ be the
vertices of block $P$ as they appear in the counterclockwise order
around $P$ starting from $v_0=\ell(P)$, where $P$ is the block
containing $(v,v')$. Finally, let $i$ be the minimum s.t. vertex
$u=u_f(P)$ sees edges $(v_i,v_{i+1})$, \dots, $(v_{t-1},v_t)$,
$(v_t,v_0)$. Then we use three new pages $p'_1$, $p'_2$ and $p'_3$
as follows:
\begin{enumerate}[\emph{(}i\emph{)}]
\item If $v=v_i$ and $v'=v_{(i+1)modt}$, then edge $e'$ is placed on
a new page $p'_j$, if and only if the forward edges incident to
block $B(v')$ are assigned to page $p_j$, $j=2,3$.
\item If $v=v_0$ and $v'=v_t$, then edge $e'$ is placed on a page
$p'_1$.
\end{enumerate}
\label{lem:lem_crossing_binding_forward}
\end{lemma}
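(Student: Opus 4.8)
The plan is to verify that the three new pages carry \emph{only} the Case~C.\ref{case:forward_back} forward edges and that no two such edges assigned to the same new page cross; since the rule sends the subcase-(i) edges to $p_2'$ or $p_3'$ and the subcase-(ii) edges to $p_1'$, the two subcases are independent and I would argue them separately. In both subcases the common starting observation is that, by Lemma~\ref{lem:lem_crossing_quadrangle}, the crossing of $e=(u,v)$ and $e'=(u',v')$ makes $\{u,u',v,v'\}$ span a $K_4$ whose four side-edges are uncrossed and hence lie in $G_P$; in particular the partner edge $r_{e'}:=(u',v)$ belongs to $G_P$ and, as already noted just before the lemma, is a forward binding edge. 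By the correctness of Yannakakis' algorithm, $r_{e'}$ is then routed on the page $p_j$ that carries the forward edges incident to the block $P=B(v')$ (note that $v\in P$ as well).

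For subcase~(i), where $v=v_i$ and $v'=v_{i+1}$, I would use that the two inner endpoints are consecutive on the spine: by Lemma~\ref{lem:lem_crossing_ordering_1}(\ref{lem:cons_on_spine}) the vertices $v_1,\dots,v_t$ are consecutive and appear in this order, so nothing lies between $v_i$ and $v_{i+1}$, and $r_{e'}=(u',v_i)$, $e'=(u',v_{i+1})$ share the outer endpoint $u'$ and have spine-adjacent inner endpoints. A short interval comparison then shows that two subcase-(i) edges $e'$ and $f'$ interleave on the spine \emph{exactly} when their partners $r_{e'}$ and $r_{f'}$ interleave. Since the rule places $e'$ on $p_j'$ precisely when $r_{e'}$ lies on $p_j$, any two subcase-(i) edges sharing a new page have both partners on the same old page $p_j$; those partners are genuine forward edges of $G_P$ and therefore do not cross, whence $e'$ and $f'$ do not cross either. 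This settles $p_2'$ and $p_3'$.

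Subcase~(ii), where $v=v_0=\ell(P)$ and $v'=v_t$ with $P$ a root-block, is the main obstacle and is exactly why a dedicated page $p_1'$ is needed. Here the reduction of subcase~(i) breaks down: $v_0$ sits at the left end of $P$ while $v_t$ is its last vertex, so the whole interior of $P$ separates them; an interfering edge can cross $e'$ without crossing its partner $(u',v_0)$, so $e'$ and $r_{e'}$ do not form a strong pair. Instead I would prove directly that two distinct subcase-(ii) edges never interleave. First, each such edge crosses the back binding edge $(dom(P),\ell(P))$, which is determined by the root-block $P$ and, by $1$-planarity, is crossed at most once; hence each root-block yields at most one subcase-(ii) edge, and distinct subcase-(ii) edges come from root-blocks of distinct trees of the block-forest. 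Second, distinct trees are separated by chords of $L_0$, so by Lemma~\ref{lem:lem_crossing_chord_separating} the spine intervals occupied by the trees form a laminar (nested-or-disjoint) family. The key step is to show that a subcase-(ii) edge stays inside the spine interval of its own tree: since it is already crossed once (by its back edge) and the drawing is $1$-planar, it cannot also cross any of the delimiting chords of $L_0$, so its two endpoints lie on the same side of every such chord. Combining this with the laminar structure forces the intervals spanned by two subcase-(ii) edges to be nested or disjoint rather than interleaved, so they do not cross on $p_1'$.

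The delicate part is therefore subcase~(ii): pinning down that the outer endpoint $u'$ of a subcase-(ii) edge cannot escape the spine interval of its root-block's tree without forcing the edge to cross a level chord of $L_0$, which is impossible both by $1$-planarity and because in any binding/level crossing we always discard the level edge. Once this localization is established, Lemma~\ref{lem:lem_crossing_chord_separating} rules out interleaving, and the remaining bookkeeping---the $K_4$ partner, the interval comparison of subcase~(i), and the mirroring of $p_2',p_3'$ onto $p_2,p_3$---is routine.
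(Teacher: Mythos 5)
Your subcase~(i) is incomplete: you read case~(i) as ``$v=v_i$, $v'=v_{i+1}$ with $1\le i<t$'', but the ``$(i+1)\bmod t$'' in the statement also covers the wrap-around situation $v=v_t$, $v'=v_0=\ell(P)$, which the paper's proof treats explicitly as the first (and harder) half of case~(i). There your argument breaks in two places. First, $B(v')=B(v_0)$ is the \emph{parent-block} of $P$, not $P$ itself (the leader is assigned to the highest block containing it), so your identification $P=B(v')$ and the routing of the partner edge are wrong. Second, and decisively, $v_t$ and $v_0$ are \emph{not} spine-adjacent: by the linear order the leader $v_0$ is placed with its parent block, far to the left of the consecutive run $v_1,\dots,v_t$, so your claim ``two subcase-(i) edges interleave exactly when their partners interleave'' fails --- $e'=(u',v_0)$ spans a much larger interval than the partner $(u',v_t)$. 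The paper handles this subcase with a different strong pair: $e'$ together with the \emph{level} edge $(v_0,v_t)$ of $P$ (not the binding partner), using Lemma~\ref{lem:lem_crossing_ordering_1}.\ref{lem:lem_crossing_basics} to show that $(v_0,v_t)$ is nested under $e'$ and then Lemma~\ref{lem:lem_crossing_chord_separating}.

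In subcase~(ii) you abandon the $K_4$ edge $(u,u')$ that you yourself derived at the start, and that edge is precisely the paper's device: by Lemma~\ref{lem:lem_crossing_quadrangle}, $(u,u')\in E[G_P]$ is a level edge of $L_0$; for two subcase-(ii) edges the two corresponding $L_0$-edges cannot interleave (both lie in $G_P$), and Lemma~\ref{lem:lem_crossing_chord_separating} applied to these specific chords pins each $e'$ inside its chord's interval, excluding a crossing. Your substitute --- laminarity of the spine intervals of the trees of the block forest plus ``each subcase-(ii) edge stays inside the spine interval of its own tree'' --- is not sound as stated. The outer endpoint $u'$ is an $L_0$ vertex that in general lies \emph{outside} the interval spanned by the tree's own inner vertices (it can be an endpoint of a separating chord), and, by part~(iii) of Lemma~\ref{lem:lem_crossing_chord_separating}, a block of the outer tree dominated by a chord endpoint is placed \emph{inside} that chord's interval, so the tree intervals are not nested-or-disjoint in the way your argument needs. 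Excluding the resulting bad configurations (e.g., a subcase-(ii) edge reaching from outside a chord to a block intruding just inside it, interleaving with the inner tree's edge) requires the ordering statement of Lemma~\ref{lem:lem_crossing_chord_separating}(iii) together with the rule that the $G_P$ member of a crossing binding pair has the lower-indexed outer endpoint --- exactly the bookkeeping your sketch defers, and which the paper's choice of the chord $(u,u')$ renders unnecessary.
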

\begin{proof}
\begin{inparaenum}[(i)]
\item We prove a stronger result. In particular, we prove that if
the forward edges incident to $B(v')$ are on $p_j$, then $e'$ can also
be placed on $p_j$ without crossings. Clearly, if this is true, then
the lemma follows, as one can always split one page into two. We
distinguish two cases based on whether $v=v_t$ or $v=v_i$ for $i<t$.

First, assume that $v=v_t$ and $v'=v_0$; refer to $e=(u_2,v_5)$ and
$e'=(u_3,v_2)$ in Fig.~\ref{fig:fig_crossing_forward_before}.
Let $B=B(v)$ and $B'=B(v')$. Then, $B=P$ and $B'$ is the
parent-block of $B$. W.l.o.g. assume that the boundary of $B'$ is on
$p_2$. We claim that $e'$ can be placed on $p_3$ (together with
forward edges of $B'$). To prove it, we show that $e'=(u',v_0)$ and
$(v_0,v_t)$ form a strong pair. First, observe that $(v_0,v_t)$ is
on $p_3$: $B'$ is on $p_2$, so $B$ is on $p_3$ and $(v_0,v_t)$ is an
edges of $B$. By
Lemma~\ref{lem:lem_crossing_ordering_1}.\ref{lem:lem_crossing_basics},
vertices $v_1,\dots,v_t$ and $u'$ appear in the same order from left
to right along the spine, so $(v_0,v_t)$ is nested by $e'$. So, by
Lemma~\ref{lem:lem_crossing_chord_separating} $e'$ and $(v_0,v_t)$
form a strong pair.

\begin{figure}[t]
    \centering
    \begin{minipage}[b]{.40\textwidth}
        \centering
        \subfloat[\label{fig:fig_crossing_forward_before}{}] 
        {\includegraphics[width=\textwidth,page=1]{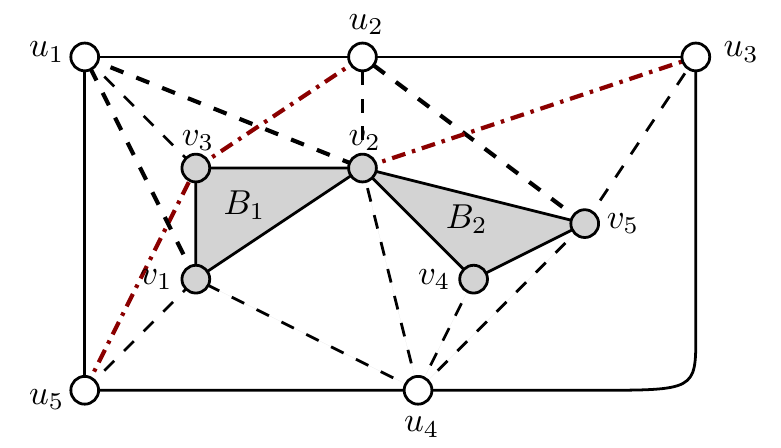}}
    \end{minipage}
    \hfill
    \begin{minipage}[b]{.40\textwidth}
        \centering
        \subfloat[\label{fig:fig_crossing_forward_after}{}] 
        {\includegraphics[width=\textwidth,page=2]{crossings}}
    \end{minipage}
    \caption{%
    (a)~Red edges indicate forward binding edges involved in crossings.
    (b)~Linear order and assignment  of edges to pages; The fat edge is assigned to $p_1'$; the dashed-dotted ones to $p_2'$ and $p_3'$.}
    \label{fig:fig_crossing_forward}
\end{figure}

In the case where $v=v_i$ and $v'=v_{i+1}$ for some $i<t$ (refer to
$e=(u_1,v_2)$ and $e'=(u_2,v_3)$ in
Fig.~\ref{fig:fig_crossing_forward_before}), we have that $B=B'=P$.
Suppose w.l.o.g. that $P$ is placed on $p_2$. We claim that
$e'=(u',v_{i+1})$ and $(u',v_i)$ form a strong pair. By
Lemma~\ref{lem:lem_crossing_quadrangle}, edge $(u',v_i)$ is a forward
edge of $G_P$ and is therefore placed on page $p_3$. Since vertices
$v_i$ and $v_{i+1}$ are consecutive along the spine, edges $e'$ and
$(u',v_i)$ form a strong pair and the lemma follows.

\item In this case (refer to $e=(u_1,v_1)$ and $e'=(u_5,v_3)$ in
Fig.~\ref{fig:fig_crossing_forward_before}), we have that $P$ is a
root-block and by Lemma~\ref{lem:lem_crossing_quadrangle} edge
$(u,u')\in E[G_P]$. Let $e'_1$ and $e'_2$ be two edges that are
assigned to the new page $p'_1$. We claim that they do not cross.
Assume that $e'_i=(u'_i,v'_i)$ crosses in $G$ with $e_i=(u_i,v_i)$
for $i=1,2$. Then, $(u_i,u'_i)\in E[G_P]$ is level edge of $L_0$,
for $i=1,2$. We assume w.l.o.g. that $u_2$ and $u'_2$ are not
between $u_1$ and $u'_1$ along the spine (if this was not true for
neither pair of vertices, then they would cross in $G$ and by
Lemma~\ref{lem:lem_crossing_inner_chord_auxiliary} one of them would
not belong to $G_P$). By
Lemma~\ref{lem:lem_crossing_chord_separating} edge $e'_1$ can't
cross with $e'_2$.
\end{inparaenum}
\end{proof}

Finally, we consider Case C.\ref{case:hop_binding} where $e'=(x,y)$
is a $2$-hop of level $L_1$ and $e=(u,z)$ is a binding edge of
$G_P$, where $x,y,z\in L_1$ and $u \in L_0$. Let $x$, $z$ and
$y$ belong to blocks $B_x$, $B_z$ and $B_y$, resp., that are not
necessarily distinct. By Lemma~\ref{lem:lem_crossing_quadrangle}, $x
\rightarrow z \rightarrow y$ is a path in $L_1$. So, $B_x$ and $B_y$
are at distance at most two on the block-tree of $G$. If $x$ and $y$
belong to the same block (that is, $B_x=B_z=B_y$), then $e$ is
called \emph{simple 2-hop}; see Fig.~\ref{fig:fig_crossing_simple}.
Suppose w.l.o.g. that $B_x$ precedes $B_y$ in the pre-order
traversal of the block-tree of $G$. Then, there exist two cases
depending on whether $B_x$ is an ancestor of $B_y$ on the
block-tree. If this is not the case, then $B_x$ and $B_y$ have the
same parent-block, say $B_p$. In this case, $e'$ is called
\emph{bridging $2$-hop}; see Fig.~\ref{fig:fig_crossing_bridging}.
Suppose now that $B_x$ is an ancestor of $B_y$. Then, the path $x
\rightarrow z \rightarrow y$ contains the leader of $B_y$, which is
either $x$ or $z$. By Lemma~\ref{lem:lem_crossing_quadrangle},
$(u,x)$ $(u,z)$ and $(u,y)$ exist in $G$. So, $u$ is either
$u_l(B_y)$ or $u_f(B_y)$. In the first subcase, $e'$ is called
\emph{forward $2$-hop}; see Fig.~\ref{fig:fig_crossing_forward}. In
the second subcase, since $B_x$ is ancestor of $B_y$ and the two
blocks are at distance at most two, if $B_x$ is the parent-block of
$B_y$, then $e'$ is called \emph{backward $2$-hop}; see
Fig.~\ref{fig:fig_crossing_backward}. Finally, if $B_x$ is the
grand-parent-block of $B_y$, then $e'$ is called \emph{long
$2$-hop}; see Fig.~\ref{fig:fig_crossing_long}.

\begin{figure}[t]
  \centering
  \begin{minipage}[b]{.30\textwidth}
    \centering
    \subfloat[\label{fig:fig_crossing_simple}{Simple $2$-hop}]
    {\includegraphics[width=\textwidth,page=1]{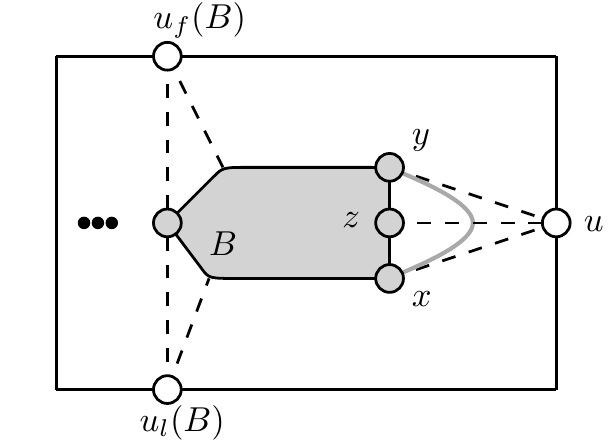}}
  \end{minipage}
  \begin{minipage}[b]{.30\textwidth}
    \centering
    \subfloat[\label{fig:fig_crossing_bridging}{Bridging $2$-hop}]
    {\includegraphics[width=\textwidth,page=2]{figures/2hops}}
  \end{minipage}
  \begin{minipage}[b]{.30\textwidth}
    \centering
    \subfloat[\label{fig:fig_crossing_forward}{Forward $2$-hop}]
    {\includegraphics[width=\textwidth,page=3]{figures/2hops}}
  \end{minipage}
  \begin{minipage}[b]{.30\textwidth}
    \centering
    \subfloat[\label{fig:fig_crossing_backward}{Backward $2$-hop}]
    {\includegraphics[width=\textwidth,page=4]{figures/2hops}}
  \end{minipage}
  \begin{minipage}[b]{.30\textwidth}
    \centering
    \subfloat[\label{fig:fig_crossing_long}{Long $2$-hop}]
    {\includegraphics[width=\textwidth,page=5]{figures/2hops}}
  \end{minipage}
  \caption{Different types of $2$-hops (drawn in gray).}
  \label{fig:fig_crossing_2hops}
\end{figure}

\begin{lemma}[Case C.\ref{case:hop_binding}]
All crossing $2$-hops can be assigned to seven pages in total.
\label{lem:two_hops}
\end{lemma}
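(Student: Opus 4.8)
The plan is to eliminate, page by page, the only two obstructions that a crossing $2$-hop can create. Recall that $2$-hops pairwise avoid crossing in $\Gamma(G)$ and that each $2$-hop crosses \emph{only} a binding edge of $G_P$; consequently, once the $2$-hops are placed on pages of their own, the sole remaining difficulty is that two $2$-hops may \emph{interleave} along the spine even though they do not cross in the drawing, because the linear order produced by Yannakakis' algorithm relinearises the block tree. So the whole statement reduces to a colouring problem: distribute the $2$-hops among a constant number of pages so that interleaving $2$-hops always receive distinct pages. Before analysing interleavings I would record the geometry common to every crossing $2$-hop $e'=(x,y)$ crossing a binding edge $e=(u,z)$: by Lemma~\ref{lem:lem_crossing_quadrangle} the four vertices $\{u,x,y,z\}$ induce a $K_4$, so the path $x\rightarrow z\rightarrow y$ and the binding edges $(u,x)$, $(u,y)$ all belong to $G_P$, and Lemma~\ref{lem:lem_crossing_ordering_1} then pins down the spine positions of $x$, $z$, $y$ and of the apex $u$. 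This positional information is precisely what makes the spine interval spanned by each $2$-hop explicit enough to detect interleavings.

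Next I would handle the five types (simple, bridging, forward, backward, long) separately, giving each type its own block of pages so that inter-type interleavings never need to be considered, and arranging the per-type budgets to sum to seven. For \emph{simple} $2$-hops ($B_x=B_z=B_y$), the consecutivity of $v_1,\dots,v_t$ on the spine (Lemma~\ref{lem:lem_crossing_ordering_1}.\ref{lem:cons_on_spine}) shows that such a $2$-hop spans three almost-consecutive positions and bypasses a single vertex; two such intervals interleave only when ``shifted by one'', so the induced conflict graph is a union of paths and is $2$-colourable. For \emph{bridging} $2$-hops I would use the common parent block $B_p$ together with Lemma~\ref{lem:lem_crossing_ordering_1}.\ref{lem:block_order} to confine both endpoints to the spine interval of $B_p$ and its children, again bounding the interleaving pattern by a constant. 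The \emph{forward} and \emph{backward} types, separated by whether $u=u_l(B_y)$ or $u=u_f(B_y)$, are treated exactly in the spirit of Cases~C.\ref{case:forward_forward}--C.\ref{case:forward_back}: I would show that $(x,y)$ forms a \emph{strong pair} with one of the $G_P$ edges $(x,z)$, $(z,y)$, $(u,x)$ or $(u,y)$, which both certifies that the $2$-hop never interleaves a planar-structure edge and keeps same-type $2$-hops mutually non-interleaving, so that each of these types fits on a single page. The five budgets are designed to total seven (for instance, two pages each for the simple and bridging types and one each for the forward, backward and long types), though the exact split is dictated by the per-type interleaving analysis.

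The hard part, as with the preceding cases, is the \emph{long} $2$-hop of Fig.~\ref{fig:fig_crossing_long}, where $B_x$ is the grand-parent block of $B_y$. Here $x$ and $y$ lie two levels apart in the block tree, so the spine interval of $(x,y)$ is the widest and can straddle the intervals of the intermediate block as well as the positions of block leaders, which are placed next to their dominators rather than inside their block's spine run. I expect the bulk of the work to be in verifying that two long $2$-hops, and a long $2$-hop together with any co-page $2$-hop, cannot have alternating endpoints; this is exactly where the left-to-right ordering of Lemma~\ref{lem:lem_crossing_ordering_1}.\ref{lem:lem_crossing_basics} and the separation-pair structure of Lemma~\ref{lem:lem_crossing_chord_separating} must be invoked to forbid the forbidden alternation. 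Once the long case is controlled, the constant per-type bounds combine into a valid assignment of all crossing $2$-hops to seven pages.
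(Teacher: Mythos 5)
Your decomposition into the paper's five types (simple, bridging, forward, backward, long) is exactly right, but the per-type page budgets you assert are wrong, and the error is not mere bookkeeping: your split $2+2+1+1+1$ sums to seven only because you under-budget the two hardest types. Backward $2$-hops cannot all be placed on a single page: two backward $2$-hops $e_1'=(x_1,y_1)$, $e_2'=(x_2,y_2)$ genuinely can interleave along the spine, namely when $B_{x_2}=B_{y_1}$, i.e., when the parent block of one is the block containing the lower endpoint of the other; the paper resolves this by charging each backward $2$-hop to its parent block and $2$-colouring by the parity of the block's level in the block-tree (Lemma~\ref{lem:lem_crossing_2hops_backward}), which costs two pages. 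The same happens for long $2$-hops, which can be in conflict both directly ($B_{x_2}=B_{z_1}$) and indirectly ($B_{x_2}=B_{y_1}$); the paper needs the covering argument of Lemma~\ref{lem:lem_crossing_2hops_long_auxiliary} plus an induction over the block-tree to fit them on two pages (Lemma~\ref{lem:lem_crossing_2hops_long}). With correct counts your five disjoint blocks of pages would cost $2+2+1+2+2=9$, not seven. Moreover, the existence of interleaving backward pairs also kills your proposed mechanism for that type: if every backward $2$-hop formed a strong pair with a $G_P$ edge on its page, no two of them could interleave, so such strong pairs cannot exist in general; the paper's forward-case argument accordingly goes through trails (Lemmas~\ref{lem:lem_crossing_2hops_forward_auxiliary_1} and~\ref{lem:lem_crossing_2hops_forward_auxiliary_2}), not strong pairs.

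The missing idea that restores the total to seven is the paper's treatment of simple $2$-hops: they need \emph{zero} new pages. For a simple $2$-hop $(x,y)$ the endpoints and the bypassed vertex $z$ are consecutive on the spine in the order $x,z,y$ (Lemma~\ref{lem:lem_crossing_ordering_1}.\ref{lem:lem_crossing_basics}), so any co-page $2$-hop interleaving $(x,y)$ would have to use $z$ as an endpoint and would then cross $(x,y)$ in the $1$-planar drawing itself; a contradiction (Lemma~\ref{lem:lem_crossing_2hops_simpe}). Hence simple $2$-hops can be dropped onto any page already carrying $2$-hops, and the budget becomes $0+2+1+2+2=7$. Your ``shifted by one'' path-colouring for simple $2$-hops, besides being unproven, concedes two pages where none are needed, and that slack is exactly what the backward and long cases consume. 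As written, your plan would break at those two cases.
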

\begin{proof}
In high level description, one can prove that all
simple $2$-hops can be embeded in any page that contains $2$-hops;
see Lemma~\ref{lem:lem_crossing_2hops_simpe}. All bridging $2$-hops
can be embedded in two new pages; see
Lemma~\ref{lem:lem_crossing_2hops_bridging}. Forward $2$-hops
can be embedded in one new page; see
Lemma~\ref{lem:lem_crossing_2hops_forward}. And, finally, backward
and long $2$-hops  can be embedded in two new pages each; see
Lemma~\ref{lem:lem_crossing_2hops_backward} and
\ref{lem:lem_crossing_2hops_long}, respectively. Summing up the
above, yields a total of seven pages for all $2$-hops.
\end{proof}

In the following, we show how to cope with each of the
aforementioned cases described above. In particular, let $e'=(x,y)$
be a $2$-hop of level $L_1$ that crosses a binding edge $e=(u,z)$ of
the underlying planar structure $G_P$. Then, $x,y,z\in L_1$ and $u
\in L_0$. Let $x$, $z$ and $y$ belong to blocks $B_x$, $B_z$ and
$B_y$, respectively. The different types of $2$-hops are summarized
as follows:

\begin{enumerate}
\item Vertices $x$ and $y$ belong to the same block: Simple $2$-hop
\item Vertices $x$ and $y$ belong to different blocks, say $B_x$ and $B_y$ respectively
\begin{enumerate}[2.1.]
\item Blocks $B_x$ and $B_y$ have the same parent-block $B_p$: Bridging $2$-hop
\item Block $B_x$ is an ancestor of $B_y$
\begin{enumerate}[2.2.1.]
\item Vertex $u$ is $u_l(B_y)$: Forward $2$-hop
\item Vertex $u$ is $u_f(B_y)$
\begin{enumerate}[2.2.2.1.]
\item $B_x$ is the parent-block of $B_y$: Backward $2$-hop
\item $B_y$ is the grand-parent-block of $B_y$: Long $2$-hop
\end{enumerate}
\end{enumerate}
\end{enumerate}
\end{enumerate}

Let $B$ be a block of level $L_1$ and assume that
$v_0,v_1,\dots,v_t$ are the vertices of $B$ as they appear in an
counterclockwise traversal of the boundary of $B$ starting from the
leader of $B$, say $v_0$; that is $v_0=\ell(B)$. In our proofs, we
use the notion of the \emph{trail} of inner vertex $v_i$, denoted by
$tr(v_i)$, which is recursively defined as follows. If $v_i$ is
identified by the first inner vertex, then $tr(v_i)=v_i$.
Otherwise, $tr(v_i)=tr(\ell(B)) \rightarrow v_1 \rightarrow \ldots
\rightarrow v_i$, $i=1,2,\ldots,t$. Intuitively, the trail of inner
vertex $v_i$ is the path that starts from the first inner vertex and
ends to $v_i$, which
\begin{inparaenum}[(i)] 
\item consists exclusively of level one vertices, and, 
\item traverses each intermediate block from the root block towards
$B$ always in counterclockwise direction. 
\end{inparaenum} 
In Fig.~\ref{fig:yannakakis-input}, the trail of vertex $v_9$ is
$tr(v_9) = v_1 \rightarrow v_2 \rightarrow v_3 \rightarrow v_6
\rightarrow v_8 \rightarrow v_9$ The \emph{trail of a block} is the
trail of its leader.
The following lemma follows from Yannakakis algorithm~\cite{Yan89}.

\begin{lemma}
Let $B$ be a block with leader $v_0$. Consider the trail $tr(B)$ of
$B$, vertices $u_f(B)$ and $u_l(B)$ as in
Fig.~\ref{fig:yannakakis_crossing_basic}. Then $G$ is partitioned
into regions, where the trail $tr(B)$ belongs to region $I$. Then,
along the spine vertices of region $I$ are to the left of vertices of $B$,
vertices of $B$ are to the left of vertices of region $II$ and
vertices of region $II$ are to the left of vertices of region $III$.
\label{lem:lem_crossing_basic_2} 
\end{lemma}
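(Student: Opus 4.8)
The plan is to deduce all three comparisons directly from the linear order produced by Yannakakis' algorithm, with Lemma~\ref{lem:lem_crossing_ordering_1} doing the heavy lifting. First I would fix $u_i=u_f(B)$ and $u_j=u_l(B)$; by Lemma~\ref{lem:lem_crossing_ordering_1}.\ref{lem:diff_first_last} they are distinct, and as the smallest- and largest-indexed vertices of $N(B)$ we may take $i<j$. Since $u_f(B)=dom(B)$, block $B$ is embedded immediately after $u_i$ and, by Lemma~\ref{lem:lem_crossing_ordering_1}.\ref{lem:cons_on_spine}, its vertices $v_1,\dots,v_t$ occupy consecutive spine positions. For the three regions I take the partition of $\Gamma(G)$ indicated in Fig.~\ref{fig:yannakakis_crossing_basic}: region $I$ is the side of the trail $tr(B)$ containing $u_1$ (so it contains the whole trail, ending at $v_0=\ell(B)$); region $II$ consists of the descendant blocks of $B$ together with the outer vertices $u_{i+1},\dots,u_j$ on the path $P[u_f(B)\to u_l(B)]$, i.e.\ the part of $G[B]$ lying to the right of $B$; and region $III$ collects the remaining vertices, all lying beyond $u_l(B)$.

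The two comparisons ``$B$ before $II$'' and ``$II$ before $III$'' are then almost immediate. By Lemma~\ref{lem:lem_crossing_ordering_1}.\ref{lem:lem_crossing_basics} the vertices $v_1,\dots,v_t,u_{i+1},\dots,u_j$ occur in exactly this left-to-right order, so block $B$ sits to the left of the whole stretch $u_{i+1},\dots,u_j=u_l(B)$. By Lemma~\ref{lem:lem_crossing_ordering_1}.\ref{lem:block_order} the descendant part of region $II$ is precisely the block-subtree rooted at $B$ (minus $B$ itself); each such descendant is dominated by an outer vertex among $u_i,\dots,u_j$ and, being deeper in the block-tree than $B$, is embedded after block $B$ yet before the successor $u_{j+1}$ of $u_l(B)$. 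Hence $G[B]$ fills a contiguous spine interval that starts at $v_1$ and ends before $u_{j+1}$, which places all of region $II$ strictly to the right of $B$ and strictly to the left of region $III$.

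The remaining, and principal, point is to show that the entire trail $tr(B)$ lies in region $I$, i.e.\ to the left of $B$ on the spine. I would argue by induction along the block-tree path from the root block down to $B$. The endpoint $v_0=\ell(B)$ is placed before $v_1$ because a leader is assigned to the highest block containing it, namely the parent block of $B$, which precedes $B$ in the top-to-down block order. For the inductive step, let $B'$ be an ancestor block on the trail and $B''$ the next block down; then $B\in G[B']$, so Lemma~\ref{lem:lem_crossing_ordering_1}.\ref{lem:block_order} applies to $B'$, and the portion of $tr(B)$ inside $B'$ is the counterclockwise arc of the boundary of $B'$ from $\ell(B')$ to $\ell(B'')$, all of whose vertices are among the consecutive block vertices of $B'$. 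Applying Lemma~\ref{lem:lem_crossing_ordering_1}.\ref{lem:lem_crossing_basics} to $B'$, and using $u_f(B')\le u_f(B)=u_i$, these vertices precede block $B$ on the spine; chaining this over all ancestors places the whole trail to the left of $B$, completing region $I$.

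The step I expect to be the main obstacle is exactly this last induction: certifying that the counterclockwise traversal through each ancestor block always lands to the left of $B$, despite the trail ``changing blocks'' at every leader. The delicate bookkeeping is to check that each such switch is compatible with the left-to-right order of Lemma~\ref{lem:lem_crossing_ordering_1}.\ref{lem:lem_crossing_basics}, and, in parallel, to confirm that no block foreign to $G[B]$ is interleaved inside its spine interval (so that region $II$ really does lie wholly between $B$ and region $III$); both are settled cleanly by observing that the recursive embedding rule --- block immediately after its dominator, vertices listed counterclockwise from the leader --- reproduces verbatim the orderings asserted in Lemma~\ref{lem:lem_crossing_ordering_1}.
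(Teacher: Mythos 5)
First, a point of calibration: the paper never actually proves this lemma --- it is stated with the remark that it ``follows from Yannakakis algorithm'' --- so there is no paper proof to match, and your argument has to stand entirely on its own. Your overall strategy (read everything off the placement rules: each block right after its dominator, top-down tie-breaking, counterclockwise listing, combined with Lemma~\ref{lem:lem_crossing_ordering_1}) is the right one, and two of your steps are sound: the induction showing that every ancestor block of $B$, and hence the whole trail, precedes $B$ on the spine, and the observation that descendants of $B$ dominated by $u_f(B)$ follow $B$ by the top-down rule while those dominated by later outer vertices follow $B$ by Lemma~\ref{lem:lem_crossing_ordering_1}.\ref{lem:lem_crossing_basics}.

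However, there are genuine gaps, and they concern exactly the vertices you leave out of the discussion (write $u_i=u_f(B)$, $u_j=u_l(B)$ as you do). First, region $I$ is more than the trail: it also contains whole blocks hanging off the left side of the trail --- e.g., taking $B=B_4$ in Fig.~\ref{fig:yannakakis-input}, block $B_3$ lies in region $I$ although it is neither on the trail nor an ancestor of $B_4$ --- as well as the non-trail vertices of ancestor blocks and the outer vertices $u_1,\dots,u_{i-1}$ with their blocks. Your induction happens to cover the ancestors, but for the remaining region-$I$ blocks one still has to argue that all their outer neighbours lie in $\{u_1,\dots,u_i\}$ and that none of them can be dominated by $u_i$ itself (by Lemma~\ref{lem:lem_crossing_ordering_1}.\ref{lem:diff_first_last} such a block would have $u_f=u_l=u_i$), so that each is embedded before $u_i$ and hence before $B$. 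Second, and more seriously, the separation ``region $II$ before region $III$'' is assumed rather than proved: you \emph{define} region $III$ as ``the remaining vertices, all lying beyond $u_l(B)$'', but that location is precisely the content of the claim. Region $III$ contains blocks attached at trail vertices (e.g.\ $B_5$, attached at $v_2$, again with $B=B_4$), and placing them after $u_j$ requires showing that all their outer neighbours lie in $\{u_j,\dots,u_k\}$ --- a planarity/separation argument the proposal never makes; note that whether a block attached at a trail vertex falls in region $I$ (like $B_3$) or region $III$ (like $B_5$) is exactly what this argument must decide. Moreover, your bound on region $II$ (descendants dominated by ``$u_i,\dots,u_j$'', hence ``before $u_{j+1}$'') is too weak to yield the separation: a region-$III$ block may itself be dominated by $u_j$ and thus also lies between $u_j$ and $u_{j+1}$, where it could interleave with your region-$II$ blocks. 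The repair again uses Lemma~\ref{lem:lem_crossing_ordering_1}.\ref{lem:diff_first_last}: by Lemma~\ref{lem:lem_crossing_ordering_1}.\ref{lem:block_order} a descendant of $B$ lies inside the cycle $P[u_f(B)\rightarrow u_l(B)]\rightarrow\ell(B)\rightarrow u_f(B)$, so all its outer neighbours are on $P[u_f(B)\rightarrow u_l(B)]$ and it cannot be dominated by $u_j$; hence all of region $II$ is embedded strictly before $u_j$, while every region-$III$ block is dominated by some $u_m$ with $m\geq j$ and is embedded after $u_j$. Your closing remark that these points are ``settled cleanly'' by the embedding rule is precisely where the real work lies.
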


\begin{figure}[h]
    \centering
    \includegraphics[width=.4\textwidth,page=3]{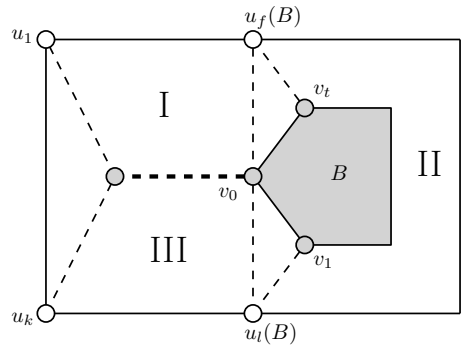}
    \caption{%
    Illustration of different regions I, II and III of
    Lemma~\ref{lem:lem_crossing_basic_2}; the trail of $B$ is
    drawn fat.}
    \label{fig:yannakakis_crossing_basic}
\end{figure}

%=================================================================
%\subsection{Simple $2$-hops}
%\label{app:simple}
%=================================================================

\begin{lemma}
Let edges $e$ and $e'$ of $G$ cross such that $e$ is a binding edge
of $G_P$ and $e'$ a simple $2$-hop of $G$. Then, if $e'$ is placed
on the same page as any other $2$-hop $e''$, then $e'$ and
$e''$ do not cross.
\label{lem:lem_crossing_2hops_simpe}
\end{lemma}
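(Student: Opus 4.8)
The plan is to reduce the statement to a claim about the linear order along the spine, using the standard fact that two edges assigned to the same page cross if and only if their endpoints interleave (alternate) along the spine. Thus it suffices to show that the endpoints of $e'$ and of any other $2$-hop $e''$ never interleave on the spine.

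First I would pin down where the simple $2$-hop $e'=(x,y)$ sits. Since $e'$ crosses the binding edge $e=(u,z)$, Lemma~\ref{lem:lem_crossing_quadrangle} yields a $K_4$ on $\{u,x,z,y\}$; in particular $(x,z)$ and $(z,y)$ are level edges of $L_1$, so $x \rightarrow z \rightarrow y$ is a path along the boundary of the common block $B=B_x=B_z=B_y$, with $x,z,y$ consecutive on that boundary. By Lemma~\ref{lem:lem_crossing_ordering_1}.\ref{lem:cons_on_spine} the non-leader vertices of $B$ are consecutive along the spine and appear in boundary order, so $x,z,y$ occupy three consecutive spine positions and $z$ is the \emph{unique} vertex lying strictly between $x$ and $y$ on the spine.

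Next I would use this to constrain any $2$-hop $e''$ that could cross $e'$. By the interleaving criterion, $e''$ crosses $e'$ on a common page only if exactly one endpoint of $e''$ lies strictly between $x$ and $y$ on the spine; since $z$ is the only such vertex, $e''$ must be incident to $z$, say $e''=(z,w)$ with $w$ strictly outside the interval $[x,y]$. As $e''$ is a $2$-hop it bypasses exactly one boundary vertex, which must be one of $z$'s two boundary neighbours $x$ or $y$; hence $w$ lies on the arc of $B$'s boundary separated from $z$ by the chord $(x,y)$. Consequently $e'=(x,y)$ and $e''=(z,w)$ interleave along the boundary of $B$ — that is, they are two $2$-hops that cross each other. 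This contradicts the property, established when $2$-hops were introduced, that no two $2$-hops cross (the same $K_4$ argument of Lemma~\ref{lem:lem_crossing_quadrangle} would force the edge $(x,w)$ into $G_P$, thereby removing $z$ and $y$ from the block-tree boundary). Therefore no such $e''$ exists and the lemma follows.

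The main obstacle is this final contradiction step: one has to verify precisely that $w$ falls on the arc opposite $z$ so that $e'$ and $e''$ genuinely interleave, and then invoke the non-crossing of $2$-hops correctly. Two degenerate configurations also warrant a separate but analogous treatment: when $z=\ell(B)$ is the leader — in which case $e'=(x,y)$ spans the whole block on the spine, yet the same ``opposite-arc'' argument still applies — and when $z$ is a cut vertex shared by several blocks, where the role of the single block cycle is played by the block-tree boundary.
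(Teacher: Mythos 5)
Your proposal is correct and takes essentially the same route as the paper's own proof: both use Lemma~\ref{lem:lem_crossing_ordering_1} to place $x$, $z$, $y$ consecutively on the spine, deduce that any same-page crossing $2$-hop $e''$ must have $z$ as an endpoint, and reach a contradiction because $e'$ and $e''$ would then also cross in the $1$-planar drawing of $G$. Your extra justification of that final crossing (via the bypass-one-vertex property and the opposite arc) and your flagged degenerate cases correspond to steps the paper handles by bare assertion (it simply states $\ell(B)\notin\{x,z,y\}$ and that the two $2$-hops ``would cross''), so the two arguments match in substance.
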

\begin{proof}
Let $e'=(x,y)$ where $x,y\in L_1$ and $e=(u,z)$, where $u$ is a
vertex of $L_0$ and $z$ a vertex of $L_1$. By definition of simple
$2$-hops, vertices $x$, $z$ and $y$ belong to the same block $B$.
Then, for the leader $\ell(B)$ of $B$, it holds that
$\ell(B)\notin\left\{x,z,y\right\}$. By
Lemma~\ref{lem:lem_crossing_ordering_1}.\ref{lem:lem_crossing_basics},
vertices $x$, $z$ and $y$ are consecutive along the spine and appear
in this order from left to right. If $e'=(x,y)$ is placed on a page
$p_h$ and crosses with another $2$-hop $e''$ on $p_h$, then $e''$
has $z$ as one endpoint. Hence, $e'$ and $e''$ would cross in the
$1$-planar embedding of $G$; a contradiction.
\end{proof}

%=================================================================
%\subsection{Bridging $2$-hops}
%\label{app:bridging}
%=================================================================

Recall that, by definition, if $e'=(x,y)$ is a bridging $2$-hop,
then $B_x$ and $B_y$ are at distance two, and they have the same
parent-block $B_P$ and a common leader, say $\ell$. Since $\ell$ is
a cut-vertex of $L_1$, which separates blocks $B_x$ and $B_y$, any
$x-y$ path on $L_1$ must go through $\ell$. Hence, for the path $x
\rightarrow z \rightarrow y$, we have that $z=\ell$. Also, from the
assumption that $B_x$ precedes $B_y$ along the spine and by
Lemma~\ref{lem:lem_crossing_quadrangle}, it follows that
$u=u_l(B_x)=u_f(B_y)$.

\begin{lemma}
Let $B$ be a block of $G$ with children-blocks $B_1$, $B_2$, \dots,
$B_s$, where for any $i<j$ all vertices of $B_i$ appear before all
vertices of $B_j$ along the spine. Then:
\begin{inparaenum}[\emph{(}i\emph{)}]
\item there is at most one bridging $2$-hop of $G$ with one endpoint on $B_1$,
\item there are at most two bridging $2$-hops of $G$ with one endpoint on $B_i$ for $i=2,\dots,s-1$ and
\item there is at most one bridging $2$-hop of $G$ with one endpoint on $B_s$.
\end{inparaenum} 
\label{lem:lem_crossing_2hops_bridging_auxiliary}
\end{lemma}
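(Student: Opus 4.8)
The plan is to bound, for each child-block $B_i$, the number of incident bridging $2$-hops by pinning down exactly which binding edge of $G_P$ each such $2$-hop is forced to cross. Recall from the discussion preceding the lemma that a bridging $2$-hop $e'=(x,y)$, with $B_x$ preceding $B_y$ along the spine, has $x\in B_x$ and $y\in B_y$, where $B_x$ and $B_y$ share a common leader $\ell$, and that $e'$ crosses the binding edge $e=(u,z)$ of $G_P$ with $z=\ell$ and $u=u_l(B_x)=u_f(B_y)$. Since $B_x$ and $B_y$ are children of $B$ with a common leader, that leader is precisely the cut-vertex at which they attach to $B$; in particular $\ell=\ell(B_x)=\ell(B_y)$.

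First I would fix a child $B_i$ and split the bridging $2$-hops incident to it into two classes according to the role $B_i$ plays: the class $R_i$ of $2$-hops for which $B_i$ is the earlier block (so $B_i=B_x$ and the partner block lies to the right of $B_i$ along the spine), and the class $L_i$ for which $B_i$ is the later block ($B_i=B_y$, partner to the left). Every bridging $2$-hop with an endpoint on $B_i$ belongs to $R_i\cup L_i$, since $B_i\in\{B_x,B_y\}$ and the precedence convention $B_x\prec B_y$ decides which. I then claim each class has at most one element. Indeed, for every $2$-hop in $R_i$ the crossed binding edge is $(u,z)=(u_l(B_i),\ell(B_i))$: its outer endpoint equals $u_l(B_x)=u_l(B_i)$, a quantity determined by $B_i$ alone, and its inner endpoint is the common leader $\ell=\ell(B_i)$. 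Thus all members of $R_i$ cross one and the same edge of $G_P$, and since $\Gamma(G)$ is $1$-planar this edge is crossed at most once, giving $|R_i|\le 1$. Symmetrically, every $2$-hop in $L_i$ crosses $(u_f(B_i),\ell(B_i))=(dom(B_i),\ell(B_i))$, again a single fixed edge, so $|L_i|\le 1$. (These two edges are distinct, as $u_f(B_i)\neq u_l(B_i)$ by Lemma~\ref{lem:lem_crossing_ordering_1}.\ref{lem:diff_first_last}, which is reassuring but not needed for the bound.) Hence $B_i$ carries at most $|R_i|+|L_i|\le 2$ bridging $2$-hops, establishing the middle statement.

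Next I would treat the two extremal children. For $B_i$ to lie in $L_i$ (resp.\ $R_i$) there must exist a sibling child of $B$ that precedes (resp.\ follows) $B_i$ along the spine and shares its leader. Since $B_1$ is the leftmost child, no child of $B$ precedes it, so $L_1=\emptyset$ and $B_1$ carries at most one (necessarily rightward) bridging $2$-hop; symmetrically $R_s=\emptyset$, so $B_s$ carries at most one. This yields the first and third statements.

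The genuine obstacle is exactly the identification of the crossed edge as a quantity depending only on $B_i$ and its role — that is, verifying $u_l(B_x)$ and $u_f(B_y)$ are determined by $B_x$ and $B_y$ alone and do not vary with the choice of partner block, and that the common leader of any admissible sibling pair is precisely $\ell(B_i)$. Both follow from the definitions of $u_l(\cdot)$, $u_f(\cdot)=dom(\cdot)$, $\ell(\cdot)$ together with Lemma~\ref{lem:lem_crossing_quadrangle}; once they are in place, $1$-planarity (each edge crossed at most once) does all the counting.
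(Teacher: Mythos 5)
Your proof is correct, but its organization genuinely differs from the paper's. The paper proves a structural claim first: if a bridging $2$-hop had endpoints on children $B_i$ and $B_j$ with $j>i+1$, it would have to cross all of the binding edges $(\ell,u_l(B_i)), (\ell,u_l(B_{i+1})), \dots, (\ell,u_l(B_{j-1}))$, i.e., at least two edges of $G$, contradicting $1$-planarity; hence partner blocks must be \emph{consecutive} siblings, and the counts are then read off. You skip consecutiveness entirely: you classify the $2$-hops at $B_i$ by whether the partner block lies to its left or right, observe that all members of one class are forced to cross the single fixed binding edge $(u_f(B_i),\ell(B_i))$ and all members of the other the single fixed edge $(u_l(B_i),\ell(B_i))$, and then apply the ``crossed at most once'' bound to that binding edge rather than to the $2$-hop. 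Both arguments rest on the same two ingredients (the recap facts $z=\ell$ and $u=u_l(B_x)=u_f(B_y)$, plus $1$-planarity), but they apply the crossing bound to different edges, and your version is arguably the more complete proof of the stated counts: consecutiveness alone does not bound the number of $2$-hops joining one fixed pair of consecutive siblings, so the paper's closing ``the lemma follows'' implicitly relies on precisely your fixed-edge observation. What the paper's route buys in exchange is the stronger by-product that the two blocks carrying a bridging $2$-hop are consecutive children sharing a leader, a fact invoked later in the proof of Lemma~\ref{lem:lem_crossing_2hops_bridging} (where it is even attributed to the ``definition''); your argument does not deliver that, though the lemma as stated does not require it.
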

\begin{proof}
By definition, if $e'=(x,y)$ is bridging $2$-hop, then $B_x$ and
$B_y$ have the same parent-block $B_P$ and the same leader, say
$\ell$. Therefore, if $e'$ has one endpoint on one of the
children-blocks $B_i$ of $B$, then its other endpoint is also on
another child-block $B_j$ of $B$ ($i\neq j$). By
Lemma~\ref{lem:lem_crossing_ordering_1}, if $j>i+1$, i.e. $B_i$ and
$B_j$ are not consecutive child-blocks of $B$, then $e'$ would cross
with edges $(\ell,u_l(B_i))$, $(\ell,u_l(B_{i+1}))$, \dots,
$(\ell,u_l(B_{j-1}))$, i.e., with at least two edges of $G$,
contradicting $1$-planarity of $G$. So, $j=i-1$ or $j=i+1$ and
the lemma follows.
\end{proof}

\begin{lemma}
All bridging $2$-hops of $G$ can be placed on two new pages without
crossings.
\label{lem:lem_crossing_2hops_bridging}
\end{lemma}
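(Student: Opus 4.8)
The plan is to colour each bridging $2$-hop by the parity of the index of the child-block it connects, and then to argue, via the nesting structure of Yannakakis' linear order, that this two-colouring never assigns two crossing hops to the same page. First I would fix a parent block $B$ with children $B_1,\dots,B_s$ ordered along the spine exactly as in Lemma~\ref{lem:lem_crossing_2hops_bridging_auxiliary}, so that all vertices of $B_i$ precede all vertices of $B_j$ whenever $i<j$. By that lemma every bridging $2$-hop of $G$ joins two \emph{consecutive} children, and I would first note that at most one bridging $2$-hop joins a fixed pair $B_a,B_{a+1}$: every such hop crosses the same binding edge $(u_l(B_a),\ell)$, where $\ell$ is the common leader, and by $1$-planarity this edge is crossed at most once. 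Hence I may write $h_a$ for the unique bridging $2$-hop between $B_a$ and $B_{a+1}$ (when present) and assign $h_a$ to page $1$ if $a$ is even and to page $2$ if $a$ is odd. Both pages are new, so the only thing to check is that same-page hops are crossing-free.

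The easy half concerns two hops of the \emph{same} parent $B$. By Lemma~\ref{lem:lem_crossing_quadrangle} (together with the standing assumption that a block carries no interior level edge, so each block is a chordless cycle) the endpoints of $h_a$ are the two leader-neighbours $v_1$ or $v_t$ of $B_a$ and of $B_{a+1}$; in particular they are genuine own-vertices of those blocks. If $h_a$ and $h_b$ get the same colour then $|a-b|\ge 2$, say $b\ge a+2$; the right endpoint of $h_a$ lies in $B_{a+1}$, the left endpoint of $h_b$ lies in $B_b$, and since all of $B_{a+1}$ precedes all of $B_b$ on the spine the two intervals are disjoint and cannot cross. Only the consecutive pair $h_a,h_{a+1}$ shares the block $B_{a+1}$, and the parity rule always separates these onto different pages.

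The main work is to show that bridging $2$-hops of \emph{different} parents never cross, so the global parity rule is consistent. Here I would use that, restricted to the inner vertices, Yannakakis' placement is laminar: combining Lemma~\ref{lem:lem_crossing_ordering_1}(\ref{lem:lem_crossing_basics}) and (\ref{lem:block_order}) with the linear-order rule, the inner vertices of each block's subtree occupy a contiguous interval, and the own-vertices $v_1,\dots,v_t$ of a block precede \emph{all} of its descendants. Now take hops $h$ of $B$ and $h'$ of $B'$. If $B$ and $B'$ are incomparable their subtrees are disjoint intervals and so are the hops. If $B'$ is a proper descendant of $B$, then $h'$ lies inside the subtree of a single child $B_c$ of $B$, and both endpoints of $h'$ are descendants of $B_c$, hence strictly to the right of every own-vertex of $B_c$; comparing $c$ with the children $B_a,B_{a+1}$ joined by $h$ (whose endpoints are own-vertices of $B_a,B_{a+1}$) shows that for $c=a$ the hop $h'$ is nested inside the interval of $h$, for $c=a+1$ it lies entirely to the right of it, and for $c\notin\{a,a+1\}$ it is disjoint from it. The case $B'$ an ancestor of $B$ is identical with the roles exchanged.

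Putting the three observations together: same-parent same-colour hops are disjoint by the parity argument, and different-parent hops never cross at all, so each of the two pages is crossing-free and all bridging $2$-hops fit on two new pages. The delicate step, which I would write out most carefully, is the laminarity claim of the third paragraph: because blocks are placed by their dominators, sibling subtrees could a priori interleave on the spine, and it is precisely the fact that a block's own vertices always precede its descendants (so that a hop endpoint sitting on a block is never straddled by a descendant's hop) that rules this out.
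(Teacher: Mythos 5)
Your proof is correct and takes essentially the same route as the paper's: both arguments reduce everything to the fact that two bridging $2$-hops can conflict on the spine only when they attach to consecutive children of a common parent block (using the laminar structure of Yannakakis' order, i.e.\ Lemmas~\ref{lem:lem_crossing_basic_2} and~\ref{lem:lem_crossing_2hops_bridging_auxiliary}), and then $2$-colour the hops accordingly. Your explicit per-parent parity rule is just a concrete realization of the paper's $2$-colouring of its auxiliary conflict graph (a disjoint union of paths), so the difference is purely presentational.
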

\begin{proof}
Let $e_1'=(x_1,y_1)$ and $e_2'=(x_2,y_2)$ be two bridging $2$-hops
of $G$. W.l.o.g. assume that $x_1$, $x_2$, $y_1$, $y_2$ appear in
this order along the spine from left to right, i.e., edges $e_1'$
and $e_2'$ cannot be placed on the same page. Since $x_2$ appears
after vertex $x_1$, by Lemma~\ref{lem:lem_crossing_basic_2}, $x_2$
is either a vertex of $B_{x_1}$ with greater index than $x_1$, or
its block $B_{x_2}$ appears after block $B_{x_1}$.
Similarly, since $x_2$ appears before $y_1$ on the spine, by the
same lemma, $x_2$ is either a vertex of $B_{y_1}$ with
smaller index than $y_1$, or its block $B_{x_2}$ appears before
block $B_{y_1}$. By definition of bridging $2$-hops, 
$B_{x_1}$ and $B_{y_1}$ are consecutive children-blocks of a block
$B$ with the same leader. Hence, the only blocks that are between
$B_{x_1}$ and $B_{y_1}$ on the spine are descendant-blocks of
$B_{x_1}$ on the block-tree. Combining the above restrictions, we
distinguish three cases for $B_{x_2}$:
\begin{inparaenum}[(c.1)]
\item $B_{x_2}=B_{x_1}$ and $x_2$ appears after $x_1$,
\item $B_{x_2}$ is a descendant of $B_{x_1}$, and
\item $B_{x_2}=B_{y_1}$ and $x_2$ appears before $y_1$
\end{inparaenum}

In the first and third case, edges $e_1'$ and $e_2'$ are bridging
$2$-hops with one endpoint on the same block. In the second case,
again by definition, $B_{y_2}$ is also a descendant-block of
$B_{x_1}$ (since $B_{x_2}$ and $B_{y_2}$ have the same
parent-block). Then, $B_{y_2}$ appears before  $B_{y_1}$ on the
ordering of blocks; a contradiction. Hence, if two bridging $2$-hops
$e_1'$ and $e_2'$ cannot be placed on the same page, then they have
an endpoint on the same block.

Consider now an auxiliary graph where vertices are bridging $2$-hops
of $G$ and an edge exists if the bridging $2$-hops corresponding to
its endpoints cannot be placed on the same page, i.e., if the two
bridging $2$-hops have an endpoint on the same block in $G$.
By Lemma~\ref{lem:lem_crossing_2hops_bridging_auxiliary}, it follows
that the auxiliary graph consists of disjoint paths and is therefore
bipartite. We assign bridging $2$-hops that correspond to vertices
of the first (second, resp.) bipartition in the auxiliary graph to
first (second, resp.) page of the two available ones. It is clear
that bridging $2$-hops on the same page do not cross, concluding the
proof of the lemma.
\end{proof}

%=================================================================
%\subsection{Forward $2$-hops}
%\label{app:forward}
%=================================================================

Recall that, by definition, if $e'=(x,y)$ is a forward $2$-hop, then
$B_x$ is an ancestor of $B_y$ and vertex $u$ is $u_l(B_y)$. Also,
since $x \rightarrow z \rightarrow y$ is a path of $G$ and $x$, $y$
belong to different blocks, it follows that the index of $y$ on
$B_{y}$ is either $1$ or $2$.

\begin{lemma}
Let $v_1$ and $v_2$ be two vertices of $L_1$ such that $v_1$ appears
before $v_2$ on the spine. Let $v$ be the last common vertex of 
trails $tr(v_1)$ and $tr(v_2)$. Then, all vertices of $tr(v_1)$
after $v$ precede vertices of $tr(v_2)$ after $v$.
\label{lem:lem_crossing_2hops_forward_auxiliary_1} 
\end{lemma}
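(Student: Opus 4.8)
The plan is to reduce the statement to the two trails' first point of disagreement and then to localize each ``tail'' inside a predictable stretch of the spine. If $v\in\{v_1,v_2\}$, then one of the two tails is empty and the claim holds vacuously, so assume both tails are nonempty and let $a$ and $b$ be the successors of $v$ on $tr(v_1)$ and $tr(v_2)$, respectively; by the choice of $v$ we have $a\neq b$. Writing $B=B(v)$ and letting $w_0=\ell(B),w_1,\dots,w_t$ be its counterclockwise boundary with $v=w_j$, the recursive definition of the trail shows that each of $a,b$ is either the next boundary vertex $w_{j+1}$ of $B$ (the trail \emph{stays} in $B$) or the first vertex of a child-block of $B$ whose leader is $v$ (the trail \emph{descends}). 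Since $a\neq b$, at most one of the two trails can stay in $B$, so there are only three configurations to examine: both descend (into distinct children $C_1\neq C_2$ with common leader $v$), or exactly one stays while the other descends.

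First I would record two structural facts. \emph{Fact~A}: by Lemma~\ref{lem:lem_crossing_ordering_1}(\ref{lem:cons_on_spine}) and (\ref{lem:lem_crossing_basics}), the boundary vertices $w_1,\dots,w_t$ of $B$ are consecutive on the spine and precede every vertex of every descendant-block of $B$. \emph{Fact~B}: by Lemma~\ref{lem:block_order} together with the region decomposition of Lemma~\ref{lem:lem_crossing_basic_2}, the subtree $G[D]$ of any child-block $D$ of $B$ occupies a contiguous interval of the spine, these intervals are pairwise disjoint for distinct children, and --- crucially --- a child attached at a counterclockwise-\emph{later} boundary vertex of $B$ lies entirely to the \emph{left} of one attached at an earlier vertex. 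The direction in Fact~B is forced by the orientation reversal built into the definitions of $u_f$ and $u_l$: the smallest outer neighbour $u_f(B)$ sees $(w_0,w_t)$ while the largest neighbour $u_l(B)$ sees $(w_0,w_1)$, so the dominator $u_f(D)=dom(D)$ of a child $D$ decreases as the index of its attaching vertex increases, and the linear-order rule lays out descendant-blocks left to right by increasing dominator.

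With Facts~A and~B in hand the three configurations settle quickly. If both trails descend into $C_1\neq C_2$, then the part of $tr(v_1)$ after $v$ lies in $G[C_1]$ and that of $tr(v_2)$ in $G[C_2]$; these are disjoint contiguous intervals containing $v_1$ and $v_2$, so $v_1$ before $v_2$ forces $G[C_1]$ entirely to the left of $G[C_2]$. If $tr(v_1)$ stays while $tr(v_2)$ descends into a child $C$ at $w_j$, then $tr(v_1)$ after $v$ consists of boundary vertices $w_{j+1},\dots,w_m$ (left of all descendants by Fact~A) followed possibly by the subtree $G[D_1]$ of a child $D_1$ attached at some $w_m$ with $m>j$; both pieces lie left of $G[C]$, the boundary piece by Fact~A and $G[D_1]$ by Fact~B (as $D_1$ is attached later than $C$). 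The symmetric configuration --- $tr(v_1)$ descending while $tr(v_2)$ stays --- would by the same reasoning place $tr(v_2)$ after $v$ entirely to the left of $tr(v_1)$ after $v$, i.e.\ $v_2$ before $v_1$, contradicting the hypothesis, so it does not arise. In every surviving case each vertex of $tr(v_1)$ after $v$ precedes each vertex of $tr(v_2)$ after $v$.

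The main obstacle is Fact~B, and within it the \emph{direction} of the left-to-right order of a block's descendant subtrees: the spine does not follow a plain depth-first preorder of the block-tree but interleaves descendant-blocks with the outer vertices that dominate them, so the order must be read off from the dominators, whose monotone behaviour rests on the clockwise-versus-counterclockwise orientation mismatch captured by $u_f(B)$ and $u_l(B)$. Once this monotonicity, and the disjointness and contiguity supplied by Lemmas~\ref{lem:block_order} and~\ref{lem:lem_crossing_basic_2}, are in place, the case analysis is routine; only ties between children sharing a dominator and the degenerate case where $v$ is the first inner vertex require a little extra care, and both are absorbed into the same ordering argument.
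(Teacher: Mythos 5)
Your proof is correct, and it is considerably more complete than the paper's. The paper disposes of this lemma in four lines: it notes that $B=B(v)$ is the last common ancestor of $B(v_1)$ and $B(v_2)$ in the block-tree and then simply \emph{asserts} that, in the ordering of blocks along the spine, all blocks on the tree-path from $B$ to $B(v_1)$ precede all blocks on the tree-path from $B$ to $B(v_2)$ --- that is, it takes as given precisely the preorder-consistency of Yannakakis' layout that your Facts~A and~B establish. Your localization at the divergence vertex $v$, the three stay/descend configurations, and especially the dominator-monotonicity argument behind Fact~B (counterclockwise-later attachment vertex $\Rightarrow$ smaller dominator $\Rightarrow$ further left on the spine, since blocks are placed immediately after their dominators) supply exactly the justification the paper omits; you also correctly identify that the orientation reversal encoded in $u_f(B)$ seeing $(v_0,v_t)$ while $u_l(B)$ sees $(v_0,v_1)$ is the crux, and your direction agrees with the paper's own example (in Fig.~\ref{fig:yannakakis}, block $B_2$, attached at the counterclockwise-later vertex $v_3$, lies left of $B_5$, attached at $v_2$). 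What the paper's terseness buys is brevity; what your route buys is a proof actually derivable from Lemmas~\ref{lem:lem_crossing_ordering_1} and~\ref{lem:lem_crossing_basic_2}. Two observations that tighten your argument further. First, the ``ties between children sharing a dominator'' that you defer never occur: the operative notion of $N(B)$ and the dominator is via \emph{seeing} edges of the block rather than mere adjacency (visible in Fig.~\ref{fig:yannakakis}, where $B_3$ and $B_4$ share the leader $v_6$ yet have distinct dominators $u_2$ and $u_3$); under this definition the dominator of a child is the apex of the binding edge bounding its wedge on the low-index side, distinct children occupy wedges separated by binding edges of $G_P$, and hence distinct children of $B$ always have distinct dominators, with strict inequality across different attachment vertices. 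Second, your descend/descend case uses only disjointness and contiguity of $G[C_1]$ and $G[C_2]$, never the direction of Fact~B, so any residual tie-breaking could only have mattered in the stay/descend case, where the attachment vertices differ and strictness holds anyway.
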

\begin{proof}
Let $B_1=B(v_1)$, $B_2=B(v_2)$ and $B=B(v)$. Since $v$ is the last
common vertex of $tr(v_1)$ and $tr(v_2)$, it follows that $B$ is the
last common ancestor of $B_1$ and $B_2$ on the block-tree. Then, on
the ordering of blocks, all blocks on the path defined from $B$ to
$B_1$ appear before all blocks on the path from $B$ to $B_2$. Hence,
$tr(v_1)$ and $tr(v_2)$ are identical up to vertex $v$ and vertices
of $tr(v_1)$ after $v$ precede vertices of $tr(v_2)$ after $v$. 
\end{proof}

\begin{lemma}
Let $e'=(x,y)$ be a forward $2$-hop. Then, $tr(y) = tr(x)
\rightarrow z \rightarrow y$.
\label{lem:lem_crossing_2hops_forward_auxiliary_2} 
\end{lemma}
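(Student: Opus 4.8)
The plan is to factor the claimed equality through two shorter identities, $tr(z)=tr(x)\rightarrow z$ and $tr(y)=tr(z)\rightarrow y$, whose concatenation is exactly $tr(y)=tr(x)\rightarrow z\rightarrow y$. Throughout I will use the facts recorded just before the statement: for a forward $2$-hop the block $B_x$ is an ancestor of $B_y$ at distance at most two, $u=u_l(B_y)$, the leader $\ell(B_y)$ is either $x$ or $z$, and $y$ has index $1$ or $2$ on $B_y$. Write $v_0,v_1,\dots,v_t$ for the vertices of $B_y$ in counterclockwise order, with $v_0=\ell(B_y)$.

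First I would record the two surviving configurations. Since the binding edge $e=(u,z)$ and the $2$-hop $e'=(x,y)$ cross, Lemma~\ref{lem:lem_crossing_quadrangle} makes $\{u,x,z,y\}$ an interior-free $K_4$ whose retained diagonal is $(u,z)\in E(G_P)$; hence the triangles $u,x,z$ and $u,z,y$ are faces of $G_P$, so $u$ sees both $(x,z)$ and $(z,y)$. Recalling that $u=u_l(B_y)$ sees the first counterclockwise boundary edge $(v_0,v_1)$ of $B_y$ but not $(v_0,v_t)$ (the latter is seen by $u_f(B_y)$, and $u_f(B_y)\neq u_l(B_y)$ by Lemma~\ref{lem:lem_crossing_ordering_1}.\ref{lem:diff_first_last}), and combining this with $\ell(B_y)\in\{x,z\}$ and the index of $y$, exactly two configurations survive: either (A) $\ell(B_y)=x$, forcing $z=v_1$ and $y=v_2$, or (B) $\ell(B_y)=z$, forcing $y=v_1$. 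In either case the identity $tr(y)=tr(z)\rightarrow y$ is immediate from the recursive definition of the trail, reading $tr(y)=tr(\ell(B_y))\rightarrow v_1\rightarrow v_2=tr(z)\rightarrow y$ in (A) and $tr(y)=tr(\ell(B_y))\rightarrow v_1=tr(z)\rightarrow y$ in (B).

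It remains to establish $tr(z)=tr(x)\rightarrow z$. Case (A) is immediate, as $x=\ell(B_y)$ and $z=v_1$ give $tr(z)=tr(\ell(B_y))\rightarrow v_1=tr(x)\rightarrow z$. The hard part will be case (B). Here $z=\ell(B_y)$ is a cut vertex assigned to its parent block $B_z$; writing $w_0=\ell(B_z),w_1,\dots,w_s$ for $B_z$ in counterclockwise order and $z=w_k$ (with $k\geq1$, since $z$ is assigned to $B_z$), it suffices to prove $x=w_{k-1}$, for then $tr(z)=tr(w_{k-1})\rightarrow w_k=tr(x)\rightarrow z$ by the trail definition. From the face $u,x,z$ I already know that $(x,z)$ is a boundary edge of $B_z$ incident to $z$, so $x\in\{w_{k-1},w_{k+1}\}$; note that $x=w_{k+1}$ would instead give $tr(x)=tr(z)\rightarrow x$, so only $x=w_{k-1}$ is compatible with the statement. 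To fix the orientation I would analyse the counterclockwise rotation at the cut vertex $z$: the faces $u,x,z$ and $u,z,y$ share the binding edge $(u,z)$, and $(z,y)=(z,v_1)$ is precisely the first counterclockwise boundary edge of $B_y$, namely the one seen by $u_l(B_y)=u$. Tracking the rotation then places $(u,z)$ between the $B_z$-edge $(z,w_{k-1})$ and $(z,v_1)$, which identifies $x$ with $w_{k-1}$. The delicate point --- and the main obstacle --- is exactly this orientation bookkeeping, reconciling the clockwise traversal of $L_0$ with the counterclockwise traversal used for the blocks; I expect Lemma~\ref{lem:lem_crossing_basic_2} together with Lemma~\ref{lem:lem_crossing_ordering_1}.\ref{lem:lem_crossing_basics}, which pin down the left-to-right placement of $B_y$ and of $u_f(B_y),\dots,u_l(B_y)$, to supply the needed orientation. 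Concatenating $tr(z)=tr(x)\rightarrow z$ with $tr(y)=tr(z)\rightarrow y$ then yields $tr(y)=tr(x)\rightarrow z\rightarrow y$, as claimed.
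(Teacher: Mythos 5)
Your case decomposition is the same as the paper's: its (very terse) proof also splits on whether $y$ is the first or the second vertex of $B_y$, i.e.\ on whether $z=\ell(B_y)$ or $z=v_1$ with $x=\ell(B_y)$ --- exactly your configurations (B) and (A). Your derivation of these two configurations from the empty $K_4$ and from the fact that only $u_f(B_y)\neq u_l(B_y)$ sees $(v_0,v_t)$, your proof of $tr(y)=tr(z)\rightarrow y$ in both cases, and your proof of $tr(z)=tr(x)\rightarrow z$ in case (A) are all correct, and indeed spelled out more carefully than in the paper, whose entire justification is that ``in both cases the desired property holds.''

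However, the step you yourself flag in case (B) is a genuine gap as written, and the way you propose to close it would fail. Ruling out $x=w_{k+1}$ cannot come from Lemma~\ref{lem:lem_crossing_basic_2} or Lemma~\ref{lem:lem_crossing_ordering_1}.\ref{lem:lem_crossing_basics}: those lemmas constrain only the left-to-right order along the spine, and the hypothetical configuration $x=w_{k+1}$ is perfectly consistent with them --- since the whole chunk of $B_z$ (including $w_{k+1}$) precedes the chunk of the child block $B_y$ on the spine, it would merely produce the spine order $z,x,\dots,y$, which contradicts no ordering statement. What actually excludes it is the face structure you already used to pin down the configurations: $u$ sees $(x,z)$, and each boundary edge of $B_z$ is seen by exactly one outer vertex; in the rotation around the cut vertex $z$, the outer neighbours of $z$ increase in index as one sweeps from the edge $(z,w_{k+1})$ to the edge $(z,w_{k-1})$, with the neighbourhoods of the child blocks hanging at $z$ (in particular $N(B_y)$) traversed in between. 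Hence the unique vertex seeing $(z,w_{k+1})$ is the smallest-indexed outer neighbour of $z$, so it is at most $u_f(B_y)$, which by Lemma~\ref{lem:lem_crossing_ordering_1}.\ref{lem:diff_first_last} is strictly smaller than $u_l(B_y)=u$. Therefore $u$ cannot see $(z,w_{k+1})$, forcing $x=w_{k-1}$ and hence $tr(z)=tr(x)\rightarrow z$. To be fair, the paper's own proof is silent on precisely this point; but since your write-up explicitly rests the step on lemmas that cannot deliver it, it has to be counted as a gap rather than an omission of routine detail.
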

\begin{proof}
It suffices to prove that $x,z,y$ appear in this order from left to
right along the spine. Since $y$ is either the first or second
vertex of $B_y$, either $z=\ell(B_y)$ or $z$ is the first vertex of
$B_y$ respectively. In both cases, the desired property holds.
\end{proof}

\begin{lemma}
All forward $2$-hops of $G$ can be placed on a new page without crossings.
\label{lem:lem_crossing_2hops_forward}
\end{lemma}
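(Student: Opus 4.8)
The plan is to argue by contradiction that no two forward $2$-hops placed on the same page can cross. Let $e_1'=(x_1,y_1)$ and $e_2'=(x_2,y_2)$ be two forward $2$-hops and assume, without loss of generality, that $x_1$ precedes $x_2$ along the spine. By Lemma~\ref{lem:lem_crossing_2hops_forward_auxiliary_2}, $tr(y_i)=tr(x_i)\rightarrow z_i\rightarrow y_i$, so along the spine $x_i$ precedes $z_i$, which in turn precedes $y_i$; in particular $x_i$ is the antepenultimate vertex of the trail of $y_i$. Hence $e_1'$ and $e_2'$ cross only if their endpoints interleave, i.e. if $x_1<x_2<y_1<y_2$ along the spine, and it suffices to rule out exactly this configuration.

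The heart of the argument is Lemma~\ref{lem:lem_crossing_2hops_forward_auxiliary_1} applied to $y_1$ and $y_2$. Since $y_1$ precedes $y_2$, let $v$ be the last common vertex of the trails $tr(y_1)$ and $tr(y_2)$; then every vertex of $tr(y_1)$ after $v$ precedes every vertex of $tr(y_2)$ after $v$ along the spine. In the generic situation the two trails already separate before reaching $x_1$ and $x_2$: I would first apply the same auxiliary lemma to $x_1$ and $x_2$, noting that $x_2\notin tr(x_1)$ (otherwise $x_2$ would precede $x_1$), and observe that if also $x_1\notin tr(x_2)$, then the last common vertex of $tr(x_1)$ and $tr(x_2)$ is strictly before both $x_1$ and $x_2$; since $tr(y_1)$ and $tr(y_2)$ merely extend these trails, the same holds for $v$. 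Consequently $x_1,z_1,y_1$ all lie in the part of $tr(y_1)$ after $v$, while $x_2,z_2,y_2$ all lie in the part of $tr(y_2)$ after $v$, and the auxiliary lemma forces $y_1<x_2$. This contradicts the assumed order $x_2<y_1$, so $e_1'$ and $e_2'$ cannot cross in this case.

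It remains to treat the degenerate case $x_1\in tr(x_2)$, in which $B_{x_1}$ is a proper ancestor of $B_{x_2}$ and the two trails still coincide at $x_1$; this is the step I expect to be the main obstacle, since the clean ``$y_1<x_2$'' contradiction is no longer immediate. Here I would exploit the structural facts recorded just before the statement: because $x_2$ and $y_2$ lie in different blocks, $y_2$ has index $1$ or $2$ in $B_{y_2}$ and $z_2$ is either $\ell(B_{y_2})$ or its first vertex, and symmetrically for $z_1,y_1$ and $B_{y_1}$. Tracking how the trail leaves $x_1$ — either immediately toward $z_1=\ell(B_{y_1})$, or along the branch that continues toward $x_2$ — one checks, using Lemma~\ref{lem:lem_crossing_ordering_1} and the region decomposition of Lemma~\ref{lem:lem_crossing_basic_2}, that $y_1$ is forced to the left of the entire block $B_{x_2}$, and hence to the left of $x_2$. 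This again yields $y_1<x_2$, contradicting $x_2<y_1$. Since any two forward $2$-hops therefore fail to interleave, they are pairwise non-crossing and can all be assigned to a single new page, which completes the proof.
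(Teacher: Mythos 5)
Your setup (assume two forward $2$-hops interleave as $x_1<x_2<y_1<y_2$ and derive a contradiction) and your ``generic'' case match the paper's first steps, but the degenerate case $x_1\in tr(x_2)$ that you postpone is precisely the heart of the paper's proof, and the strategy you propose for it cannot work. You claim that, by tracking the trail and using Lemma~\ref{lem:lem_crossing_ordering_1} and Lemma~\ref{lem:lem_crossing_basic_2}, one can force $y_1$ to the left of $B_{x_2}$ and hence conclude $y_1<x_2$. This is false in exactly the troublesome sub-configurations: for instance when the trails merge again after $x_1$, with $z_1=x_2$ (and possibly also $y_1=z_2$), one has $tr(y_2)=tr(x_1)\rightarrow z_1\rightarrow z_2\rightarrow y_2$, and since $z_1<y_1$ always holds along the spine (Lemma~\ref{lem:lem_crossing_2hops_forward_auxiliary_2}), this gives $x_2<y_1$ --- the interleaving order $x_1<x_2<y_1<y_2$ is perfectly consistent with every ordering and region lemma. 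No purely order-theoretic argument can exclude these configurations; the missing idea is that they must be killed by \emph{maximal $1$-planarity}, not by ordering. That is what the paper does: it shows that the last common vertex $v_y$ of $tr(y_1)$ and $tr(y_2)$ must lie in $\{z_1,y_1\}\cap\{x_2,z_2,y_2\}$, and then rules out $v_y=z_1=x_2$ because the two $2$-hops would then cross each other in $\Gamma(G)$ (impossible, since $2$-hops never cross each other, by Lemma~\ref{lem:lem_crossing_quadrangle}), and rules out $v_y=y_1=z_2$ (which forces $z_1=x_2$) by a direct contradiction with Lemma~\ref{lem:lem_crossing_quadrangle}.

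A secondary, related gap sits in your generic case: from ``$tr(y_1)$ and $tr(y_2)$ merely extend these trails'' you conclude that the last common vertex of $tr(y_1)$ and $tr(y_2)$ is still strictly before both $x_1$ and $x_2$, but appending $z_i,y_i$ can in principle create \emph{new} common vertices (again the threats are $z_1=x_2$ and $y_1=z_2$). Under your generic hypothesis these coincidences can indeed be excluded --- e.g.\ $z_1=x_2$ would give $tr(x_2)=tr(x_1)\rightarrow z_1$, so $x_1\in tr(x_2)$ --- but that argument needs to be spelled out, and it is essentially the same case analysis you deferred. So the proposal is incomplete where the proof is genuinely hard, and the proposed repair for that part goes in a direction that cannot succeed.
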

\begin{proof}
Let $e_1'=(x_1,y_1)$ and $e_2'=(x_2,y_2)$ be two forward $2$-hops of
$G$. W.l.o.g. assume that $x_1$, $x_2$, $y_1$, $y_2$ appear in this
order along the spine from left to right, i.e., edges $e_1'$ and
$e_2'$ cannot be placed on the same page. Consider the trails of
$y_1$ and $y_2$. By
Lemma~\ref{lem:lem_crossing_2hops_forward_auxiliary_2},
$tr(y_1)=tr(x_1) \rightarrow z_1 \rightarrow y_1$ and
$tr(y_2)=tr(x_2) \rightarrow z_2 \rightarrow y_2$. Let $v_x$ be the
last common vertex of the trails $tr(x_1)$ and $tr(x_2)$ and $v_y$
the last common vertex of the trails $tr(y_1)$ and $tr(y_2)$.
By Lemma~\ref{lem:lem_crossing_2hops_forward_auxiliary_2}, if $v_x$
is not $x_1$, then all three vertices $x_1$, $z_1$ and $y_1$ will
appear before $x_2$, $z_2$ and $y_2$ (as in this case we would have
$v_x=v_y$); a contradiction. Hence, $v_x=x_1$ and $v_y\in
\left\{x_1,z_1,y_1\right\}$. Similarly, let $v$ be the last common
vertex of the trails $tr(x_2)$ and $tr(y_1)$. By
Lemma~\ref{lem:lem_crossing_2hops_forward_auxiliary_2}, if $v$ is
not $x_2$, then all three vertices $x_2$, $z_2$ and $y_2$ will
appear before $y_1$ (as in this case we would have $v=v_y$);
contradiction. Hence $v=x_2$ and $v_y\in \left\{x_2,z_2,
y_2\right\}$. From the above, it follows that $v_y\in
\left\{x_1,z_1,y_1\right\}$. By the order of the vertices along the
spine, it follows that $v_y\notin \{x_1,y_2\}$. If $v_y=z_1$, then
$v_y=x_2$ also (otherwise, if $v_y=z_2$ vertex $x_2$ would also be
on the common part of the trails, and inevitably it would be
$x_2=x_1$;contradiction). On the other hand, if $v_y=y_1$, then
$v_y=z_2$ and $z_1=x_2$ (since they are the unique vertex on the
trail before $v_y$). In the first case, $e'_1$ and $e'_2$ would also
cross in the $1$-planar embedding of $G$. The second case
contradicts Lemma~\ref{lem:lem_crossing_quadrangle}, since edges
$(z_1,u_2)$, $(z_1,x_1)$ and $(z_1,y_1)$  are all incident to
$z_1$ (recall $z_1=x_2$).
\end{proof}

%=================================================================
%\subsection{Backward $2$-hops}
%\label{app:backward}
%=================================================================

By definition, if $e'=(x,y)$ is a backward $2$-hop, then vertex
$u=u_f(B_y)$ and $B_x$ is the parent-block of $B_y$ and $y$ is
either the last or the second-to-last vertex of $B_y$.

\begin{lemma}
If $e'=(x_1,y_1)$ and $e''=(x_2,y_2)$ are two backward $2$-hops,
then $B(y_1)\neq B(y_2)$.
\label{lem:lem_crossing_2hops_backward_auxiliary}
\end{lemma}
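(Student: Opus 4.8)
The plan is to argue by contradiction: suppose two distinct backward $2$-hops $e'=(x_1,y_1)$ and $e''=(x_2,y_2)$ satisfy $B(y_1)=B(y_2)=:B_y$, and derive a contradiction. First I would record what a common target block forces. By the definition of a backward $2$-hop (recalled just before the statement), both hops cross binding edges incident to the vertex $u_f(B_y)$; since $u_f(\cdot)$ and the parent-block are determined by $B_y$, the two crossed binding edges share the same outer endpoint $u:=u_f(B_y)$, the source blocks $B_{x_1},B_{x_2}$ both coincide with the unique parent-block of $B_y$, and, writing $v_0=\ell(B_y),v_1,\dots,v_t$ for the counterclockwise boundary of $B_y$, we have $y_1,y_2\in\{v_{t-1},v_t\}$.

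Next I would pin down each backward $2$-hop exactly using Lemma~\ref{lem:lem_crossing_quadrangle}. For a backward $2$-hop $(x,y)$ crossing a binding edge $(u,z)$, the four endpoints induce a $K_4$, so $z$ is adjacent to both $x$ and $y$, and the leader $v_0$ equals either $x$ or $z$. Since the only boundary neighbours of $v_0$ in $B_y$ are $v_1$ and $v_t$, tracing these adjacencies leaves exactly two possibilities: a hop $(x,v_t)$ crossing $(u,v_0)$ (with $z=v_0$ and $x$ a proper vertex of the parent-block), which I call \emph{type~B}; and the hop $(v_0,v_{t-1})$ crossing $(u,v_t)$ (with $x=v_0$, $z=v_t$), which I call \emph{type~A}. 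In particular the crossed binding edge is either $(u,v_0)$ or $(u,v_t)$, and the type~A hop is the single, fixed edge $(v_0,v_{t-1})$ determined by $B_y$.

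I then run a short case analysis on the types of $e'$ and $e''$. If both are of type~B they cross the same edge $(u,v_0)$, a double crossing that contradicts $1$-planarity of $G$; and two \emph{distinct} hops of type~A are impossible because $(v_0,v_{t-1})$ is uniquely determined by $B_y$. The remaining and decisive case is one hop of each type: the type~A hop $(v_0,v_{t-1})$ crossing $(u,v_t)$, and the type~B hop $(x,v_t)$ crossing $(u,v_0)$. Here I exploit that $u=u_f(B_y)$ sees $(v_0,v_t)$, so $u,v_0,v_t$ bound a triangular face of the underlying planar structure $G_P$. The type~B hop enters this face across side $(u,v_0)$ and runs to the opposite corner $v_t$, while the type~A hop leaves the corner $v_0$ and crosses side $(u,v_t)$; a planarity argument inside the triangle shows that the chord joining a point of side $(u,v_0)$ to corner $v_t$ must meet the chord joining corner $v_0$ to a point of side $(u,v_t)$. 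Hence the two $2$-hops cross each other, contradicting that $2$-hops are pairwise non-crossing.

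The main obstacle is this last step: showing that the two portions lying inside the face $u,v_0,v_t$ are forced to intersect. This needs the careful identification that one hop realises a point of side $(u,v_0)$ joined to corner $v_t$ while the other realises corner $v_0$ joined to a point of side $(u,v_t)$, together with the elementary (but essential) fact that such a pair of chords of a triangle always meets in its interior. Everything else — the double-crossing observation for two type~B hops and the uniqueness observation for type~A — is routine once the two types have been isolated.
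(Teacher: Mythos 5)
Your proof is correct and follows essentially the same route as the paper: the paper's (very terse) proof simply declares that the ``last edge'' $(v_0,v_t)$ of $B_y$ can be \emph{covered} by at most one backward $2$-hop as a consequence of Lemma~\ref{lem:lem_crossing_quadrangle}, and your type~A/type~B classification together with the three-case analysis (the same edge $(u,v_0)$ crossed twice, the uniqueness of $(v_0,v_{t-1})$, and the forced intersection of the two hops inside the triangular face $u,v_0,v_t$) is precisely the detailed justification of that one-line claim. If anything, your write-up is more explicit than the paper's own proof, which leaves all of this localization at the last edge to the reader.
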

\begin{proof}
For a backward $2$-hop $e'=(x,y)$ we say that the last edge of $B_y$
is \emph{covered} by edge $e'$. By
Lemma~\ref{lem:lem_crossing_quadrangle}, the last edge of $B_y$ can
be covered by at most one backward $2$-hop. So, $B(y_1)\neq B(y_2)$
holds for $e'_1$ and $e'_2$.
\end{proof}

\begin{lemma}
All backward $2$-hops of $G$ can be placed on two new pages without
crossings.
\label{lem:lem_crossing_2hops_backward}
\end{lemma}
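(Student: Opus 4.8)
The plan is to mimic the strategy used for bridging $2$-hops (Lemma~\ref{lem:lem_crossing_2hops_bridging}): I will build an auxiliary conflict graph on the set of backward $2$-hops, argue that it is bipartite, and assign the two color classes to the two new pages. The first step is to pin down the spine positions of the endpoints of a backward $2$-hop $e'=(x,y)$. Recall that $B_x$ is the parent-block of $B_y$, that $y$ is the last or the second-to-last vertex of $B_y$, and that $u=u_f(B_y)$. Using the trail of $B_y$ together with Lemma~\ref{lem:lem_crossing_basic_2}, I would show that $x$ lies to the left of the consecutive run $v_1,\dots,v_t$ of $B_y$ guaranteed by Lemma~\ref{lem:lem_crossing_ordering_1}.\ref{lem:cons_on_spine} (being located on, or just left of, the portion of the trail of $B_y$ that passes through $B_x$, i.e. in region $I$), while $y$ occupies the right end of that run. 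Hence the spine interval of $e'$ starts to the left of $B_y$ and terminates at the right boundary of $B_y$.

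Next I would determine exactly when two backward $2$-hops $e_1'=(x_1,y_1)$ and $e_2'=(x_2,y_2)$ conflict, i.e.\ when their spine intervals interleave so that they cannot share a page. Assume w.l.o.g.\ that $x_1,x_2,y_1,y_2$ appear in this order along the spine. By Lemma~\ref{lem:lem_crossing_2hops_backward_auxiliary} we already know $B(y_1)\neq B(y_2)$, so the two hops are distinguished by their distinct lower blocks $B_{y_1}$ and $B_{y_2}$. Using Lemma~\ref{lem:lem_crossing_basic_2} and a trail comparison in the spirit of Lemma~\ref{lem:lem_crossing_2hops_forward_auxiliary_1}, I would show that an interleaving forces $B_{y_1}$ and $B_{y_2}$ into a restricted relationship---either $B_{y_2}$ is a descendant-block whose attaching trail passes through the interval of $e_1'$, or $B_{y_1}$ and $B_{y_2}$ hang off a common ancestor in adjacent positions---and that any interleaving not of this restricted form would produce a crossing in $\Gamma(G)$ forbidden by Lemma~\ref{lem:lem_crossing_quadrangle} or by the maximality of $G$.

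With the conflicts characterized, the plan is to form the conflict graph whose vertices are the backward $2$-hops and whose edges join pairs that interleave on the spine, and then to color it. I expect the ``at most one backward $2$-hop per lower block'' bound of Lemma~\ref{lem:lem_crossing_2hops_backward_auxiliary} to force this conflict graph to be a union of simple paths, exactly as in the bridging case; such a graph is bipartite, so assigning the two sides of the bipartition to the two new pages yields a crossing-free placement and the lemma follows.

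The step I expect to be the main obstacle is the middle one: establishing the precise conflict characterization and, through it, the acyclicity/degree bound that makes the conflict graph bipartite. The endpoint $x$ of a backward $2$-hop lives in the parent-block, and a single block may serve as $B_x$ for several backward $2$-hops (one per child), so ruling out a backward $2$-hop from interleaving with three or more others---and in particular excluding odd cycles in the conflict graph---requires a careful trail-order analysis rather than a one-line degree count. Should the union-of-paths structure fail to hold verbatim, the fallback is to $2$-color the conflict graph directly by a parity invariant (for instance the parity of the depth of $B_y$ on the block-tree, or whether $y=v_t$ or $y=v_{t-1}$) and to verify that same-color hops never interleave.
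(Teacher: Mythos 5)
Your primary plan---building the conflict graph and showing it is a union of simple paths ``exactly as in the bridging case''---does not go through, and the obstacle is exactly the one you flag at the end. Lemma~\ref{lem:lem_crossing_2hops_backward_auxiliary} bounds only the number of backward $2$-hops per \emph{ending} block (at most one), not per \emph{parent} block: a block $B$ can simultaneously be the ending block of one backward $2$-hop and the parent block of many others (one for each child of $B$), and all of these can interleave with the former. So the conflict graph can have vertices of arbitrarily high degree; it is a forest mirroring part of the block-tree, not a union of paths, and the path-based bipartiteness argument from Lemma~\ref{lem:lem_crossing_2hops_bridging} cannot be transplanted.

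Your fallback---$2$-coloring by the parity of the block-tree depth of $B_y$---is in fact the paper's actual argument (the paper charges each hop to $B_x$ and colors by the parity of its level; since the level of $B_y$ is that of $B_x$ plus one, this is the same bipartition). But this coloring is valid only if one first proves the key structural fact, which your proposal leaves open and partially misstates: if two backward $2$-hops interleave as $x_1,x_2,y_1,y_2$ along the spine, then necessarily $B_{x_2}=B_{y_1}$, i.e., the parent block of the second hop \emph{is} the ending block of the first, so conflicting hops always sit at consecutive block-tree levels. The paper derives this from Lemma~\ref{lem:lem_crossing_basic_2}: the position of $x_2$ forces $B_{x_2}$ to be $B_{x_1}$ itself, a descendant of $B_{x_1}$, or $B_{y_1}$, and in the first two cases $B_{y_2}$ (a child of $B_{x_2}$) would precede $B_{y_1}$ in the block order, putting $y_2$ left of $y_1$ and contradicting the assumed interleaving. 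In particular, your alternative in the conflict characterization---that $B_{y_1}$ and $B_{y_2}$ might ``hang off a common ancestor in adjacent positions''---must be \emph{ruled out}, not allowed: two such hops would be charged to blocks of the same level, receive the same color under the parity scheme, and break the argument. (Your other suggested invariant, whether $y=v_t$ or $y=v_{t-1}$, has no bearing on conflicts at all.) Establishing this characterization is the entire content of the proof, so as it stands the proposal has a genuine gap.
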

\begin{proof}
Let $e_1'=(x_1,y_1)$ and $e_2'=(x_2,y_2)$ be two backward $2$-hops
of $G$. W.l.o.g. assume that $x_1$, $x_2$, $y_1$, $y_2$ appear in
this order along the spine from left to right, i.e., edges $e_1'$
and $e_2'$ cannot be placed on the same page. Since $x_2$ appears
after vertex $x_1$, by Lemma~\ref{lem:lem_crossing_basic_2}, 
$x_2$ is either a vertex of $B_{x_1}$ with greater index than
$x_1$, or its block $B_{x_2}$ appears after block $B_{x_1}$.
Similarly, since $x_2$ appears before $y_1$ on the spine, by the
same lemma, $x_2$ is either a vertex of $B_{y_1}$ with smaller index
than $y_1$, or its block $B_{x_2}$ appears before block $B_{y_1}$.
Combining the above restrictions, we distinguish three cases for
$B_{x_2}$:
\begin{inparaenum}[(c.1)]
\item $B_{x_2}=B_{x_1}$ and $x_2$ appears after $x_1$,
\item $B_{x_2}$ is a descendant of $B_{x_1}$, and
\item $B_{x_2}=B_{y_1}$ and $x_2$ appears before $y_1$.
\end{inparaenum}

In the first two cases, since $B_{y_2}$ is child-block of $B_{x_2}$,
$B_{y_2}$ will appear before $B_{y_1}$ and $y_2$ will be to the left
of $y_1$; a contradiction. Hence, if two backward $2$-hops $e_1'$
and $e_2'$ cannot be placed on the same page, then $B_{x_2}=B_{y_1}$.

We say that a backward $2$-hop $e'=(x,y)$ is \emph{charged} to block
$B_x$ (i.e., to the parent-block). Then, whenever $e_1'$ and $e_2'$
cannot be placed on the same page, we have that they are charged to
blocks of consecutive levels on the block-tree. Then, we can place
backward $2$-hops that are charged on blocks of odd level on one of
the two new available pages and those charged on blocks of even
level on the other available page. It is clear that backward
$2$-hops on the same page do not cross, concluding the proof of the
lemma.
\end{proof}

%=================================================================
%\subsection{Long $2$-hops}
%\label{app:long}
%=================================================================

By definition, if $e'=(x,y)$ is a long $2$-hop, then 
$u=u_f(B_y)=u_f(B_z)$, $B_y$ is the first child-block of $B_z$ and
$B_z$ is a child-block of $B_x$.

\begin{lemma}
Let $e'=(x,y)$ be a long $2$-hop of $G$. We say that $B_z$ is the
\emph{middle} block of $e'$, and $B_y$ the \emph{ending} block of
$e'$. Then, a block $B$ can be middle (ending) block of at most one
long $2$-hop $e'$. Also, $B$ cannot be middle block of a long
$2$-hop $e'$ and ending block of another long $2$-hop $e''$ at the same
time.
\label{lem:lem_crossing_2hops_long_auxiliary}
\end{lemma}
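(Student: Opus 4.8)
The plan is to derive all three statements from the uniqueness of the parent-block and of the first child-block in the block-tree, supplemented by a single crossing argument for the mixed case. First I would pin down the data attached to a long $2$-hop $e'=(x,y)$, since the lemma is really the assertion that this data is determined by one block. Recall that $e'$ crosses a binding edge $(u,z)$ with $x\rightarrow z\rightarrow y$ a path in $L_1$; because $x$ lies in the grand-parent block $B_x$ it cannot be the leader $\ell(B_y)$, so the path forces $z=\ell(B_y)$ and hence the crossed edge is $(u,z)=(u_f(B_y),\ell(B_y))$. By Lemma~\ref{lem:lem_crossing_quadrangle} the four vertices $\{x,y,u,z\}$ induce a $K_4$; since $u=u_f(B_y)$, the endpoint $y$ is the neighbour of $\ell(B_y)$ on the boundary of $B_y$ that is also adjacent to $u$, which determines $y$, and $x$ is then the unique fourth vertex of this $K_4$ lying in $B_x$.

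For part~(i), if $B$ is the middle block of $e'$, then $B=B_z$ and the ending block $B_y$ is the first child-block of $B$, which is unique; by the previous paragraph $B_y$ fixes $z$, $u$ and $y$, while the parent $B_x$ of $B$ fixes $x$, so $e'$ is determined by $B$. For part~(ii), if $B$ is the ending block of $e'$, then $B=B_y$, its middle block $B_z$ is the (unique) parent of $B$ and $B_x$ is the parent of $B_z$; again $z,u,y,x$ are all forced and $e'$ is determined by $B$. In both cases the only structural facts used are that a block has a unique parent and a unique first child.

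For part~(iii) I would argue by contradiction. Assume $B$ is the middle block of a long $2$-hop $e'$ and the ending block of a long $2$-hop $e''$. Let $P$ be the parent of $B$, $GP$ the parent of $P$, and $C$ the first child of $B$. The ending-block condition on $e''$ forces $B$ to be the first child of $P$ and, through the long-$2$-hop identity $u_f(B_y)=u_f(B_z)$, gives $u_f(B)=u_f(P)$; the middle-block condition on $e'$ gives $u_f(B)=u_f(C)$. Hence a single outer vertex $u:=u_f(B)$ equals $u_f(P)$ and $u_f(C)$, so $u$ dominates the entire block-tree path $GP,P,B,C$, the $2$-hop $e''$ crosses $(u,\ell(B))$ and the $2$-hop $e'$ crosses $(u,\ell(C))$, two binding edges sharing the endpoint $u$. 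By the region decomposition of Lemma~\ref{lem:lem_crossing_basic_2} and the ordering of Lemma~\ref{lem:lem_crossing_ordering_1}, the four endpoints occur along the spine in the order $x''$, $x'$, $y''$, $y'$ (belonging to $GP$, $P$, $B$, $C$, respectively), so $e'$ runs from $P$ past $B$ into $C$ while $e''$ runs from $GP$ past $P$ into $B$.

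The hard part will be converting this commonly-dominated, interleaving configuration into a genuine contradiction in $\Gamma(G)$. I expect to use that the two crossed binding edges $(u,\ell(B))$ and $(u,\ell(C))$ emanate from the same vertex $u$ and therefore bound a region of $\Gamma(G)$ through which $e'$ must pass on its way from $x'\in P$ to $y'\in C$: this should force $e'$ to cross \emph{both} $(u,\ell(C))$ and $(u,\ell(B))$ --- equivalently, to force $e'$ and $e''$ to cross each other in addition to each crossing its own binding edge. In either reading some edge is crossed twice, contradicting the $1$-planarity of $\Gamma(G)$. Establishing this forced second crossing rigorously from the combinatorial embedding (rather than from the spine order alone) is the real content of part~(iii) and the step I would spend the most care on.
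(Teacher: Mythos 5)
Your parts (i) and (ii) reach the right conclusion, but not quite by the route you describe, and the soft spots are worth naming. Your determination of $y$ as ``the neighbour of $\ell(B_y)$ on the boundary of $B_y$ that is also adjacent to $u$'' is not airtight: $u=u_f(B_y)$ may well be adjacent to \emph{both} boundary-neighbours of $\ell(B_y)$, and a priori the edge $(z,y)$ of the $K_4$ could be a crossing chord rather than a boundary edge of $B_y$. Your determination of $x$ is circular, since $x$ is characterized as ``the fourth vertex of this $K_4$'' --- i.e., through the very crossing whose uniqueness is at stake. Neither matters, though, because a fact you do establish already suffices: the crossed binding edge $(u,z)=(u_f(B_y),\ell(B_y))$ is completely determined by $B$ (with $B_y$ the unique first child of $B$ in the middle case, and $B_y=B$ in the ending case), so two distinct long $2$-hops playing the same role would both cross this one edge of $G_P$, contradicting $1$-planarity. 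No pointwise determination of $x$ and $y$ is needed.

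Part (iii) is a genuine gap, and you say so yourself. After correctly deriving that $u:=u_f(B)=u_f(P)=u_f(C)$ and that $e''$ crosses $(u,\ell(B))$ while $e'$ crosses $(u,\ell(C))$, you only \emph{conjecture} that this configuration forces a second crossing on some edge, and you defer the argument (``the step I would spend the most care on''). Since the two crossed binding edges are different, the immediate ``same edge crossed twice'' contradiction that settles (i) and (ii) is unavailable, and the spine-order/region bookkeeping you set up does not by itself produce one --- this deferred step is precisely the content of the mixed case. The paper closes it by making the two roles compete for the same \emph{edge} of $G$ rather than the same outer vertex: whether $B$ is the middle or the ending block of a long $2$-hop, the $K_4$ guaranteed by Lemma~\ref{lem:lem_crossing_quadrangle} has the last edge of $B$ (the edge joining $\ell(B)$ to the last boundary vertex of $B$) as one of its sides --- in the ending case because $y$ is the last vertex and $z=\ell(B)$, in the middle case because the first child $C$ attaches at the last vertex of $B$ (so $z$ is that vertex) while $x=\ell(B)$. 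The paper calls this edge \emph{covered}, and since by Lemma~\ref{lem:lem_crossing_quadrangle} together with internal maximality (empty $K_4$ interiors) an edge can be covered at most once, all three claims --- including your problematic mixed case --- follow in one stroke, exactly as in Lemma~\ref{lem:lem_crossing_2hops_backward_auxiliary}. If you want to salvage your own setup, this covering observation is the missing step: in your configuration $e'$ and $e''$ would both place their $K_4$, on the same side, on the edge joining $\ell(B)$ to the last vertex of $B$, which is impossible.
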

\begin{proof}
The proof is similar to the one of
Lemma~\ref{lem:lem_crossing_2hops_backward_auxiliary}. We say that
the last edge of a block $B$ is \emph{covered} by a long $2$-hop
$e'=(x,y)$, if $B$ is the middle or ending block of $e'$. By
Lemma~\ref{lem:lem_crossing_quadrangle} the last edge of $B$ can be
covered by at most one long $2$-hop, and the lemma follows.
\end{proof}

\begin{lemma}\label{lem:lem_crossing_2hops_long}
All long $2$-hops of $G$ can be placed on two new pages without
crossings.
\end{lemma}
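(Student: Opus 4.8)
The plan is to follow the same template used for bridging and backward $2$-hops (Lemmas~\ref{lem:lem_crossing_2hops_bridging} and~\ref{lem:lem_crossing_2hops_backward}): I will show that whenever two long $2$-hops cannot share a page, their defining blocks are tightly related, so that the resulting conflict graph is $2$-colourable and the two colour classes can be routed on two new pages. First I would fix two long $2$-hops $e_1'=(x_1,y_1)$ and $e_2'=(x_2,y_2)$ and assume, without loss of generality, that their endpoints interleave along the spine in the order $x_1,x_2,y_1,y_2$, since this is exactly the situation in which they cannot be placed on the same page. Recall that for a long $2$-hop $e'=(x,y)$ we have a grand-parent block $B_x$, a middle block $B_z$ (a child of $B_x$) and an ending block $B_y$ (the first child of $B_z$), so the endpoints of $e'$ span three consecutive levels of the block-tree.

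Next I would locate $x_2$ relative to the blocks of $e_1'$ exactly as in the backward case. Since $x_2$ lies to the right of $x_1$ and to the left of $y_1$, Lemma~\ref{lem:lem_crossing_basic_2} forces $B_{x_2}$ to be one of: (c.1) $B_{x_1}$ itself (with $x_2$ of larger index), (c.2) a proper descendant of $B_{x_1}$, or (c.3) the ending block $B_{y_1}$. Using that $y_2$ must be the right-most of the four endpoints together with the fact that $B_{y_2}$ is a grandchild of $B_{x_2}$, I would eliminate the configurations that force $y_2$ to precede $y_1$ (contradicting the assumed order), just as cases (c.1)--(c.2) are discarded for backward $2$-hops. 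For the surviving configuration I would read off that the grand-parent (equivalently, the middle/ending) block of $e_2'$ is pinned to a block of $e_1'$ two levels below $B_{x_1}$, and here Lemma~\ref{lem:lem_crossing_2hops_long_auxiliary} is essential: because each block is the middle block of at most one long $2$-hop, the ending block of at most one, and never both at once, no block can serve as the ``meeting point'' of more than a bounded number of mutually crossing long $2$-hops.

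With the conflict structure in hand, I would build the auxiliary conflict graph $H$ whose vertices are the long $2$-hops and whose edges join pairs that cannot share a page. The two counting bounds and the ``never both'' clause of Lemma~\ref{lem:lem_crossing_2hops_long_auxiliary} would give that $H$ contains no odd cycle, so that $H$ is a disjoint union of paths and hence bipartite; I would then assign the two colour classes to the two new pages, and by construction long $2$-hops of the same class are pairwise non-crossing. The main obstacle is the middle paragraph: unlike backward $2$-hops, whose endpoints span only two levels so that a parity-by-level colouring succeeds immediately, a long $2$-hop spans three levels and the crossing condition couples the grand-parent vertex of one hop with the ending block of the other. Ruling out the stray configurations produced by the region analysis, and in particular showing that crossing long $2$-hops cannot form an odd cycle, is where Lemma~\ref{lem:lem_crossing_2hops_long_auxiliary} and maximal $1$-planarity (Lemma~\ref{lem:lem_crossing_quadrangle}) must be combined carefully.
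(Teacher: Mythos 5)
There is a genuine gap, and it sits exactly where you wave your hands: the elimination of cases. You fold $B_{z_1}$ (the middle block of $e_1'$) into ``proper descendant of $B_{x_1}$'' and then discard all descendant configurations by arguing that they force $y_2$ to precede $y_1$. That elimination is false for $B_{x_2}=B_{z_1}$. The paper's case analysis has \emph{four} cases, and its discarded descendant case is explicitly ``a descendant of $B_{x_1}$ \emph{before} $B_{z_1}$'': when $B_{x_2}=B_{z_1}$, the middle block $B_{z_2}$ can be $B_{y_1}$ itself, so $B_{y_2}$ is the first child-block of $B_{y_1}$ and $y_2$ lies to the \emph{right} of $y_1$ --- no contradiction arises. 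So there are two surviving conflict configurations, not one: $B_{x_2}=B_{y_1}$ (what the paper calls \emph{indirectly in conflict}) and $B_{x_2}=B_{z_1}$ (\emph{directly in conflict}). The ``never both middle and ending'' clause of Lemma~\ref{lem:lem_crossing_2hops_long_auxiliary}, which you correctly sense must be used somewhere, is needed precisely to separate these two conflict types; in your single-surviving-configuration analysis it has nothing to act on, which is a sign the case analysis lost something.

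The second problem is the conclusion you draw for the conflict graph $H$. Your claim that $H$ is a disjoint union of paths is not justified and is in fact too strong: the auxiliary lemma bounds how often a block can be a \emph{middle} or \emph{ending} block, but it does not bound how many long $2$-hops share a \emph{grand-parent} block, so a single hop $e_1'$ can be in conflict with several later hops whose common grand-parent block is $B_{y_1}$ --- a star, not a path. What the paper actually proves is weaker but sufficient: processing blocks in spine order, each newly considered long $2$-hop is in conflict with \emph{at most one} already-placed hop (if the conflict is indirect, the conflicting hop is unique because parent and grand-parent blocks of $B_x$ are unique; if direct, it is unique because $B_x$ is the middle block of at most one hop; and by the auxiliary lemma the two possibilities are mutually exclusive). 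This gives a greedy $2$-colouring by induction on blocks, i.e., the conflict graph is a forest. To repair your proof you would need to reinstate the direct-conflict case and replace the paths claim with this induction (or with any argument that oriented conflict edges toward earlier blocks have out-degree at most one).
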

\begin{proof}
Let $e_1'=(x_1,y_1)$ and $e_2'=(x_2,y_2)$ be two long $2$-hops of
$G$. W.l.o.g. assume that $x_1$, $x_2$, $y_1$, $y_2$ appear in this
order along the spine from left to right, i.e., edges $e_1'$ and
$e_2'$ cannot be placed on the same page. Since $x_2$ appears after
vertex $x_1$, by Lemma~\ref{lem:lem_crossing_basic_2}, $B_y$ is the
first child-block of $B_z$, we have that $x_2$ is either a vertex of
$B_{x_1}$ with greater index than $x_1$, or its block $B_{x_2}$
appears after block $B_{x_1}$. Similarly, since $x_2$ appears before
$y_1$ on the spine, by the same lemma, we have that $x_2$ is either
a vertex of $B_{y_1}$ with smaller index than $y_1$, or its block
$B_{x_2}$ appears before block $B_{y_1}$. Combining the above
restrictions, we distinguish four cases for $B_{x_2}$:
\begin{inparaenum}[(c.1)]
\item $B_{x_2}=B_{x_1}$ and $x_2$ appears after $x_1$,
\item $B_{x_2}$ is a descendant of $B_{x_1}$ before $B_{z_1}$,
\item $B_{x_2}=B_{z_1}$, and
\item $B_{x_2}=B_{y_1}$ and $x_2$ appears before $y_1$.
\end{inparaenum}

In the first two cases, since $B_{y_2}$ is grand-child-block of
$B_{x_2}$, $B_{y_2}$ will appear before $B_{y_1}$ and $y_2$ will be
to the left of $y_1$; a contradiction. Hence, if two long $2$-hops
$e_1'$ and $e_2'$ cannot be placed on the same page, then
$B_{x_2}=B_{y_1}$ or $B_{x_2}=B_{z_1}$. We say that a long $2$-hop
$e_1'$ is \emph{indirectly in conflict with} $e_2'$ if
$B_{x_2}=B_{y_1}$ and \emph{directly in conflict with} $e_2'$ if
$B_{x_2}=B_{z_1}$. Note that
Lemma~\ref{lem:lem_crossing_2hops_long_auxiliary} in terms of
conflicts means that a long $2$-hop cannot be simultaneously
indirectly and directly in conflict with other $2$-hops. In the
following, we use induction on the number of blocks to show
that two pages are sufficient.

Start with block $B_1$, the root-block of the block-tree. Long
$2$-hop edges $e'=(x,y)$ with $B_x=B_1$ are not in conflict with
each other and can, therefore, be placed on the same page. Suppose
that we have placed all long $2$-hop edges $e'=(x,y)$ with $B_x\leq
B_s$ on two pages, say $p^1$ and $p^2$.

Let $e'=(x,y)$ with $B_x=B_{s+1}$ be a long $2$-hop. We claim that
it can be placed on one of the two pages without introducing any
crossings. We have that $e'$ is indirectly in conflict with a long
$2$-hop $e_1'=(x_1,y_1)$, if $B_x=B_{y_1}$ and directly in conflict
with a long $2$-hop $e_2'=(x_2,y_2)$, if $B_x=B_{z_2}$. By
Lemma~\ref{lem:lem_crossing_2hops_long_auxiliary}, $e'$ cannot be
simultaneously indirectly and directly in conflict with other long
$2$-hops. So, assume first that $e'$ is only indirectly in conflict
with other long $2$-hops of $G$. As already stated, if $e'$ is
indirectly in conflict with $e_1'=(x_1,y_1)$, then $B_x=B_{y_1}$.
Then, $B_{z_1}$ is the parent-block of $B_x$ and $B_{x_1}$ the
grand-parent-block of $B_x$. These blocks are uniquely defined and
$e_1'$ is the only long $2$-hop that $e'$ is indirectly in conflict
with. Clearly, if $e_1'$ is on page $p^i$, then $e'$ can go on the
opposite page $p^j$, where $i\neq j$. In the case where  $e'$ is
only directly in conflict with other long $2$-hops of $G$, we have
that if $e_2'=(x_2,y_2)$ is such an edge, then $B_x=B_{z_2}$.
By Lemma~\ref{lem:lem_crossing_2hops_long_auxiliary}, $B_x$ can be
middle block of at most one long $2$-hop, that is, $e_2'$ is the
only long $2$-hop that $e'$ is directly in conflict with. So, if
$e_2'$ is on page $p^i$, then $e'$ can go on the opposite page
$p^j$, where $i\neq j$. By induction, it follows that all long
$2$-hop edges of $G$ can be placed on two new pages.
\end{proof}

% ============================================================================
\subsection{The Multi-Level Case}
\label{sec:multi-layered}
% ============================================================================
In this section, we consider the general case, according to which
the given $1$-planar graph $G$ consists of more than two
levels, say $L_0,L_1,\ldots,L_\lambda$; $\lambda \geq 2$. We still
assume that $\Gamma(G)$ is simple internally maximal $1$-planar
drawing and has no crossings on its unbounded face.

\begin{lemma}\label{thm:simplemaximal}
Any simple internally maximal $1$-planar graph $G$ with $\lambda \geq
2$ levels and no crossings incident to its unbounded face admits a
book embedding on $34$ pages.
\end{lemma}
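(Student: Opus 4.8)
The plan is to reduce the multi-level case to the two-level machinery already developed, paying for the crossing edges only once per ``parity class'' of levels by exploiting the page-recycling of Yannakakis' algorithm. First I would build the underlying planar structure $G_P$ exactly as in Section~\ref{sec:twolevels}: remove one edge from each crossing pair, always discarding the level edge of a level/binding pair and making the same tie-breaks for two crossing binding edges and for level edges incident to the first vertex of a level. Then $G_P$ is an internally-triangulated plane graph, so Yannakakis' algorithm embeds it in five pages while reusing pages between blocks that are at least two levels apart in the peeling order. All remaining work concerns re-inserting the removed edges.

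The key structural step is \emph{localization}. I would first prove that every crossing of $G$ is confined to two consecutive levels. Indeed, by Lemma~\ref{lem:lem_crossing_quadrangle} the four endpoints of a crossing pair form a $K_4$ and are thus pairwise adjacent; since the endpoints of any edge lie in the same or in consecutive levels, no $K_4$ can span three levels, and hence all four endpoints lie in $L_i\cup L_{i+1}$ for a single $i$. Each crossing is therefore charged to a well-defined \emph{slab} $(L_i,L_{i+1})$, and inside the interior of the level-$i$ blocks the whole two-level classification into Cases~C.\ref{case:chord_chord}--C.\ref{case:hop_binding} applies verbatim: every removed edge of the slab is an outer crossing chord of $L_i$, a $2$-hop of $L_{i+1}$, or a crossing binding edge between $L_i$ and $L_{i+1}$, and Lemmas~\ref{lem:lem_crossing_inner_chord}--\ref{lem:two_hops} govern its placement relative to $G_P$.

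The recycling step colours each slab by the parity of $i$ and shows that same-coloured slabs share their crossing-handling pages. Two same-coloured slabs $(L_i,L_{i+1})$ and $(L_{i+2},L_{i+3})$ have disjoint vertex sets, and the recursive Yannakakis order places the entire interior of each level-$(i+1)$ block in a contiguous stretch of the spine; consequently every crossing edge of the deeper slab is nested inside, or disjoint from, every crossing edge of the shallower one (an argument in the spirit of Lemma~\ref{lem:lem_crossing_basic_2}), so no two same-coloured crossing edges of distinct slabs ever conflict on a common page. Since the slabs form a path under the ``share a level'' relation, two colours suffice; hence each crossing type needs only a bounded number of copies of its two-level budget—one per parity class, plus a handful of extra pages to mediate the level shared by consecutive slabs—which, added to the five pages of $G_P$, sums to the claimed $34$ pages.

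The main obstacle is precisely this non-interference analysis. I expect most of the effort in (i)~verifying the nesting claim for the delicate $2$-hop subtypes—backward, long and bridging—whose endpoints straddle parent/child or sibling blocks (Lemmas~\ref{lem:lem_crossing_2hops_bridging}, \ref{lem:lem_crossing_2hops_backward}, \ref{lem:lem_crossing_2hops_long}), since these are the edges most likely to reach across the block boundary separating two slabs; and (ii)~reconciling the reuse of page $p_1$ by outer crossing chords in Cases~C.\ref{case:chord_binding}--C.\ref{case:chord_bridge} with the fact that $p_1$ is itself recycled by $G_P$ across levels, which forces the parity colouring to respect the page assignment inherited from the planar structure. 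Once non-interference is confirmed for all six cases, the final count follows by summing the parity-duplicated two-level budgets with the five pages of $G_P$.
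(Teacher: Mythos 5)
Your skeleton is the same as the paper's: embed the underlying planar structure $G_P$ in five pages via Yannakakis, observe (via Lemma~\ref{lem:lem_crossing_quadrangle}) that every crossing is confined to two consecutive levels, and then reuse the two-level budgets across levels that are far apart in the peeling order. The gap lies in the one step that carries the actual content of this lemma: your uniform two-colour (parity-of-slab) recycling claim. Its supporting statement --- that ``the recursive Yannakakis order places the entire interior of each level-$(i+1)$ block in a contiguous stretch of the spine'' --- is false as stated: the interior of a block is \emph{interspersed} with the block's own vertices, because each level-$(i+2)$ block is placed immediately after its dominator, which is a vertex of the level-$(i+1)$ block. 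What is true is that each block's \emph{region} (its non-leader vertices together with everything inside it) is contiguous, and your nesting argument then needs that every re-inserted crossing edge of a deeper slab has \emph{both} endpoints inside a single such region. That confinement fails in general for edges incident to block leaders (which are placed with the parent block, outside the region) and therefore has to be verified case by case --- which is precisely where the paper spends its effort: it argues separately that Case~C.\ref{case:chord_chord} chords are never incident to block-leaders (hence \emph{one} universal page for all levels, not two), it handles Case~C.\ref{case:forward_forward} not by slab-independence at all but by mirroring each page of $G_P$ (five new pages), and for Case~C.\ref{case:forward_back} it explicitly says a ``different argument'' is needed and only establishes independence for slabs at level-distance at least three, giving $3\times 3=9$ pages rather than $2\times 3$. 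Your parity claim for the binding-edge cases is exactly the assertion the paper declines to make, and you give no argument that would close it.

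The second, related problem is the arithmetic: your scheme never actually produces $34$. Doubling the two-level crossing budget of $13=1+0+2+3+7$ pages and adding the five pages of $G_P$ gives $31$, and the discrepancy is hidden in the unexplained ``handful of extra pages to mediate the level shared by consecutive slabs'' --- a mediation that a genuine parity colouring would not even require, since consecutive slabs receive different colours. The paper's count is instead a per-case sum with \emph{different} multiplicities per crossing type: $5$ (for $G_P$) $+\,1$ (Case~C.\ref{case:chord_chord}) $+\,0$ (Cases~C.\ref{case:chord_binding}--C.\ref{case:chord_bridge}) $+\,5$ (Case~C.\ref{case:forward_forward}) $+\,9$ (Case~C.\ref{case:forward_back}) $+\,2\times 7=14$ (Case~C.\ref{case:hop_binding}), totalling exactly $34$. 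To repair your proof you would either have to carry out the per-case confinement analysis sketched above (in which case you should state and prove the region-contiguity lemma correctly and check the leader-incident edges), or adopt the paper's case-specific multiplicities; as written, the counting step is both unjustified and numerically inconsistent with the claimed bound.
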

\begin{proof}
We first embed in $5$ pages the underlying planar structure $G_P$ of
$G$ using the algorithm of Yannakakis~\cite{Yan89}. This
implies that all vertices of a block of level $i$, except possibly
for its leader, are between two consecutive vertices of level $i-1$,
$i=1, \ldots, \lambda$.  So, for outer crossing chords that are
involved in crossing with level edges of $G_P$
(Case~C.\ref{case:chord_chord}), one universal page (denoted by
$up_c$ in Lemma~\ref{lem:lem_crossing_inner_chord}.i) suffices,
since such chords are not incident to block-leaders.

Next, we consider the outer crossing chords that are involved in
crossings with binding edges or bridge-blocks of $G$ (Cases
C.\ref{case:chord_binding} and C.\ref{case:chord_bridge}).
Recall that such a chord $c_{i,j}=(v_i,v_j)$ of a block $B$ is on
the same page as the boundary and the non-crossing chords of $B$.
The path $P[v_i\rightarrow v_j]$ on the boundary of $B$ joins the
endpoints of the crossing chord. Hence, if another edge of the same
page crosses with $c_{i,j}$, it must also cross with an edge of $B$;
a contradiction. Therefore, such chords do not require additional
pages.

For binding edges of Case C.\ref{case:forward_forward}, $5$ pages in
total suffice; one page for each page of $G_P$. For binding edges of
Case C.\ref{case:forward_back}, however, we need a different
argument: Since a binding edge between levels $L_{i+2}$ and
$L_{i+1}$ can not cross with a binding edge between levels $L_{i-1}$
and $L_{i-2}$, $i = 2,\ldots,\lambda-2$, it follows that binding
edges that bridge pairs of levels at distance at least $3$ are
independent. So, for binding edges of Case
C.\ref{case:forward_back} we need a total of $3*3=9$ pages.

Similarly, all blocks of level $i+1$ that are in the interior of a
certain block of level $i$ are always between two consecutive
vertices of level $i-1$, $i=1, 2, \ldots, \lambda-1$. Hence,
$2$-hops that are by at least two levels apart in the peeling order
are independent, which implies that for $2$-hops we need a total of
$2*7=14$ pages (Case~C.\ref{case:hop_binding}). Summing up we need
$5+1+5+9+14=34$ pages for a multi-level simple maximal $1$-planar
graph $G$.
\end{proof}

% =================================================================
\subsection{Coping with non-maximal $1$-planar graphs:}
\label{subsec:nonmaximal}
% =================================================================
In case of a planar topological graph, one can add edges to make it
maximal planar (and simultaneously preserve simplicity).
Unfortunately, this is not always possible in the case of $1$-planar
topological graphs (without loosing simplicity), as the produced
graph may contain multiedges. We can assume, however, that all
multi-edges are crossing-free, that is, they belong to the
underlying planar structure. Indeed, if a multi-edge contains an
edge that is involved in a crossing, then this particular edge can
be safely removed from the graph (as it can be ``replaced'' by any
of the corresponding crossing-free edges that were added during the
triangulation at the beginning of the algorithm). In the following,
we describe how to cope with a non-simple maximal $1$-planar graph
$G$.

Let $(v,w)$ be a double edge of $G$. Denote by $G_{in}[(v,w)]$ the
so-called \emph{interior subgraph} of $G$ w.r.t.
$(v,w)$ bounded by the double edge $(v,w)$ in $\Gamma(G)$. By
$G_{ext}[(v,w)]$ we denote the so-called \emph{exterior subgraph} of
$G$ w.r.t. $(v,w)$, derived from $G$ by substituting $G_{in}[(v,w)]$
by a single edge; see Fig.~\ref{fig:multiedges}. Clearly,
$G_{ext}[(v,w)]$ stays maximal $1$-planar and simultaneously has
fewer multiedges than $G$. So, it can be recursively embedded. The
base of the recursion corresponds to a simple maximal $1$-planar
graph, that is embedded according to Lemma~\ref{thm:simplemaximal}.

On the other hand, however, we can not assure that the interior
subgraph has fewer multiedges than $G$. Our aim is to modify it
appropriately, so as to reduce the number of its multiedges by one.
To do so, we will ``remove'' the multiedge $(v,w)$ that defines the
boundary of $G_{in}[(v,w)]$, so as to be able to recursively embed
it (again we seek to employ Lemma~\ref{thm:simplemaximal} in the
base of the recursion). Let $e_i(v)$ ($e_i(w)$, resp.) be the $i$-th
edge incident to vertex $v$ ($w$, resp.) in clockwise direction that
is strictly between the two edges that form the double edge $(v,w)$.
We replace vertex $v$ ($w$, resp.) by a path of $d(v)$ ($d(w)$,
resp.) vertices, say $v_1,v_2,\dots,v_{d(v)}$
($w_1,w_2,\dots,w_{d(w)}$, resp.), such that vertex $v_i$ ($w_i$,
resp.) is the endpoint of edge $e_i(v)$ ($e_i(w)$, resp.). Let
$\overline{G}_{in}[(v,w)]$ be the implied graph, which can be
augmented to internally maximal $1$-planar and simultaneously has
fewer multiedges than $G_{in}[(v,w)]$. Since
$\overline{G}_{in}[(v,w)]$ has no crossings incident to its
unbounded face, it can be embedded recursively.

\begin{figure}[t]
   \centering
   \begin{minipage}[b]{.32\textwidth}
     \centering
     \subfloat[\label{fig:multiedges_1}{$G_{in}[(v,w)]$}] 
     {\includegraphics[width=\textwidth,page=1]{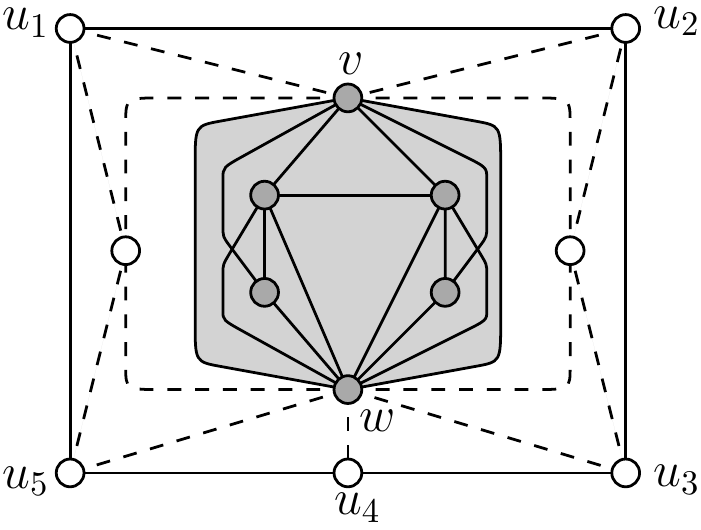}}
   \end{minipage}
   \hfill
   \begin{minipage}[b]{.32\textwidth}
     \centering
     \subfloat[\label{fig:multiedges_2}{$G_{ext}[(v,w)]$}]
     {\includegraphics[width=\textwidth,page=2]{figures/multiedges}}
   \end{minipage}
   \hfill
   \begin{minipage}[b]{.32\textwidth}
     \centering
     \subfloat[\label{fig:multiedges_2}{$\overline{G}_{in}[(v,w)]$}] 
     {\includegraphics[width=\textwidth,page=3]{figures/multiedges}}
   \end{minipage}
   \caption{Illustration of the decomposition in case of multiedges.}
   \label{fig:multiedges}
\end{figure}

It remains to describe how to plug the embedding of
$\overline{G}_{in}[(v,w)]$ to the embedding of $G_{ext}[(v,w)]$.
Suppose that $(v,w)$ of $G_{ext}[(v,w)]$ is on page $p$. Clearly,
$p$ is one of the pages used to embed the planar structure of
$G_{ext}[(v,w)]$, since $(v,w)$ is not involved in crossings in
$G_{ext}[(v,w)]$. We assume w.l.o.g. that the boundary of
$\overline{G}_{in}[(v,w)]$ is also on page $p$. Since $(v,w)$ is
already present in the embedding of $G_{ext}[(v,w)]$, it suffices to
plug in the embedding of $G_{ext}[(v,w)]$ only the interior of
$\overline{G}_{in}[(v,w)]$, which is the same as the one of
$G_{in}[(v,w)]$.

Suppose w.l.o.g. that in the embedding of $G_{ext}[(v,w)]$ vertex
$v$ appears before $w$ along the spine from left to right. Then, we
place the interior subgraph of $\overline{G}_{in}[(v,w)]$ to the
right of $v$. The edges connecting the interior of
$\overline{G}_{in}[(v,w)]$ with $v$ are assigned to page $p$, while
the ones connecting it to $w$ on page $p'$, which is a new page. In
such a way, we need $5$ more pages, one for each page of the planar
structure.

Next, we prove that no crossings are introduced. By restricting the
boundary of $\overline{G}_{in}[(v,w)]$ on page $p$, all edges
incident to $v$ towards $\overline{G}_{in}[(v,w)]$ become back
edges of $\overline{G}_{in}[(v,w)]$. So, edges that join $v$
with vertices in the interior of $\overline{G}_{in}[(v,w)]$ do not cross
with other edges in the interior of $\overline{G}_{in}[(v,w)]$.
Since $\overline{G}_{in}[(v,w)]$ is placed next to $v$, edges
incident to $v$ do not cross with edges of $G_{ext}[(v,w)]$ on page
$p$. Similarly, we can prove that edges incident to $w$ towards the
interior of $\overline{G}_{in}[(v,w)]$ do not cross with other edges
in the interior of $\overline{G}_{in}[(v,w)]$ on page $p'$.
We claim that potential crossings posed by different double edges,
say $(v,w)$ and $(v',w')$ with $v$ to the left of $v'$, do not
occur. In the case where $v \neq v'$ such a crossing would imply
that $(v,w)$ and $(v',w')$ cross; a contradiction.
If on the other hand $v=v'$ (and w.l.o.g. $w$ to the left of $w'$),
then it suffices to place $\overline{G}_{in}[(v,w)]$ before
$\overline{G}_{in}[(v',w')]$.

% =================================================================
\subsection{Coping with crossings on graph's unbounded face:}
% =================================================================
If there exist crossings incident to the unbounded face of $G$,
then, when we augment $G$ to maximal $1$-planar, we must also
triangulate the unbounded face of the planarized graph implied by
replacing all crossings of $G$ with crossing vertices (recall the
first step of our algorithm). This procedure may lead to a maximal
$1$-planar graph whose unbounded face is a double edge, say $(v,w)$.
In this case, $G_{ext}[(v,w)]$ consists of two vertices and a single
edge between them; $\overline{G}_{in}[(v,w)]$ is treated as
described above. So, we are now ready to state the main result of
our work.

\begin{theorem}
Any $1$-planar graph admits a book embedding in a book of $39$
pages.
\end{theorem}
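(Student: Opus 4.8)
The plan is to assemble the ingredients of the preceding subsections, reducing an arbitrary $1$-planar graph to the simple internally maximal case already settled by Lemma~\ref{thm:simplemaximal}. By~\cite{BK79} the page number of a graph equals the maximum page number over its biconnected components, so I would first take $G$ to be biconnected. Using the supplied drawing $\Gamma(G)$, I would then augment $G$ to a maximal $1$-planar graph $G'$ as in Section~\ref{sec:1planar}: replace each crossing by a crossing vertex, internally triangulate the planarization so that no new edge touches a crossing vertex, and, if crossings are incident to the unbounded face, triangulate its outer region as well. Since $G\subseteq G'$ (the crossing vertices are only temporary scaffolding and are removed afterwards), any book embedding of $G'$ restricts to one of $G$ on no more pages; hence it suffices to bound the book thickness of $G'$.

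Next I would discharge the two standing hypotheses of Lemma~\ref{thm:simplemaximal}, namely that the graph be simple and that no crossing be incident to the unbounded face. Triangulating the outer region in the augmentation step may leave a double edge $(v,w)$ bounding the unbounded face; as observed in the final subsection, this is exactly the degenerate multiedge situation in which $G_{ext}[(v,w)]$ collapses to a single edge and all of $G'$ is embedded as the interior subgraph $\overline{G}_{in}[(v,w)]$. Thus crossings on the unbounded face reduce to the non-simple case, and it remains only to cope with multiedges.

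For multiedges I would run the recursion of Section~\ref{subsec:nonmaximal}: given a double edge $(v,w)$, recurse on $G_{ext}[(v,w)]$ and on $\overline{G}_{in}[(v,w)]$, the latter obtained by splitting $v$ and $w$ into paths so that the bounding double edge disappears. Both derived graphs carry strictly fewer multiedges than $G'$ (which guarantees termination), remain internally maximal $1$-planar with no crossings on their unbounded face, and leave the interior of $\overline{G}_{in}[(v,w)]$ identical to that of $G_{in}[(v,w)]$. The base case is a simple internally maximal $1$-planar graph, embedded on $34$ pages by Lemma~\ref{thm:simplemaximal}. The gluing step places the interior of $\overline{G}_{in}[(v,w)]$ immediately to the right of $v$, routing its edges to $v$ on the page $p$ already carrying $(v,w)$ and its edges to $w$ on a single new page $p'$; across the five pages of the planar structure this costs $5$ additional pages.

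The crux, which I expect to be the main obstacle, is arguing that these $5$ extra pages serve the \emph{entire} recursion rather than accumulating a fresh block of pages per level. This reduces to showing that the gluing edges of distinct double edges never conflict: if $(v,w)$ and $(v',w')$ satisfy $v\neq v'$, then a crossing of their gluing edges would force $(v,w)$ and $(v',w')$ themselves to cross, which is impossible since both lie in the planar structure; and if $v=v'$, it suffices to nest the interior subgraphs along the spine in a consistent left-to-right order, as in Section~\ref{subsec:nonmaximal}. Hence the $5$ gluing pages are global, and combining them with the $34$ pages of the base case yields a book embedding of $G'$, and therefore of $G$, on $34+5=39$ pages.
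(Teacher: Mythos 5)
Your proposal is correct and follows essentially the same route as the paper: reduce to the biconnected, internally maximal case, apply Lemma~\ref{thm:simplemaximal} for the $34$-page base, handle multiedges via the $G_{ext}[(v,w)]$ / $\overline{G}_{in}[(v,w)]$ recursion with $5$ global gluing pages (one per page of the planar structure), and fold crossings on the unbounded face into the degenerate double-edge case, giving $34+5=39$ pages. You also correctly isolate the one point the paper itself must argue --- that the gluing pages are shared across all double edges rather than multiplying with the recursion depth --- and reproduce the paper's reasoning for it.
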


% ============================================================================
\section{Conclusions and Open Problems} 
% ============================================================================
In this paper, we proved that $1$-planar graphs can be embedded in
books with constant number of pages, improving the previous known
bound, which was $O(\sqrt{n})$ for graphs with $n$ vertices. To keep
the description simple, we decided not to ``slightly'' reduce the
page number by more complicated arguments. So, a reasonable question
is whether the number of pages can be further reduced, e.g., to less
than $20$ pages. This is question of importance even for
\emph{optimal $1$-planar graphs}, i.e., graphs with $n$ vertices and
exactly $4n-8$ edges. Other classes of non-planar graphs that fit in
books with constant number of pages are also of interest.

\paragraph{Acknowledgement:} We thank S.~Kobourov and
J.~Toenniskoetter for useful discussions. We also thank David R.
Wood for pointing out an error in the first version of this paper.

\bibliographystyle{plain}
\bibliography{book}
\newpage
\appendix

\end{document}